\documentclass[11pt]{article}
\usepackage[utf8]{inputenc}
\usepackage[margin=1in]{geometry}
\usepackage[shortlabels]{enumitem}
\usepackage{bm}
\usepackage{amsfonts}
\usepackage{amsmath}
\usepackage{xcolor}
\usepackage{graphicx}
\usepackage{amsthm}
\usepackage{qtree}
\usepackage{tree-dvips}
\usepackage{float}
\usepackage{authblk}
\usepackage{hyperref}
\newtheorem{theorem}{Theorem}
\newtheorem*{theorem*}{Theorem}
\newtheorem{lemma}{Lemma}
\newtheorem*{lemma*}{Lemma}
\newtheorem{corollary}{Corollary}

\newtheorem{observation}{Observation}
\newtheorem{definition}{Definition}

\usepackage[numbers,sort&compress]{natbib}

\newcommand{\varbound}{\frac{2^n}{2n}}
\newcommand{\ttbound}{\frac{N}{2\log N}}
\def\cl{\bf}
\author{Oliver Korten \thanks{Department of Computer Science, Columbia University. Email: \href{mailto:oliver.korten@columbia.edu}{oliver.korten@columbia.edu}}}
\date{May 2021}
\title{The Hardest Explicit Construction}
\begin{document}
\maketitle
 
\begin{abstract}
We investigate the complexity of explicit construction problems, where the goal is to produce a particular object of size $n$ possessing some pseudorandom property in time polynomial in $n$. We give overwhelming evidence that {\cl APEPP}, defined originally by Kleinberg et al. \cite{total}, is the natural complexity class associated with explicit constructions of objects whose existence follows from the probabilistic method, by placing a variety of such construction problems in this class.
We then demonstrate that a result of Je\v{r}\'{a}bek \cite{Jerabek} on provability in Bounded Arithmetic, when reinterpreted as a reduction between search problems, shows that constructing a truth table of high circuit complexity is complete for {\cl APEPP} under ${\cl P}^{\cl NP}$ reductions. This illustrates that Shannon's classical proof of the existence of hard boolean functions is in fact a \emph{universal} probabilistic existence argument: derandomizing his proof implies a generic derandomization of the probabilistic method.
As a corollary, we prove that  ${\cl EXP}^{\cl NP}$ contains a language of circuit complexity $2^{n^{\Omega(1)}}$ if and only if it contains a language of circuit complexity $\varbound$.
Finally, for several of the problems shown to lie in {\cl APEPP}, we demonstrate direct polynomial time reductions to the explicit construction of hard truth tables.
\end{abstract}
\newpage
\section{Introduction}\label{sec:intro}
Explicit construction --- the task of replacing a nonconstructive argument for the existence of a certain type of object with a deterministic algorithm that outputs one --- is an important genre of computational problems, whose history is intertwined with the most fundamental questions in complexity and derandomization. The primary method of existence argument for these problems is to show that a random object has a desired property with high probability. This technique, initiated by Erd{\"o}s \cite{Erdos} and since dubbed the ``probabilistic method,'' has proven immensely useful across disparate subfields of combinatorics and computer science. Indeed, the probabilistic method is currently our sole source of certainty that there exist hard Boolean functions, pseudorandom number generators, rigid matrices, and optimal randomness extractors, 
% Ramsey graphs, optimal codes, and certain hard problems in Communication Complexity and the bit-probe model in data structures,
among a variety of other combinatorial objects. 
% Even expanders, a rare success story in the explicit construction saga, are riled with gaps between what is possible and what can be realized by explicit methods \cite{Alon}.

Explicit construction problems can be phrased, in complexity terms, as sparse search problems: given the input $1^n$, output some object of size $n$ satisfying a certain property. In the interesting case, such problems are also \emph{total}: we have a reason to believe that for all $n$, at least one object with this property exists. In contrast to the fundamental importance of explicit constructions, there has been surprisingly little work attempting to systematically study their complexity. This gap was pointed out previously by Santhanam \cite{explicit-constructions}, who studied the complexity of explicit construction problems from the following perspective: we have some property $\Pi$ which is \emph{promised} to hold for almost all strings of length $n$. Based on the complexity of testing the property $\Pi$, what can be said about the complexity of producing an $n$-bit string with property $\Pi$? Though some interesting reductions can be shown in this framework, Santhanam notes that this approach does not seem to yield robust complexity classes with \emph{complete} explicit construction problems.

This issue is familiar in the study of the class {\cl TFNP}: when we have only a \emph{promise} that a search problem is total, it is seemingly impossible to reduce it to a problem of similar complexity which has a \emph{syntactic} guarantee of totality. This led to the study, initiated by Papadimitriou \cite{christos}, of characterizing total search problems based on the combinatorial lemma which guarantees the existence of a solution. In recent work of Kleinberg et al. \cite{total}, this method was used to analyse several total search problems in the polynomial hierarchy beyond {\cl NP}. One class they define is {\cl APEPP}, which consists of the ${\cl \Sigma_2^P}$ total search problems whose totality follows from the ``Abundant Empty Pigeonhole Principle,'' which tells us that a function $f: \{0,1\}^n \rightarrow \{0,1\}^{n+1}$ cannot be surjective. In this paper, we show that {\cl APEPP} is the natural \emph{syntactic} class into which we can place a vast range of explicit construction problems where a solution is guaranteed by the probabilistic method.

Given that {\cl APEPP} is a syntactic class, it is natural to ask whether some explicit construction problem is complete for it. As it turns out, the answer is positive: constructing a truth table of length $2^n$ with circuit complexity $2^{\epsilon n}$ is in fact \emph{complete} for {\cl APEPP} under ${\cl P}^{\cl NP}$ reductions. Perhaps surprisingly, this important fact had been known for many years in the universe of Bounded Arithmetic, essentially proved in Emil Je\v{r}\'{a}bek's PhD thesis in 2004. Here Je\v{r}\'{a}bek shows that the theorem asserting the empty pigeonhole principle is \emph{equivalent}, in a particular theory of Bounded Arithmetic, to the theorem asserting the existence of hard boolean functions. Although his result is phrased in terms of logical expressibility, we show that when translated to language of search problems his techniques give a ${\cl P}^{\cl NP}$ reduction from any problem in {\cl APEPP} to the problem of constructing a hard truth table. In Section~\ref{sec:completeness} we give a self-contained proof of this, and generalize the reduction to hold for arbitrary classes of circuits equipped with oracle gates. Combined with our results placing a wide range of explicit construction problems in {\cl APEPP}, this shows that in a concrete sense, constructing a hard truth table is a \emph{universal explicit construction problem}. We give further credence to this claim by showing in addition that several well known explicit construction problems in {\cl APEPP,} including the explicit construction of rigid matrices, can be directly reduced to the problem of constructing a hard truth table via polynomial time reductions (as opposed to ${\cl P}^{\cl NP}$ reductions).

\subsection{Our Contributions}
We investigate the complexity class {\cl APEPP} introduced in \cite{total}, defined by the following complete problem {\sc Empty}: given a circuit $C: \{0,1\}^n \rightarrow \{0,1\}^m$ with $m > n$, find an $m$-bit string outside the range of $C$. In Section~\ref{sec:constructions-in-apepp} we give overwhelming evidence that {\cl APEPP} is the natural class associated with explicit constructions from the probabilistic method, by placing a wide range of well studied problems in this class. In particular, we show that the explicit construction problems associated with the following objects lie in {\cl APEPP}:
\begin{itemize}
    \item Truth tables of length $2^n$ with circuit complexity $\varbound$ (Theorem~\ref{thm:truth-tables-in-apepp})
    \item Pseudorandom generators (Theorem~\ref{thm:prg})
    \item Strongly explicit two-source randomness extractors with 1 bit output for min-entropy $\log n + O(\log(1/\epsilon))$, and thus strongly explicit $O(\log n)$-Ramsey graphs in both the bipartite and non-bipartite case (Theorem~\ref{thm:ext})
    % \item Small-bias probability spaces of size meeting the best known non-constructive upper bound (Theorem~\ref{thm:bias})
    \item Matrices with high rigidity over any finite field (Theorem~\ref{thm:rigid-in-apepp})
    \item Strings of time-bounded Kolmogorov complexity $n - 1$ relative to any fixed polynomial time bound and any fixed Turing machine (Theorem~\ref{thm:kt-random})
    \item Communication problems outside of ${\cl PSPACE^{CC}}$ (Theorem~\ref{thm:space-circuit})
    \item Hard data structure problems in the-bit probe model (Theorem~\ref{thm:probe-circuit})
\end{itemize}

Since the work of Impagliazzo and Wigderson \cite{IW} implies that constructing pseudorandom generators reduces to constructing hard truth tables, {\cl APEPP} constructions of PRGs follow immediately from {\cl APEPP} constructions of hard truth tables. However, we provide a self-contained and simple proof that PRG construction can be reduced to {\sc Empty}, without requiring the more involved techniques of Nisan, Wigderson, and Impagliazzo \cite{NW}\cite{IW}. Together with the result in the following section that constructing hard truth tables is complete for {\cl APEPP} under ${\cl P}^{\cl NP}$ reductions, this gives an alternative and significantly simplified proof that worst-case-hard truth tables can be used to derandomize algorithms (although it proves a weaker result, that this derandomization can be accomplished with an {\cl NP} oracle).

In Section~\ref{sec:completeness} we show that constructing a truth table of length $2^n$ with circuit complexity $2^{\epsilon n}$ is complete for {\cl APEPP} under ${\cl P}^{\cl NP}$ reductions (for any fixed $0 < \epsilon < 1)$. As discussed earlier, the core argument behind this result was proven by Je\v{r}\'{a}bek in \cite{Jerabek}, where he shows that the theorem asserting the existence of hard boolean functions is equivalent to the theorem asserting the empty pigeonhole principle in a certain fragment of Bounded Arithmetic. We show that, when viewed through the lens of explicit construction problems, this technique yields a reduction from {\sc Empty} to the explicit construction of hard truth tables. We also generalize this reduction to arbitrary oracle circuits, which allows us to prove the following more general statement: constructing a truth table which requires large $\Sigma^{\cl P}_i$-oracle circuits is complete for {\cl APEPP}$_{\Sigma^{\cl P}_i}$ under $\Delta^{\cl P}_{i+2}$ reductions (the complete problem for {\cl APEPP}$_{\Sigma^{\cl P}_i}$ is the variant of {\sc Empty} where the input circuit can have $\Sigma^{\cl P}_i$-oracle gates). By recasting and generalizing Je\v{r}\'{a}bek's theorem in the context of explicit construction problems, we are able to derive several novel results. First and foremost, we conclude that there is a ${\cl P}^{\cl NP}$ construction of hard truth tables \emph{if and only if} there is a ${\cl P}^{\cl NP}$ algorithm for every problem in {\cl APEPP}, and so in particular such a construction of hard truth tables would automatically imply ${\cl P}^{\cl NP}$ constructions for each of the well-studied problems discussed in Section~\ref{sec:constructions-in-apepp}. This tells us that constructing hard truth tables is, in a definite sense, \emph{a universal explicit construction problem}. Since the existence of a ${\cl P}^{\cl NP}$ construction of hard truth tables is equivalent to the existence of a language in ${\cl E}^{\cl NP}$ with circuit complexity $2^{\Omega(n)}$, this completeness result actually gives an \emph{exact algorithmic characterization} of proving $2^{\Omega(n)}$ circuit lower bounds for ${\cl E}^{\cl NP}$:

\begin{theorem*}[Theorem~\ref{thm:LB-algo-equivalence}]
There is a ${\cl P}^{\cl NP}$ algorithm for {\sc Empty} if and only if ${\cl E}^{\cl NP}$ contains a language of circuit complexity $2^{\Omega(n)}$.
\end{theorem*}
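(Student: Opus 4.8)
The plan is to derive the equivalence from the completeness theorem of Section~\ref{sec:completeness} together with a routine ``search-to-decision'' argument. For a constant $\epsilon\in(0,1)$, let {\sc HardTT} (with parameter $\epsilon$) be the search problem that, on input $1^{2^n}$, asks for the truth table of an $n$-variable Boolean function of circuit complexity at least $2^{\epsilon n}$. By Theorem~\ref{thm:truth-tables-in-apepp}, {\sc HardTT} lies in {\cl APEPP} for every such $\epsilon$, so it reduces to {\sc Empty}; by the completeness theorem, {\sc Empty} reduces to {\sc HardTT} under a ${\cl P}^{\cl NP}$ reduction, again for every such $\epsilon$. Since ${\cl FP}^{\cl NP}$ is closed under composition --- feeding a polynomial-time {\cl NP}-oracle machine another such machine as a subroutine keeps all oracle queries of polynomial length and the running time polynomial --- {\sc Empty} admits a ${\cl P}^{\cl NP}$ algorithm if and only if {\sc HardTT} does, for any (equivalently, every) fixed $\epsilon$. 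It therefore suffices to show that {\sc HardTT} has a ${\cl P}^{\cl NP}$ algorithm for some fixed $\epsilon\in(0,1)$ if and only if ${\cl E}^{\cl NP}$ contains a language of circuit complexity $2^{\Omega(n)}$.

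For the first direction, suppose $M$ is a ${\cl P}^{\cl NP}$ algorithm solving {\sc HardTT} with $\epsilon=1/2$. Let $L$ be the language whose restriction to inputs of length $n$ has truth table $M(1^{2^n})$. To decide an input $x$ of length $n$: set $N=2^n$, simulate the $\mathrm{poly}(N)$-step run of $M(1^N)$ while forwarding its {\cl NP}-oracle queries --- each of length $\mathrm{poly}(N)=2^{O(n)}$ --- to an {\cl NP} oracle, and output the bit of the result indexed by $x$. This runs in time $\mathrm{poly}(2^n)=2^{O(n)}$ with an {\cl NP} oracle, so $L\in{\cl E}^{\cl NP}$; and since $M$ is a correct solver, $M(1^{2^n})$ --- equivalently, $L$ restricted to length $n$ --- has circuit complexity at least $2^{n/2}$ for all large $n$. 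Hence $L$ has circuit complexity $2^{\Omega(n)}$.

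For the converse, let $L\in{\cl E}^{\cl NP}$ have circuit complexity at least $2^{\delta n}$ for all large $n$, and set $\epsilon=\delta/2$. The following is a ${\cl P}^{\cl NP}$ algorithm for {\sc HardTT}: on input $1^{2^n}$, run the ${\cl E}^{\cl NP}$ machine for $L$ on each of the $2^n$ strings of length $n$ --- each run being a $2^{O(n)}=\mathrm{poly}(2^n)$-time computation whose {\cl NP}-oracle queries have length $2^{O(n)}$ --- and output the concatenation of the answers, which is the truth table of $L$ restricted to length $n$. The procedure runs in polynomial time in the input length $2^n$ with an {\cl NP} oracle, and its output has circuit complexity at least $2^{\delta n}\ge 2^{\epsilon n}$; the finitely many lengths where the hardness bound might fail can be hard-wired. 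By the equivalence from the first paragraph, {\sc Empty} then has a ${\cl P}^{\cl NP}$ algorithm.

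Once the completeness theorem is in hand there is no deep obstacle; the points that need care are (i) the composition claim --- that a ${\cl P}^{\cl NP}$ Turing reduction composed with a ${\cl P}^{\cl NP}$ solver is again a polynomial-time {\cl NP}-oracle procedure, which hinges on all oracle-query lengths staying polynomial --- and (ii) the bookkeeping of the hardness exponent: the completeness reduction works for an arbitrary fixed $\epsilon<1$, so any single $\delta>0$ giving an almost-everywhere $2^{\delta n}$ lower bound for ${\cl E}^{\cl NP}$ is exactly enough, but the bound must hold at \emph{every} large length so that the ``evaluate $L$'' solver is correct on every instance the reduction produces. (An only infinitely-often lower bound would, by the same argument, yield only an infinitely-often ${\cl P}^{\cl NP}$ algorithm for {\sc Empty}.)
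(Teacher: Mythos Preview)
Your argument is correct and follows essentially the same route as the paper's own proof: both directions go through the completeness of {\sc $\epsilon$-Hard} (your {\sc HardTT}) for {\cl APEPP} under ${\cl P}^{\cl NP}$ reductions, converting a ${\cl P}^{\cl NP}$ solver for hard truth tables into an ${\cl E}^{\cl NP}$ language of exponential circuit complexity by ``use the output as my truth table,'' and converting a hard ${\cl E}^{\cl NP}$ language into a ${\cl P}^{\cl NP}$ solver by ``print the truth table of $L$.''

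The one technical detail you gloss over that the paper handles explicitly is that the completeness reduction (Theorem~\ref{thm:epsilon-hard-reduction}) calls the hard-truth-table oracle on inputs $1^m$ with $m=n2^k$, which need not be a power of two, whereas your {\sc HardTT} is defined only on inputs $1^{2^n}$. The paper's proof addresses this by computing $L$'s truth table on $\lfloor\log N\rfloor$-bit inputs and padding with zeros to length $N$, then arguing a circuit of size $N^{\epsilon/2}$ for the padded table yields one of size $2^{\epsilon\lfloor\log N\rfloor}$ for the original---hence the halving of the exponent, which you also do but without saying why. This is easily patched and does not affect the substance of your argument.
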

As a corollary we are able to derive the following:
\begin{theorem*}[Corollaries \ref{cor:e-hard} and \ref{cor:exp-hard}]
${\cl E}^{\cl NP}$ (resp. ${\cl EXP}^{\cl NP}$)  contains a language of circuit complexity $2^{\Omega(n)}$ (resp. $2^{n^{\Omega(1)}}$) \emph{if and only if} ${\cl E}^{\cl NP}$ (resp. ${\cl EXP}^{\cl NP}$) contains a language of circuit complexity $\varbound$.
\end{theorem*}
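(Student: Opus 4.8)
The plan is to derive both corollaries from Theorem~\ref{thm:LB-algo-equivalence}, using a padding/scaling argument to bridge between the two circuit complexity thresholds $2^{\Omega(n)}$ (resp. $2^{n^{\Omega(1)}}$) and $\varbound$. Let me focus on the ${\cl E}^{\cl NP}$ case first. One direction is trivial: a language of circuit complexity $\varbound$ has circuit complexity $2^{\Omega(n)}$, so if ${\cl E}^{\cl NP}$ contains the former it contains the latter. For the interesting direction, suppose ${\cl E}^{\cl NP}$ contains a language of circuit complexity $2^{\Omega(n)}$. By Theorem~\ref{thm:LB-algo-equivalence} this gives a ${\cl P}^{\cl NP}$ algorithm for {\sc Empty}. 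Now I would invoke Theorem~\ref{thm:truth-tables-in-apepp}, which places the problem of constructing truth tables of circuit complexity $\varbound$ inside {\cl APEPP}: composing the ${\cl P}^{\cl NP}$ algorithm for {\sc Empty} with the reduction witnessing membership in {\cl APEPP} yields a ${\cl P}^{\cl NP}$ algorithm that, on input $1^{2^n}$, outputs a truth table of length $2^n$ with circuit complexity $\varbound$. Running this algorithm defines a language: on input $x$ of length $n$, accept iff the $x$-th bit of the output of the construction algorithm on $1^{2^n}$ is $1$. The construction algorithm runs in time polynomial in $2^n$ with an {\cl NP} oracle, so this language lies in ${\cl E}^{\cl NP}$, and by construction it has circuit complexity exactly $\varbound$.

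For the ${\cl EXP}^{\cl NP}$ case with threshold $2^{n^{\Omega(1)}}$, the argument is the same but preceded by a standard padding step. If ${\cl EXP}^{\cl NP}$ contains a language $L$ of circuit complexity $2^{n^{\delta}}$ for some $\delta > 0$, then padding inputs out to length $n^{\delta}$ (so that an input of padded length $m$ encodes an original input of length $m^{1/\delta}$, on which $L$ runs in time $2^{m^{1/\delta \cdot c}} = 2^{\mathrm{poly}(m)}$, still within ${\cl EXP}^{\cl NP}$ as a function of $m$) produces a language in ${\cl EXP}^{\cl NP}$ of circuit complexity $2^{\Omega(m)}$. The subtlety is that Theorem~\ref{thm:LB-algo-equivalence} is stated for ${\cl E}^{\cl NP}$, not ${\cl EXP}^{\cl NP}$; so I would want the analogous equivalence at the ${\cl EXP}^{\cl NP}$ level, namely that a quasipolynomial-time {\cl NP}-oracle algorithm for {\sc Empty} is equivalent to ${\cl EXP}^{\cl NP}$ containing a language of circuit complexity $2^{n^{\Omega(1)}}$ — this should follow by re-running the proof of Theorem~\ref{thm:LB-algo-equivalence} with the time bound relaxed from polynomial to quasipolynomial, since all the reductions involved are quasipolynomially robust. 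Granting this, the same composition with Theorem~\ref{thm:truth-tables-in-apepp} (whose {\cl APEPP} membership reduction is itself only polynomial-time, hence certainly quasipolynomial) yields an ${\cl EXP}^{\cl NP}$ language of circuit complexity $\varbound$.

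The main obstacle I anticipate is bookkeeping the exact resource bounds through the two compositions — in particular, verifying that the {\cl APEPP}-membership reduction for the hard-truth-table problem produces, on input $1^{2^n}$, an instance of {\sc Empty} of size polynomial in $2^n$ whose solutions decode back to truth tables of circuit complexity genuinely $\varbound$ (and not merely $2^{\Omega(n)}$), and confirming that the {\cl NP} oracle calls made by the {\sc Empty}-solver compose cleanly into a single ${\cl E}^{\cl NP}$ (resp. ${\cl EXP}^{\cl NP}$) computation rather than requiring a ${\cl \Sigma_2}$-type oracle. For the latter I would appeal to the fact that a ${\cl P}^{\cl NP}$ machine composed with a ${\cl P}^{\cl NP}$ oracle-solver is still a ${\cl P}^{\cl NP}$ machine (the oracle queries can be interleaved), so no blowup in the oracle hierarchy occurs. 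Once these resource accountings are checked, both corollaries follow immediately.
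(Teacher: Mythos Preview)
Your treatment of Corollary~\ref{cor:e-hard} (the ${\cl E}^{\cl NP}$ case) is correct and matches the paper's proof exactly: apply Theorem~\ref{thm:LB-algo-equivalence} to get a ${\cl P}^{\cl NP}$ algorithm for {\sc Empty}, compose with Theorem~\ref{thm:truth-tables-in-apepp} to get a ${\cl P}^{\cl NP}$ algorithm for {\sc Hard Truth Table}, and read off a language in ${\cl E}^{\cl NP}$ of circuit complexity $\varbound$. Your observation that ${\cl P}^{\cl NP}$ composed with ${\cl P}^{\cl NP}$ stays in ${\cl P}^{\cl NP}$ is exactly the bookkeeping needed.

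For Corollary~\ref{cor:exp-hard} (the ${\cl EXP}^{\cl NP}$ case), however, the padding step is broken. You propose to take $L \in {\cl EXP}^{\cl NP}$ of circuit complexity $2^{n^{\delta}}$ and produce $L'$ on inputs of length $m = n^{\delta}$ that ``encodes an original input of length $m^{1/\delta} = n$.'' But for $\delta < 1$ this asks an $m$-bit string to encode an $n$-bit string with $n > m$, which is impossible; there is no padding that raises the \emph{relative} circuit complexity from $2^{n^{\delta}}$ to $2^{\Omega(m)}$. (Padding in the other direction, to length $n^{1/\delta}$, keeps you in ${\cl EXP}^{\cl NP}$ but leaves the relative hardness at $2^{m^{\Omega(1)}}$, gaining nothing.) Indeed, if such padding worked, Corollary~\ref{cor:exp-hard} would collapse into Corollary~\ref{cor:e-hard}.

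Your fallback---re-running Theorem~\ref{thm:LB-algo-equivalence} at the quasipolynomial level---is the right instinct, and is what the paper does, but it is not merely a matter of relaxing time bounds. The assumption only yields, in quasipolynomial time with an {\cl NP} oracle, truth tables of length $N$ and hardness $2^{\log^{\delta} N}$, which is \emph{subpolynomial} in $N$; the reduction of Theorem~\ref{thm:epsilon-hard-reduction} as stated requires hardness $N^{\epsilon}$. The paper's fix is to go back into that reduction and change the depth of the GGM tree from $k = O(\log |C|)$ to $k = \log^{\lceil 1/\delta \rceil} |C|$, so that $C^*$ has quasipolynomial output length and every string in its range has circuit complexity at most $2^{\log^{\delta} m}$. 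Once you make that parameter change, the rest of your argument goes through.
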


Unpacking the proof of the above corollaries reveals an efficient algorithm to ``extract hardness'' from truth tables using an oracle for circuit minimization, a prospect previously considered in \cite{hardness-harder}:

\begin{theorem*}[Theorem \ref{thm:hardness-extractor}]
There is a polynomial time algorithm using a circuit minimization oracle (or more generally an {\cl NP} oracle) which, given a truth table $x$ of length $M$ and circuit complexity $s$, outputs a truth table $y$ of length $N = {\Omega}(\sqrt{\frac{s}{\log M}})$ and circuit complexity $\Omega(\frac{N}{\log N})$.
\end{theorem*}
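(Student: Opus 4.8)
The plan is to feed $x$ into the reduction from {\sc Empty} to hard truth tables of Section~\ref{sec:completeness}, but in the place where that reduction expects a \emph{maximally} hard truth table I will instead supply our only \emph{mildly} hard table $x$; because $x$ is long, the reduction will still certify that $x$ avoids the relevant range, and running it will spit out a maximally hard truth table on a suitably smaller number of bits. Concretely, let $m=\log M$, compute $s$ with the oracle, choose a power of two $N=\Theta(\sqrt{s/\log M})$, set $n=\log N$, and let $C_1$ be the (padded) ``universal evaluator'': on input a description of a Boolean circuit with $t'=\Theta(N/\log N)$ gates on $n$ inputs, $C_1$ outputs the length-$N$ truth table that circuit computes. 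Using a near-optimal universal circuit (Valiant's), $C_1$ has size $O(N^2)$, maps $\{0,1\}^{N/2}$ into $\{0,1\}^{N}$, and its non-outputs are exactly the length-$N$ truth tables of circuit complexity $>t'=\Omega(N/\log N)$; so it suffices to produce, in ${\cl P}^{\cl NP}$ from $x$, a string outside $\mathrm{range}(C_1)$.

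First I would form the tree $T_{C_1}\colon\{0,1\}^{N/2}\to\{0,1\}^{M}$ obtained by the usual ``GGM'' expansion of $C_1$: a complete binary tree of depth $k=1+m-n$ whose root carries a label $z\in\{0,1\}^{N/2}$, each internal node with label $u$ has children carrying the two halves of $C_1(u)$, and $T_{C_1}(z)$ is the concatenation of the $2^{k}=2M/N$ leaf labels. The point is that each $T_{C_1}(z)$, viewed as a length-$M$ truth table, is computed by a circuit of size $O(k\cdot|C_1|)=O(N^2\log M)$ — trace one root-to-leaf path, applying $C_1$ just $k$ times — so, choosing the constant in $N=\Theta(\sqrt{s/\log M})$ so that $O(N^2\log M)<s$, the complexity-$\ge s$ table $x$ is \emph{not} in $\mathrm{range}(T_{C_1})$. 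Equivalently, the root of the tree is ``not realizable'' with leaf values $x$, whereas each leaf trivially is; descending from the root, at each step moving to a child that is still not realizable, one reaches within $k=O(\log M)$ steps a node $v$ that is not realizable although both of its children are. Realizability is decided by a single ${\cl NP}$ query (``is there a sub-tree root label whose expansion matches this block of $x$?'' — the expansion touches at most $M/N$ copies of $C_1$, hence the query has size $\mathrm{poly}(M)$); recovering (by the same kind of query, bit by bit) working labels $u_L,u_R$ for the two children of $v$, the string $y:=(u_L,u_R)\in\{0,1\}^{N}$ must lie outside $\mathrm{range}(C_1)$, since a preimage of it would make $v$ realizable. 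Thus $y$ is a truth table of length $N=\Omega(\sqrt{s/\log M})$ and circuit complexity $>t'=\Omega(N/\log N)$, and every oracle call made was of the ``does a short witness circuit exist'' form, so a circuit-minimization oracle suffices.

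I expect the genuine difficulty to be quantitative, not structural: pushing the output length all the way up to $\Omega(\sqrt{s/\log M})$. This demands that the truth tables in $\mathrm{range}(T_{C_1})$ have circuit complexity only $O(N^2\log M)$, which in turn demands $|C_1|=O(N^2)$; the naive construction of $C_1$, simulating each of the $t'$ gates by a multiplexer over the earlier ones, costs on the order of $N^3$ and would only yield length $\Omega((s/\log M)^{1/3})$, so one must go through a near-linear-size universal circuit and check that with $t'=\Theta(N/\log N)$ its extra $\log t'$ factor is exactly cancelled by the $1/\log N$ hidden in $t'$. Everything else is bookkeeping: that $t'$ (or the input padding) can be chosen so $C_1$ really maps $\{0,1\}^{N/2}$ to $\{0,1\}^{N}$, that $m,n$ being integers makes the depth $k=1+m-n$ integral and $O(\log M)$, that the realizability and label-recovery queries remain polynomial in $M$, that the overall running time is $\mathrm{poly}(M)$ with one oracle, and that the trivial regime $s=O(\log M)$ (where $N=O(1)$) can be handled by hand.
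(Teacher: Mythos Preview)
Your plan matches the paper's almost exactly: build the universal evaluator $C_1:\{0,1\}^{N/2}\to\{0,1\}^N$ of size $O(N^2)$, expand it GGM-style to depth $k\approx\log M$, observe that everything in the range has circuit complexity $O(|C_1|k)=O(N^2\log M)$, pick $N=\Theta(\sqrt{s/\log M})$ so that $x$ (padded) must lie outside the range, and then back out a non-output of $C_1$ with the oracle. The paper even makes the same remark you do, that the $O(N^2)$ bound on $|C_1|$ is the bottleneck and that beating it would hit {\sc 3SUM}.

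The one place you diverge is the backtracking, and this is where your final sentence overreaches. The paper inverts \emph{layer by layer}: at the bottom layer it asks, for each of the $2^k$ consecutive $N$-bit blocks of $x$, whether that block is in $\mathrm{range}(C_1)$; if all are, it replaces each block by a preimage and repeats one layer up. Every oracle call is literally ``does this length-$N$ string have a circuit of size $\le t'$?'' --- i.e.\ an {\sc MCSP}/{\sc sMCSP} query --- which is why the paper can claim a circuit-minimization oracle suffices. Your top-down descent instead asks, at an internal node $v$, ``is there a seed $u\in\{0,1\}^{N/2}$ whose \emph{subtree expansion} matches this block of $x$?'' That is a perfectly good {\cl NP} query (and your argument that the resulting $(u_L,u_R)$ is a non-output of $C_1$ is correct), but it is \emph{not} of the form ``does a short circuit exist for this truth table'': you are inverting a composition of many copies of $C_1$, not $C_1$ itself. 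So your proof establishes the {\cl NP}-oracle version of the theorem, with the pleasant side effect of using only $O(\log M)$ queries rather than the paper's $O(M)$; but to get the sharper ``circuit-minimization oracle suffices'' conclusion you would need to revert to the layer-by-layer inversion, where every call really is an {\sc MCSP} instance.
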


We then argue, based on an observation of Williams \cite{sat-lb}, that improving this construction in its current form to extract $(\frac{s}{\log M})^{\frac{1}{2} + \epsilon}$ bits of hardness would require a breakthrough for {\sc 3SUM}.

Finally, in Section~\ref{sec:direct-reductions} we consider {\cl P} (as opposed to ${\cl P}^{\cl NP}$) reductions from particular explicit construction problems to the problem of constructing hard truth tables. We show that in the case of rigidity, bit probe lower bounds, and certain communication complexity lower bounds, such reductions exist. These reductions take the following form: we show that the failure of an $n$-bit string $x$ to satisfy certain pseudorandom properties implies a smaller than worst case circuit computing $x$. This then implies that any $n$-bit string of sufficiently high circuit complexity will necessarily possess a variety of pseudorandom properties, including high rigidity, high space-bounded communication complexity, and high bit-probe complexity. We also make note of an interesting dichotomy (Theorem~\ref{thm:dichotomy}), which tells us that any explicit construction problem in {\cl APEPP} is either {\cl APEPP}-complete under ${\cl P}^{\cl NP}$ reductions, or solvable in subexponential time with an {\cl NP} oracle (for infinitely many input lengths).

Another concrete takeaway from this work is that we demonstrate, for several well-studied problems, the weakest known assumptions necessary to obtain explicit constructions of a certain type (polynomial time constructions in some cases and ${\cl P}^{\cl NP}$ constructions in others). Perhaps the most interesting application of this is rigidity, as the complexity of rigid matrix construction has been studied extensively in both the {\cl P} and ${\cl P}^{\cl NP}$ regimes. We obtain the following conditional constructions of rigid matrices:

\begin{theorem*}[Theorems \ref{thm:rigid-in-apepp} and \ref{thm:LB-algo-equivalence}]
If ${\cl E}^{\cl NP}$ contains a language of circuit complexity $2^{\Omega(n)}$, then for any prime power $q$ there is a ${\cl P}^{\cl NP}$ construction of an $n \times n$ matrix over $\mathbb{F}_q$ which is $\Omega(n^2)$-far (in hamming distance) from any rank-$\Omega(n)$ matrix.
\end{theorem*}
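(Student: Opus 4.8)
The plan is to obtain the statement by chaining the two cited results. By Theorem~\ref{thm:rigid-in-apepp}, the rigid-matrix construction problem --- on input $1^n$, output an $n\times n$ matrix over $\mathbb{F}_q$ that is $\Omega(n^2/\log n)$-far in Hamming distance from every matrix of rank at most $0.49n$ --- lies in {\cl APEPP}. Since {\sc Empty} is complete for {\cl APEPP} under polynomial-time reductions between search problems, membership unwinds to the following data: a polynomial-time map $1^n \mapsto C_n$, where $C_n:\{0,1\}^{a(n)} \to \{0,1\}^{b(n)}$ is a circuit with $b(n) > a(n)$, together with a polynomial-time map that converts any string $y \notin \mathrm{range}(C_n)$ into a matrix satisfying the rigidity requirement for $1^n$. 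For each fixed prime power $q$ this reduction is uniform in $n$, so $q$ only ever appears as a constant of the construction.

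Next I would invoke Theorem~\ref{thm:LB-algo-equivalence}: the hypothesis that ${\cl E}^{\cl NP}$ contains a language of circuit complexity $2^{\Omega(n)}$ yields a ${\cl P}^{\cl NP}$ algorithm $A$ for {\sc Empty}. Composing the two, on input $1^n$ we compute $C_n$, run $A$ on $C_n$ to obtain some $y \notin \mathrm{range}(C_n)$, and apply the solution-translation map to $y$; the result is a matrix that is $\Omega(n^2/\log n)$-far from every rank-$0.49n$ matrix, and the whole procedure runs in polynomial time with an {\cl NP} oracle. This is precisely the asserted ${\cl P}^{\cl NP}$ construction.

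I do not expect a genuine obstacle at this point: the only thing to verify is that a polynomial-time search-problem reduction composed with a ${\cl P}^{\cl NP}$ algorithm for the target search problem again gives a ${\cl P}^{\cl NP}$ algorithm for the source, which is routine. All of the substance is packed into the cited theorems --- above all into Theorem~\ref{thm:rigid-in-apepp}, where one must produce, for every $n$, a polynomial-size and polynomial-time constructible circuit $C_n$ whose range contains \emph{every} $n\times n$ matrix over $\mathbb{F}_q$ that \emph{fails} to be $\Omega(n^2/\log n)$-far from rank $0.49n$ (the point being that such a matrix splits as a low-rank part plus a sparse part, each admitting a description much shorter than $n^2\log q$ bits, which is exactly what lets the bad matrices be enumerated by a compressing circuit), so that every string outside $\mathrm{range}(C_n)$ is automatically rigid; and into Theorem~\ref{thm:LB-algo-equivalence}, which is where Je\v{r}\'{a}bek's reduction and the hardness-versus-randomness machinery do the real work. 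One could equally route the argument through the direct polynomial-time reduction from rigid-matrix construction to hard-truth-table construction discussed in Section~\ref{sec:direct-reductions} together with Theorem~\ref{thm:LB-algo-equivalence}; both paths reach the same conclusion.
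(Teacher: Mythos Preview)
Your main argument is correct and is exactly how the paper obtains this statement: it is simply the composition of Theorem~\ref{thm:rigid-in-apepp} (rigid-matrix construction lies in {\cl APEPP}) with Theorem~\ref{thm:LB-algo-equivalence} (the hypothesis yields a ${\cl P}^{\cl NP}$ algorithm for {\sc Empty}), and the paper treats this as an immediate corollary without further proof.

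One small caveat on your closing remark: the alternative route through Section~\ref{sec:direct-reductions} does \emph{not} give the same conclusion. Theorem~\ref{thm:rigidity-circuit} is stated only over $\mathbb{F}_2$, yields the different parameter shape $(\epsilon n,\epsilon n^2)$ rather than $(0.49n,\Omega(n^2/\log n))$, and reduces to {\sc $\delta$-Quite Hard} (hardness $\delta N/\log N$) rather than {\sc $\epsilon$-Hard}; the corresponding uniform hypothesis is that {\cl E} contains a language of complexity $\Omega(2^n/n)$, strictly stronger than the $2^{\Omega(n)}$ assumed here. So the two paths are not interchangeable for the theorem as stated --- the first route, the one you actually carried out, is the right one.
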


\begin{theorem*}[Theorem \ref{thm:rigidity-circuit}]
If ${\cl E}$ contains a language of circuit complexity $\Omega(\frac{2^n}{n})$, then there is a polynomial time construction of an $n \times n$ matrix over $\mathbb{F}_2$ which is $\Omega(n^2)$-far from any rank-$\Omega(n)$ matrix.
\end{theorem*}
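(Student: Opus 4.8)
The plan is to derive this as an instance of the general principle behind Section~\ref{sec:direct-reductions}: a string whose circuit complexity is within a constant factor of the Shannon bound, when reshaped into a square matrix, is automatically rigid. Thus the only thing we need to feed in is a hard language in ${\cl E}$, from which the hard string is obtained ``for free'' in polynomial time. Concretely, fix $L \in {\cl E}$ and a constant $c > 0$ such that $L$ restricted to inputs of length $m$ has circuit complexity at least $c\,2^m/m$ for all large $m$; write $L_m \in \{0,1\}^{2^m}$ for the corresponding truth table. Given $1^n$ with $n$ a power of two (the general case is handled at the end), set $m = 2\log_2 n$ so that $2^m = n^2$, and let $A \in \mathbb{F}_2^{n \times n}$ be the matrix whose entries, in row-major order, are the bits of $L_m$. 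Since $L \in {\cl E}$, each bit of $L_m$ is computable in time $2^{O(m)} = \mathrm{poly}(n)$, so $A$ is computable in polynomial time; and viewing $A$ as a Boolean function on $m = 2\log_2 n$ variables, its circuit complexity equals that of $L_m$, hence is at least $c\,2^m/m = \tfrac{c}{2}\cdot n^2/\log_2 n$.

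The heart of the argument is the contrapositive: I will show that if $A$ is \emph{not} far from a low-rank matrix then $A$ has an unexpectedly small circuit. So suppose, for small constants $c_1, c_2 > 0$ depending only on $c$ and to be chosen at the end, that $A$ lies within Hamming distance $c_2 n^2$ of some $B \in \mathbb{F}_2^{n \times n}$ with $\mathrm{rank}(B) = r \le c_1 n$, and write $A = B + C$ over $\mathbb{F}_2$, so that $C$ has at most $c_2 n^2$ nonzero entries. Then the circuit complexity of $A$ is at most that of $B$ plus that of $C$ plus $O(1)$ (one output XOR). To bound $B$: factor $B = \sum_{i=1}^{r} u_i w_i^{\top}$ with $u_i, w_i \in \mathbb{F}_2^n$, so $B_{jk} = \bigoplus_{i=1}^r (u_i)_j (w_i)_k$; each $u_i$ and each $w_i$ is an $n$-bit string, i.e.\ a function on $\log_2 n$ variables, so by the classical Shannon--Lupanov upper bound each has a circuit of size $(1+o(1))\, n/\log_2 n$, giving a circuit for $B$ of size at most $(2+o(1))\, r n/\log_2 n \le (2+o(1))\, c_1 \cdot n^2/\log_2 n$. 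To bound $C$: here I use that $C$, as a Boolean function on $m$ variables, has only a $c_2$-fraction of its $2^m$ inputs mapped to $1$, so the sharper form of the Shannon--Lupanov bound for functions with few ones gives it a circuit of size $O\!\big(c_2\log(1/c_2)\big)\cdot 2^m/m = O\!\big(c_2\log(1/c_2)\big)\cdot n^2/\log_2 n$.

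Putting these together, the circuit complexity of $A$ is at most $\big[(2+o(1))c_1 + O(c_2\log(1/c_2))\big]\cdot n^2/\log_2 n$, and since the coefficient in brackets tends to $0$ as $c_1, c_2 \to 0$, we may fix $c_1, c_2 = \Omega(1)$ (depending only on $c$) making it smaller than $c/2$; this contradicts the lower bound $\tfrac{c}{2}\cdot n^2/\log_2 n$ from the first paragraph. Hence $A$ is $(c_2 n^2)$-far from every $\mathbb{F}_2$-matrix of rank at most $c_1 n$, which is the desired $\Omega(n^2)$-versus-rank-$\Omega(n)$ statement. For general $n$, I would run the construction at the largest power of two $n_0 \le n$ and pad the resulting $n_0 \times n_0$ matrix with zero rows and columns to size $n \times n$; since $n_0 = \Theta(n)$, any rigidity decomposition of the padded matrix restricts to one of the original with rank and distance parameters changed only by constant factors, so rigidity is preserved.

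The main obstacle is precisely the sparse-function version of the Shannon--Lupanov bound invoked for $C$. Ordinary Lupanov only yields a circuit of size $(1+o(1))\,2^m/m$ for an arbitrary $m$-variable function, which does \emph{not} beat the lower bound $c\cdot 2^m/m$ when $c$ is a small constant --- and we have only assumed circuit complexity $\Omega(2^n/n)$, not arbitrarily close to the maximum. One therefore genuinely needs the refinement that a Boolean function with a $\delta$-fraction of ones has circuit size $O(\delta\log(1/\delta))\cdot 2^m/m$, whose coefficient vanishes with $\delta$ (this is tight against the binary-entropy counting lower bound); it is classical but must be stated and cited with care. The remaining ingredients --- the rank factorization over $\mathbb{F}_2$, applying Lupanov to the rows and columns, the padding for general $n$, and the bookkeeping of constants --- are routine.
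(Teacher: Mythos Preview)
Your proposal is correct and follows essentially the same approach as the paper: decompose a non-rigid matrix as low-rank plus sparse, encode the low-rank factor via Shannon--Lupanov applied to each of the $O(r)$ vectors, encode the sparse correction via the refined Lupanov bound for functions with few ones, and observe that the resulting circuit has size $(H(\epsilon)+\epsilon)\,O(N^2/\log N)$, contradicting the hardness assumption for $\epsilon$ small enough. Your identification of the sparse Lupanov bound as the one non-routine ingredient matches the paper exactly; the paper cites it explicitly and states the bound as $\log\binom{N}{\epsilon N}/\log\log\binom{N}{\epsilon N} + o(N/\log N) \le H(\epsilon)\,O(N/\log N)$.
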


In both cases, the rigidity parameters in the conclusion would be sufficient to carry out Valiant's lower bound program \cite{Valiant}. The weakest hardness assumptions previously known to yield constructions with even remotely similar parameters (in either the ${\cl P}^{\cl NP}$ or {\cl P} regimes) require a lower bound against \emph{nondeterministic} circuits \cite{AM-SV}.

\subsection{Related Work}
A large body of work on the hardness/randomness connection, starting with that of Nisan and Wigderson \cite{NW}, has exhibited the usefulness of explicit constructions of hard truth tables. The results of Impagliazzo and Wigderson \cite{IW} give, in particular, a reduction from explicit constructions of hard truth tables to explicit constructions of pseudorandom generators that fool polynomial size circuits. As noted by Santhanam \cite{explicit-constructions}, this immediately implies that for any ``dense'' property $\Pi$ recognizable in {\cl P} (dense meaning the fraction of $n$-bit strings holding this property is at least $1/poly(n)$), an efficient construction of a hard truth table immediately implies an efficient construction of an $n$-bit string with property $\Pi$. But many properties of interest such as Rigidity (or any of the other properties studied in this work) are only known to be recognizable in the larger class {\cl NP}. Under the stronger assumption that we can construct truth tables hard for certain classes of \emph{nondeterministic} circuits, constructions for all dense {\cl NP} properties are known to follow as well \cite{AM-derandomization} \cite{AM-SV}, so in particular ${\cl P}^{\cl NP}$ constructions for every problem in {\cl APEPP} would follow. However, constructing truth tables that are hard for nondeterministic circuits appears strictly harder than constructing truth tables hard for standard circuits, and in particular does not seem to be contained in {\cl APEPP}, so although this yields an explicit construction problem which is \emph{hard} for {\cl APEPP}, it does not appear to be complete. In contrast, we show here that constructing a truth table which is hard for standard circuits is both contained in and hard for {\cl APEPP}, thus showing that a ${\cl P}^{\cl NP}$ construction of a hard truth table is possible \emph{if and only if} such a construction is possible for every problem in {\cl APEPP}.

For several of the problems we study, a long line of work has gone into improving state-of-the-art explicit constructions. We give a brief overview here of some recent work on rigid matrices and extractors. Rigidity was first introduced by Valiant \cite{Valiant}, who showed that any matrix which is $n^{1+\epsilon}$-far from a rank-$\delta n$ matrix for some $\epsilon, \delta > 0$ cannot be computed by linear size, logarithmic depth arithmetic circuits. Since then it has been a notorious open problem to provide examples of an explicit matrix family with rigidity parameters anywhere close to this. A recent breakthrough was achieved in \cite{alman-chen}, and improved by \cite{pcp-rigidity}, which gives ${\cl P}^{\cl NP}$ constructions of matrices that are $\Omega(n^2)$-far from any rank-$2^{\log n/\Omega(\log \log n)}$ matrix, for infinitely many values of $n$. However, such a construction is still not sufficient to carry out Valiant's arithmetic circuit lower bound program. A conditional result of \cite{data-structure-rigidity} implies that ${\cl P}^{\cl NP}$ constructions of certain rigid matrices are possible assuming explicit constructions exist for certain  hard data structure problems in the group model. In terms of polynomial time constructions, the best known construction yields a matrix which is $\frac{N^2}{\rho}\log{\frac{N}{\rho}}$-far from any $\rho$-rigid matrix, for any parameter $\rho$ \cite{SSS}\cite{Friedman}.

The case of Ramsey graphs and extractors is slightly more complicated. There are two common definitions of the explicit construction problems corresponding to these objects. The first is often referred to as the ``weakly-explicit'' version, where we must output the adjacency matrix (in the case of Ramsey graphs) or truth table (in the case of extractors) in time polynomial in the size of the truth table/matrix. The second version, referred to as the ``strongly-explicit'' version, is to output a succinct circuit which computes the adjacency relation or extractor function. Clearly the strongly-explicit case is harder, but in both cases, there is a significant gap between what is achievable by explicit methods and what can be proven possible by the probabilistic method. We will focus on the strongly explicit case in this work, and in the case of extractors we will focus on \emph{two-source} extractors with one bit of output, which remains the most challenging current frontier \cite{extractors-survey}. The state of the art constructions for both two-source extractors and Ramsey graphs are due to Li \cite{li-extractor}. He demonstrates a two-source extractor for min-entropy $O(\log n \log \log n)$, and hence an $n$-vertex graph which is $(\log n)^{O(\log \log \log n)}$-Ramsey. Our results show that strongly explicit extractors for min-entropy $\log n + O(\log(1/\epsilon))$, and thus $n$-vertex $O(\log n)$-Ramsey graphs, can be constructed in {\cl APEPP}. In both cases these parameters are known to be the best possible. For a comprehensive survey of recent progress on extractors see \cite{extractors-survey}.

Another line of work in the area of explicit constructions investigates the possibility of \emph{pseudodeterministic} constructions of certain objects. Here, the construction algorithm is allowed to use randomness, but must output the \emph{same object} on most computation paths. Originally introduced in \cite{GG}, this paradigm was recently applied in \cite{pseudo-primes} to the construction of prime numbers, where a subexponential time pseudodeterministic construction which works for infinitely many input lengths is given.

\subsection{Proof Sketch of Main Theorem}
We give here an informal overview of the proof that {\sc Empty} can be solved in polynomial time given access to a hard truth table and an {\cl NP} oracle. At the core of this proof is a familiar construction in the theory of computing which dates back to the 1980's, namely the pseudorandom function generator of Goldreich, Goldwasser, and Micali \cite{GGM}. Note that in the following, we will refer to the ``circuit complexity of an $n$-bit string $x$'' to mean the size of the smallest circuit computing $x_i$ given $i$ in binary; this is well-defined even when $n$ is not a power of 2, as we shall formalize Section~\ref{sec:constructions-in-apepp}\footnote{See Definition~\ref{def:circ-size}}.

Consider the special case of {\sc Empty} where our input is a circuit $C: \{0,1\}^n \rightarrow \{0,1\}^{2n}$ which exactly doubles its input size. For a moment let us forget our primary goal of finding a $2n$-bit string outside $C$'s range, and instead consider $C$ as a cryptographic pseudorandom generator which we are attempting to break. Since $C$ is a function which extends its input size by a positive number of bits, it is indeed of the same syntactic form as a cryptographic PRG, so this viewpoint is well-defined.

In \cite{GGM}, Goldreich, Goldwasser and Micali give a procedure\footnote{\cite{GGM} and \cite{natural-proofs} apply the construction described here in a different parameter regime, so our statement of the result differs slightly from its original presentation. The version described here has been noted subsequently in the literature on {\sc MCSP}, see for example \cite{MCSP-OWF}.} which, for any fixed $0 < \epsilon < 1$, takes $C$ and produces in polynomial time a new circuit $C^*: \{0,1\}^n \rightarrow \{0,1\}^{m}$ for some $m = poly(n)$, which satisfies the following two properties:
\begin{enumerate}[(1)]
    \item Every string in the range of $C^*$ has circuit complexity at most $m^{\epsilon}$.
    \item Given a statistical test breaking $C^*$, we can construct a statistical test of similar complexity breaking $C$.
\end{enumerate}
The construction of $C^*$ is in fact quite simple: for an appropriate choice of $k$, we recursively apply $C$ to an $n$-bit input for $k$ iterations as follows: first apply $C$ to an $n$-bit string to get 2 $n$-bit strings, then apply it again to each of those to get $4$, and continue $k$ times until we obtain $2^k$ $n$-bit strings. 

A key observation made by Razborov and Rudich \cite{natural-proofs} is that condition (1) automatically implies a \emph{particular statistical test which breaks $C^*$}, namely the test which accepts precisely those $m$-bit strings with circuit complexity exceeding $m^{\epsilon}$. But by property (2), $C^*$ inherets the security of $C$, which is an arbitrary candidate PRG. This means that determining if an $m$-bit string has circuits of size $m^{\epsilon}$ is in fact a \emph{universal test for randomness}, capable of simultaneously breaking all pseudorandom generators.

Recall now our original goal for $C$, which was to find a $2n$-bit string outside its range. Property (1) of $C^*$ implies that an explicit construction of a length-$m$ truth table of circuit complexity $m^{\epsilon}$ would immediately yield an explicit $m$-bit string outside the range of $C^*$. In Section~\ref{sec:completeness}, we show that $C^*$ obeys the following third property:
\begin{enumerate}[(1)]
    \setcounter{enumi}{2}
    \item Given a string outside the range of $C^*$, we can find a string outside the range of $C$ using a polynomial number of calls to an {\cl NP} oracle.
\end{enumerate}
The analogue of statement (3) in the context of Bounded Arithmetic was first shown by Je\v{r}\'{a}bek \cite{Jerabek}, and a quite similar argument appears even earlier in the work of Paris, Wilkie, and Woods \cite{Paris-Wilkie-Woods}. Combining properties (1) and (3), we get the desired result: any $m$-bit string of complexity $m^{\epsilon}$ must lie outside the range of $C^*$, so using such a string together with an {\cl NP} oracle we can solve our original instance of {\sc Empty}.

To summarize, the construction $C^*$ of Goldreich, Goldwasser and Micali shows that the property of requiring large circuits is a \emph{universal pseudorandom property of strings} in two concrete senses: 
\begin{enumerate}[(a)]
    \item (Original analysis of \cite{GGM} and \cite{natural-proofs}) A test determining whether a string requires large circuits can be efficiently boostrapped into a test distinguishing any pseudorandom distribution from the uniform distribution.
    \item (This work together with \cite{Jerabek}) An explicit example of a string requiring large circuits can be used to generate an explicit example of a string outside the range of any efficiently computable map $C: \{0,1\}^n \rightarrow \{0,1\}^{2n}$ (in fact any $C: \{0,1\}^n \rightarrow \{0,1\}^{n+1}$ as shown in Section~\ref{sec:completeness}), and in particular can be used to construct explicit examples of strings possessing each of the fundamental pseudorandom properties examined in Section~\ref{sec:constructions-in-apepp}.
\end{enumerate}

\section{Definitions}\label{sec:defs}
 Following \cite{total}, we define the set of total functions in ${\cl \Sigma_2^P}$, denoted ${\cl TF\Sigma_2^P}$, as follows:
\begin{definition}
A relation $R(x,y)$ is in ${\cl TF\Sigma_2^P}$ if there exists a polynomial $p(n)$ such that the following conditions hold:
\begin{enumerate}
    \item For every $x$, there exists a $y$ such that $|y| \leq p(|x|)$ and $R(x,y)$ holds
    \item There is a polynomial time Turing machine $M$ such that \\
    $R(x,y) \Longleftrightarrow \forall z \in \{0,1\}^{p(|x|)} M(x,y,z) \text{ accepts}$ 
\end{enumerate}
\end{definition}
The search problem associated with such a relation is: ``given $x$, find some $y$ such that $R(x,y)$ holds.'' For the majority of this paper, we will be concerned primarily with \emph{sparse} ${\cl TF\Sigma_2^P}$ search problems, where the only relevant part of the input is its length. We can thus define the following ``sparse'' subclass of ${\cl TF\Sigma_2^P}$: 
\begin{definition}
A relation $R(x,y)$ is in ${\cl STF\Sigma_2^P}$ if $R \in {\cl TF\Sigma_2^P}$ and for any $x_1,x_2$ such that $|x_1|=|x_2|$, we have that for all $y$, $R(x_1,y) \Leftrightarrow R(x_2,y)$.
\end{definition}

Since the length of $x$ fully determines the set of solutions, the relevant search problem here is: ``given $1^n$, find some $y$ such that $R(1^n,y)$ holds.'' All explicit construction problems considered in Section~\ref{sec:constructions-in-apepp} will be in ${\cl STF\Sigma_2^P}$ (with the exception of {\sc Complexity} which we briefly mention as it was studied previously in \cite{total}).

We now define the search problem {\sc Empty}, which will be the primary subject of this work:
\begin{definition}
{\sc Empty} is the following search problem: given a boolean circuit $C$ with $n$ input wires and $m$ output wires where $m>n$, find an $m$-bit string outside the range of $C$.
\end{definition}

This problem is total due to the basic lemma, referred to in \cite{total} as the ``Empty Pigeonhole Principle'' and in the field of Bounded Arithmetic as the ``Dual Pigeonhole Principle \cite{Jerabek},'' which tells us that a map from a smaller set onto a larger one cannot be surjective. Since verifying a solution $y$ consists of determining that for all $x$, $C(x) \neq y$, we have:
\begin{observation}
{\sc Empty} $\in$ ${\cl TF\Sigma^{\cl P}_2}$
\end{observation}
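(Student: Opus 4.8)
The plan is to verify the two defining clauses of ${\cl TF\Sigma_2^P}$ directly for the relation $R$ underlying {\sc Empty}, where the input $x$ encodes the circuit $C$ and the candidate witness $y$ is an $m$-bit string.

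First I would handle totality together with the witness-length bound. Given a well-formed input encoding a circuit $C : \{0,1\}^n \to \{0,1\}^m$ with $m > n$, the image $C(\{0,1\}^n)$ has size at most $2^n < 2^m$, so at least one $m$-bit string lies outside it; this is exactly the Empty Pigeonhole Principle quoted above. Since $m$ is at most the number of gates of $C$, we have $m \le |x|$, hence any solution $y$ satisfies $|y| = m \le |x| \le p(|x|)$ for a suitable polynomial $p$ (even $p(k)=k$ works for the length bound), so clause~1 is met. To make $R$ total on \emph{all} strings $x$ as required syntactically — rather than merely total under the promise $m>n$ — I would adopt the convention that $R(x,y)$ holds for every $y$ whenever $x$ does not encode a circuit with strictly more output than input wires; this convention is polynomial-time checkable and does not change the search problem on the inputs of interest.

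Next I would exhibit the polynomial-time machine $M$ realizing clause~2, i.e.\ expressing $R$ in the required $\forall z$ form. On input $(x,y,z)$, $M$ first checks in polynomial time whether $x$ encodes a circuit $C$ with $n$ input and $m$ output wires satisfying $m > n$; if not, $M$ accepts (matching the convention above). If so, $M$ checks whether $|y| = m$ and rejects if not. Otherwise $M$ reads a fixed-length prefix of $z$ as an input $u \in \{0,1\}^n$ for $C$, evaluates $C(u)$ in time polynomial in $|x|$, and accepts iff $C(u) \ne y$. Then $\forall z \in \{0,1\}^{p(|x|)}\,M(x,y,z)\text{ accepts}$ holds precisely when $y$ is an $m$-bit string with $C(u) \ne y$ for every $u \in \{0,1\}^n$, i.e.\ precisely when $y \notin \mathrm{range}(C)$; choosing $p$ large enough to bound both the evaluation of $C$ and the enumeration of all $n$-bit inputs completes the presentation of $R$ as a ${\cl TF\Sigma_2^P}$ relation.

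There is no genuine obstacle here: the whole content is the observation that ``$y$ lies outside the range of $C$'' is a universally quantified, polynomial-time verifiable predicate, plus the elementary counting argument for totality. The only care needed is bookkeeping — fixing a single polynomial $p$ that simultaneously bounds a valid witness length and the length of the universally quantified string, and pinning down a convention on malformed inputs so the relation is syntactically total.
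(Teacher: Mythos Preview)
Your proposal is correct and follows exactly the approach the paper takes, just spelled out in more detail: the paper merely remarks that totality is the Empty Pigeonhole Principle and that verifying a solution $y$ amounts to checking $\forall x\, C(x)\neq y$, which is precisely the $\Pi_1$-predicate you construct. Your additional bookkeeping (bounding $|y|$ by $|x|$, handling malformed inputs, choosing a single polynomial $p$) is sound and simply fills in routine details the paper leaves implicit.
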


Since for any instance of {\sc Empty} the number of output bits $m$ is at least $n+1$, a random $m$-bit string will be a solution with probability at least $\frac{1}{2}$. Since verifying a solution can be accomplished with one call to an {\cl NP} oracle, this implies the following inclusion:

\begin{observation}
{\sc Empty} $\in$ ${\cl FZPP}^{\cl NP}$
\end{observation}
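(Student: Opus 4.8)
The plan is to exhibit a zero-error randomized polynomial-time algorithm, equipped with an ${\cl NP}$ oracle, that solves {\sc Empty}. On input a circuit $C\colon \{0,1\}^n \to \{0,1\}^m$ with $m > n$, the algorithm repeatedly samples a uniformly random string $y \in \{0,1\}^m$, queries the oracle to test whether $y$ lies in the range of $C$, and halts outputting the first sampled $y$ that is found to be outside the range.

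First I would confirm that each membership test is a single ${\cl NP}$ query. The predicate ``$y \in \mathrm{range}(C)$'' is witnessed by a preimage $x \in \{0,1\}^n$ with $C(x) = y$, and given such an $x$ one evaluates $C$ on $x$ and compares to $y$ in time polynomial in $|C|$; hence this predicate is in ${\cl NP}$, and its negation is decidable with one oracle call. Next I would bound the per-round success probability: since $|\mathrm{range}(C)| \le 2^n \le 2^{m-1}$, a uniformly random $y \in \{0,1\}^m$ lies outside $\mathrm{range}(C)$ with probability at least $1/2$. Thus the expected number of rounds before the algorithm halts is at most $2$, the total expected running time is polynomial, and each round makes exactly one oracle query.

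Finally I would observe that the algorithm is \emph{zero-error}: every output $y$ has been certified by the oracle to satisfy $y \notin \mathrm{range}(C)$, so $y$ is a genuine solution to {\sc Empty}; the algorithm never produces an incorrect answer and terminates with probability $1$. (Equivalently, one may halt after a fixed polynomial number of rounds and emit a failure symbol with probability at most $2^{-n}$, matching the standard one-sided definition of ${\cl FZPP}$.) This places {\sc Empty} in ${\cl FZPP}^{\cl NP}$. There is no genuine obstacle here; the only point requiring even minimal care is verifying that range-membership of $(C,y)$ is an ${\cl NP}$ predicate, which is immediate from the ability to guess and check a preimage.
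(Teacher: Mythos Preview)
Your proposal is correct and follows essentially the same argument as the paper: sample a uniformly random $m$-bit string, which is outside the range of $C$ with probability at least $1/2$ since $m > n$, and verify this with a single ${\cl NP}$ oracle call. The paper states this in two sentences without the additional detail you provide, but the approach is identical.
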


As mentioned in the introduction, this fact tells us that sufficiently strong pseudorandom generators capable of fooling \emph{nondeterministic} circuits such as those in \cite{AM-derandomization} would suffice to derandomize the above inclusion and yield a ${\cl P}^{\cl NP}$ algorithm for {\sc Empty}. In Section~\ref{sec:completeness}, we will show that this derandomization can be accomplished under a significantly weaker assumption, using a reduction of a very different form then the hardness-based pseudorandom generators of \cite{NW}, \cite{IW}, and \cite{AM-derandomization}.

We can now define the class {\cl APEPP}, which is simply the class of search problems polynomial-time reducible to {\sc Empty}. This class was originally defined in \cite{total}, and is an abbreviation for ``Abundant Polynomial Empty Pigeonhole Principle.'' The term ``Abundant'' was used to distinguish this from the larger class {\cl PEPP} also studied in \cite{total}. The complete problem for {\cl PEPP} is to find a string outside the range of a map $C:\{0,1\}^n \setminus \{0^n\} \rightarrow \{0,1\}^n$, which appears significantly more difficult (it is at least as hard as {\cl NP} \cite{total}). The distinction between {\cl APEPP} and {\cl PEPP} also appears in the Bounded Arithmetic literature, where the principle corresponding to {\cl APEPP} is referred to as the ``Dual \emph{weak} Pigeonhole Principle,'' while the principle corresponding to {\cl PEPP} is referred to simply as the ``Dual Pigeonhole Principle.'' We will be concerned only with the abundant/weak principle in this work. It should be noted that we employ a slight change of notation from \cite{total} for the sake of simplicity: we use {\sc Empty} to refer to the search problem associated with the \emph{weak} pigeonhole principle, while in \cite{total} {\sc Empty} refers to the search problem associated with the full pigeonhole principle.

\paragraph{Infinitely-often vs. almost-everywhere circuit lower bounds:} As a final point of clarification, whenever we make the statement ``$L$ requires circuits of size $s(n)$'' for some language $L$ and size bound $s$, we mean that circuits of size $s(n)$ are required to compute $L$ on length $n$ inputs for \emph{all but finitely many $n$}. This is in contrast to the statement ``$L \notin {\cl SIZE}(s(n))$,'' which means the circuit size lower bound holds for infinitely many input lengths. All circuit lower bounds referred to in this work will be of the first kind.

\section{Explicit Constructions in {\cl APEPP}}\label{sec:constructions-in-apepp}

In this section, we show that a variety of well-studied explicit construction problems can be reduced in polynomial time to {\sc Empty}. Each proof follows roughly the following format: there is some property of interest $\Pi$, and our goal is to construct an $n$-bit string which holds this property. For each such $\Pi$ we consider, whenever an $n$-bit string $x$ fails to have this property, it indicates that $x$ is somehow more ``structured'' than a random $n$-bit string, and this structure allows us to specify $x$ using fewer then $n$ bits. We then actualize this argument in the form of an efficiently computable map $C: \{0,1\}^k \rightarrow \{0,1\}^n$ with $k < n$, such that any string not having property $\Pi$ is in the range of $C$. This immediately implies that any $n$-bit string outside the range of $C$ must hold property $\Pi$, and thus any solution to the instance of {\sc Empty} defined by $C$ will be a solution to our explicit construction problem. For many of the proofs, we will only show that the reduction is valid for $n$ sufficiently large; clearly this is sufficient, since explicit constructions can be done by brute force for fixed input lengths.

\paragraph{A useful coding lemma:} In the proofs to come, it will be helpful to utilize succinct and efficiently computable encodings of low-weight strings (the ``weight'' of binary string is the number of 1 bits it contains). We start with the following folklore result, reproduced in \cite{Coin-Problem}, which gives an optimal encoding of weight-$k$ $n$-bit strings:

\begin{lemma}\label{lem:subset-enc}\cite{Coin-Problem}
For any $k \leq n$, there exists a map $\Phi: \{0,1\}^{\log\binom{n}{k}} \rightarrow \{0,1\}^n$ computable in $poly(n)$ time such that any $n$-bit string of weight $k$ is in the range of $\Phi$.
\end{lemma}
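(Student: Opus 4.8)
The plan is to give an explicit bijection-style encoding based on the combinatorial (colex) ordering of $k$-subsets of $[n]$, often called the ``combinatorial number system.'' First I would fix, once and for all, a bijection between $n$-bit strings of weight exactly $k$ and $k$-element subsets $S = \{a_1 > a_2 > \cdots > a_k\}$ of $\{0,1,\dots,n-1\}$, by letting $S$ record the positions of the $1$-bits. Order these subsets by the standard rule: $S < S'$ iff at the largest index where they differ, $S$ has the smaller element. Under this ordering the subset $\{a_1 > \cdots > a_k\}$ receives rank $\mathrm{rank}(S) = \binom{a_1}{k} + \binom{a_2}{k-1} + \cdots + \binom{a_k}{1}$, and this is a bijection onto $\{0,1,\dots,\binom{n}{k}-1\}$. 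The map $\Phi$ then takes an input string $w \in \{0,1\}^{\lceil\log\binom{n}{k}\rceil}$, reads it as an integer $r$ in $\{0,\dots,2^{\lceil\log\binom nk\rceil}-1\}$, and if $r < \binom nk$ outputs the (unique) weight-$k$ string of rank $r$; otherwise it outputs an arbitrary default weight-$k$ string (e.g. rank $0$). Since every weight-$k$ string has some rank $r < \binom nk$, every such string lies in the range of $\Phi$, which is the claim.

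The key steps, in order, are: (i) verify that $\mathrm{rank}(\cdot)$ is indeed a bijection onto $\{0,\dots,\binom nk -1\}$ — this follows from the hockey-stick identity $\sum_{j} \binom{j}{i}$ telescoping and a short induction on $k$ showing each residual ``budget'' is consumed exactly; (ii) describe the $poly(n)$-time unranking algorithm: given a target rank $r$, greedily pick the largest $a_1$ with $\binom{a_1}{k} \le r$, subtract, recurse with budget $r - \binom{a_1}{k}$ and parameter $k-1$ to find $a_2 < a_1$, and so on; (iii) observe that all arithmetic involves integers at most $\binom nk \le 2^n$, so each binomial coefficient and comparison is on $O(n)$-bit numbers, and there are $O(k)$ rounds each doing $O(n)$ comparisons (or a binary search over candidate $a_i$), giving total time $poly(n)$; (iv) handle the cosmetic point that $\log\binom nk$ need not be an integer by using $\lceil \log\binom nk\rceil$ input bits and the default-output convention above (alternatively one can note $\log\binom nk$ in the statement is understood as a ceiling).

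The only real subtlety — not so much an obstacle as a point that must be stated carefully — is the bit-complexity bookkeeping in step (iii): one must confirm that binomial coefficients $\binom{a}{i}$ for $a \le n$, $i \le k$ can be computed in $poly(n)$ time (e.g. building Pascal's triangle up to row $n$, which is $poly(n)$ entries each of $poly(n)$ bits), and that the greedy search for each $a_i$ terminates correctly because the rank function is strictly monotone in the colex order. Everything else is a routine verification of the combinatorial number system; the lemma is genuinely folklore and the proof is short once the ranking formula is written down.
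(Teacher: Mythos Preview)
Your proof is correct: the combinatorial number system (colex ranking/unranking of $k$-subsets) is the standard construction, and your bit-complexity analysis is fine. Note, however, that the paper does not actually give its own proof of this lemma --- it states the result as folklore and cites \cite{Coin-Problem} for it --- so there is no ``paper's approach'' to compare against; your write-up supplies exactly the kind of argument one would expect behind that citation.
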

As a useful corollary we get the following:
\begin{lemma}\label{lem:sparse-enc}
For any $0 < \epsilon < \frac{1}{2}$, there exists a map  $\Phi: \{0,1\}^{n - \epsilon^2n + \log n} \rightarrow \{0,1\}^n$ computable in $poly(n)$ time such that any $n$-bit string of weight at most $\frac{n}{2} - \epsilon n$ is in the range of $\Phi$.
\end{lemma}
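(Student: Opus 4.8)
The plan is to reduce to Lemma~\ref{lem:subset-enc} by a simple union over all admissible weights. A string of weight at most $\frac{n}{2} - \epsilon n$ has some exact weight $k \in \{0, 1, \dots, \lfloor \frac{n}{2} - \epsilon n \rfloor\}$, so I would build $\Phi$ to first read $\lceil \log(n+1) \rceil \le \log n + 1$ bits of its input encoding the target weight $k$ (there are at most $n+1$ possibilities, so $\lceil \log(n+1)\rceil$ bits suffice, and I will absorb the additive constant into the stated $\log n$ term for $n$ large), and then feed the remaining bits to the map $\Phi_k$ from Lemma~\ref{lem:subset-enc} for that particular $k$. Each $\Phi_k$ takes $\log\binom{n}{k}$ input bits, so I would pad the remaining input bits up to the length $\max_k \log\binom{n}{k}$ over the relevant range of $k$, ignoring any surplus bits. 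Since every $\Phi_k$ is $poly(n)$-time computable and $k$ is recovered in linear time, the whole map $\Phi$ is $poly(n)$-time computable, and by construction every $n$-bit string of weight at most $\frac{n}{2} - \epsilon n$ lands in the range of the appropriate $\Phi_k$, hence in the range of $\Phi$.

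The only quantitative point requiring care is the bound on the input length, namely that $\max_k \log\binom{n}{k} \le n - \epsilon^2 n$ where the max is over $k \le \frac{n}{2} - \epsilon n$. Since $\binom{n}{k}$ is increasing in $k$ for $k \le n/2$, the maximum is attained at $k = \lfloor \frac{n}{2} - \epsilon n\rfloor$, so it suffices to show $\binom{n}{(\frac12 - \epsilon)n} \le 2^{(1 - \epsilon^2)n}$ for $n$ large. This is the standard entropy bound: $\binom{n}{(\frac12-\epsilon)n} \le 2^{H(\frac12 - \epsilon)\, n}$ where $H$ is the binary entropy function, together with the Taylor estimate $H(\frac12 - \epsilon) \le 1 - \frac{\epsilon^2}{2\ln 2} \cdot c$ for an appropriate constant — more precisely $H(\frac12 - x) = 1 - \frac{x^2}{2\ln 2}\big(\frac{1}{1} + O(x^2)\big) \le 1 - 2x^2$ for $x$ in a bounded range, which gives $H(\frac12 - \epsilon) \le 1 - \epsilon^2$ comfortably (one has $H(1/2 - x) \le 1 - 2x^2$ for all $x \in [0,1/2]$, so in fact $1 - 2\epsilon^2 \le 1 - \epsilon^2$ with room to spare). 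Thus $\log\binom{n}{k} \le (1-\epsilon^2)n$ for all relevant $k$, and adding the $\lceil\log(n+1)\rceil \le \log n + 1$ bits for the weight index gives total input length at most $n - \epsilon^2 n + \log n + 1 \le n - \epsilon^2 n + 2\log n$; shrinking $\epsilon^2$ by an arbitrarily small amount (or just noting the claimed $\log n$ slack already dominates the additive constant for large $n$, which is all we need per the convention established at the start of the section) yields the stated bound.

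The main obstacle is entirely bookkeeping: making sure the entropy estimate $H(\frac12 - \epsilon) \le 1 - \epsilon^2$ is stated with the correct constant and that the $\log n$ budget in the lemma statement genuinely covers both the weight-index bits and any off-by-one padding. There is no conceptual difficulty — the reduction to Lemma~\ref{lem:subset-enc} is immediate once the weight is guessed — so I would keep the write-up short, invoking the entropy bound with a one-line justification and noting that we may freely assume $n$ is large.
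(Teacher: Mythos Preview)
Your proposal is correct and follows essentially the same approach as the paper: encode the exact weight $k$ using $\log n$ bits and then invoke Lemma~\ref{lem:subset-enc}, with the only quantitative step being the bound $\log\binom{n}{(\frac12-\epsilon)n}\le (1-\epsilon^2)n$. The sole difference is that the paper obtains this estimate via a Chernoff bound (writing $\binom{n}{(\frac12-\epsilon)n}\le 2^n\Pr[X\le (\tfrac12-\epsilon)n]$ and applying $\Pr[X\le(1-\delta)\tfrac{n}{2}]\le 2^{-n\delta^2/4}$ with $\delta=2\epsilon$), whereas you use the equivalent entropy bound $\binom{n}{k}\le 2^{nH(k/n)}$ together with $H(\tfrac12-\epsilon)\le 1-2\epsilon^2$; these are the same inequality in different clothing.
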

\begin{proof}
By the previous lemma, we are able to efficiently encode $n$-bit strings of weight exactly $k$ for any $k \leq \frac{n}{2} - \epsilon n$ using at most $\log\binom{n}{\frac{n}{2} - \epsilon n}$ bits. We can upper bound $\log\binom{n}{\frac{n}{2} - \epsilon n}$ as follows. Letting $X$ denote the sum of $n$ independent unbiased variables over $\{0,1\}$, we have:
\[
\binom{n}{\frac{n}{2} - \epsilon n} \leq 2^n Pr[X \leq \frac{n}{2} - \epsilon n]
\]
Using a standard Chernoff bound we have that for any $\delta \in (0,1)$:
\[
Pr[X \leq (1-\delta)\frac{n}{2}] \leq \exp(-n\delta^2/4) \leq 2^{-n\delta^2/4}
\]
So setting $\delta = 2\epsilon$ we get $\log\binom{n}{\frac{n}{2} - \epsilon n} \leq n - \epsilon^2n$.

For a string of weight \emph{at most} $\frac{n}{2} - \epsilon n$, we can append an additional $\log n$ bits specifying the weight $k$ of our string, together with the $n - \epsilon^2n$ bits needed to specify a string of weight exactly $k$, to get the desired result. 
\end{proof}

% We will also utilize efficient permutation codes, which allow us to store bits implicitly within the ordering of a set of numbers:
% \begin{lemma}\label{lem:permutation-codes}\cite{graph-coding}
% Let $y$ be a $t\log t$-bit string, and let $x_1, \ldots, x_t$ be a sequence of distinct integers. There exists a pair of efficiently computable maps $\mathsf{Enc}, \mathsf{Dec}$, such that $\mathsf{Enc}(y)$ outputs a permutation of $x_1, \ldots, x_t$, and $\mathsf{Dec}(\mathsf{Enc}(y))=y$.
% \end{lemma}

\subsection{Hard Truth Tables}
\begin{definition}\label{def:circ-size}
Given a string $x$ of length $N$, we say that $x$ is computed by a circuit of size $s$ if there is a boolean circuit $C$ of fan-in 2 over the basis $\langle \land, \lor, \lnot \rangle$ with $\lceil \log N \rceil$ inputs and $s$ gates, such that $C(i)=x_i$ for all $1 \leq i \leq |x|$. If $N$ is not a power of 2, we put no restriction on the value of $C(i)$ for $i > |x|$.
\end{definition}

\begin{definition}
{\sc Hard Truth Table} is the following search problem: given $1^N$, output a string $x$ of length $N$ such that $x$ is not computed by any circuit of size at most $\ttbound$.
\end{definition}
In the typical case where $N = 2^n$ for some $n$, this is equivalent to finding a truth table for an $n$-input boolean function requiring circuits of size $\varbound$, which is within a $2 + o(1)$ factor of the worst case circuit complexity for any $n$-input boolean function.

\begin{theorem}\label{thm:truth-tables-in-apepp}
{\sc Hard Truth Table} reduces in polynomial time to {\sc Empty}.
\end{theorem}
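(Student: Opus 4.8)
The plan is to exhibit an efficiently computable map $C \colon \{0,1\}^k \to \{0,1\}^N$ with $k < N$ such that every string in the range of $C$ is computed by a circuit of size at most $\ttbound$. Once we have this, any solution to the {\sc Empty} instance defined by $C$ — i.e., any $N$-bit string outside $\mathrm{range}(C)$ — is automatically a string of circuit complexity exceeding $\ttbound$, which is exactly a solution to {\sc Hard Truth Table}. The reduction on input $1^N$ simply outputs this circuit $C$, so the only thing to check is that $C$ has the claimed two properties and can be written down in time $\mathrm{poly}(N)$.

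The natural choice for $C$ is the ``small circuit evaluation'' map: let $s = \ttbound$, and let $C$ take as input an encoding of an arbitrary size-$s$ circuit $D$ on $\lceil \log N\rceil$ inputs, and output the $N$-bit truth table $(D(1), D(2), \ldots, D(N))$. First I would verify that a size-$s$ fan-in-$2$ circuit on $\lceil \log N \rceil$ inputs can be encoded in roughly $O(s \log s)$ bits (each gate names its type and its two predecessors among the $\leq s + \log N$ available wires), so the input length is $k = O(s \log s) = O\!\left(\frac{N}{2\log N}\cdot \log N\right) = O(N)$ — and here the key point is that the hidden constant can be made strictly less than $1$, or more carefully, that for $N$ large enough $k < N$; this is where the precise constant $\frac{1}{2\log N}$ in the definition of {\sc Hard Truth Table} is doing its work, giving us a comfortable factor-$2+o(1)$ of slack against the worst-case circuit size $\approx \frac{N}{\log N}$. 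By property (2) in the paper's general template, every string in $\mathrm{range}(C)$ is by construction the truth table of a size-$s$ circuit, so property (1) holds trivially. Evaluating a given size-$s$ circuit on all $N$ inputs takes $\mathrm{poly}(N)$ time, and producing the description of $C$ from $1^N$ is straightforward, so the whole reduction runs in polynomial time. For small $N$ where the bound $k < N$ might fail we fall back on brute force, as the paper notes is permissible.

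The main obstacle — really the only subtle point — is the bookkeeping on the encoding length: one must choose the circuit encoding scheme carefully enough that $k \leq O(s\log s)$ with a constant small enough (or, handling non-powers-of-$2$ correctly, with enough asymptotic room) to guarantee $k < N$ for all sufficiently large $N$. A clean way to do this is to pad: since the circuit size budget $s = \frac{N}{2\log N}$ is a factor $\approx 2\log N$ below the trivial upper bound $N$, even a wasteful encoding using $O(s \log N) = O(N)$ bits with a constant like $4$ or $5$ still leaves $k < N$ once $\log N$ exceeds that constant. One also needs the minor observation that we may assume $N$ is presented so that $\lceil \log N \rceil$ is known, and that Definition~\ref{def:circ-size}'s convention (no constraint on $C(i)$ for $i > |x|$ when $N$ is not a power of $2$) causes no trouble — it only makes more circuits eligible, which is harmless since we only need an \emph{upper bound} on the range of $C$. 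Everything else is routine.
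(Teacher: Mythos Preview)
Your approach is exactly the paper's: build the map $\Phi$ sending a description of a size-$s$ circuit (with $s=\frac{N}{2\log N}$) to its length-$N$ truth table, and observe that any string outside the range must be hard.

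One correction on the bookkeeping, which you yourself flag as the only subtle point. Your ``clean way'' fallback is wrong: with $s = \frac{N}{2\log N}$ we have $s\log N = \frac{N}{2}$, so an encoding of size $c\cdot s\log N$ with $c=4$ or $5$ gives $k=2N$ or $2.5N$, not $k<N$. The slack is only a factor of $2$, not $2\log N$. What actually works is the tighter $2s\log s + O(s)$ bound (two predecessor indices of $\log s$ bits each, plus $O(1)$ bits per gate for the type), and even that is right at the margin: the paper computes
\[
2s\log s + O(s) \;\le\; \frac{N}{\log N}\log\!\Bigl(\frac{N}{2\log N}\Bigr) + O\!\Bigl(\frac{N}{\log N}\Bigr) \;=\; N - \Omega\!\Bigl(\frac{N\log\log N}{\log N}\Bigr) + O\!\Bigl(\frac{N}{\log N}\Bigr),
\]
which is $<N$ only because of the $\log\log N$ saving. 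So your instinct that the constant matters is right, but the margin is thinner than your fallback suggests; you really do need the $\log s$ (not $\log N$) in the per-gate cost, or equivalently the careful calculation above.
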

\begin{proof}
This proof follows Shannon's classical argument for the existence of functions of high circuit complexity \cite{Shannon}. We construct an instance of {\sc Empty} in the form of a circuit $\Phi$ which maps an encoding of a circuit to its corresponding truth table. $\Phi$ interprets its input as a circuit on $\lceil \log N \rceil$ bits (using an encoding of circuits to be described below), tests its value on every possible input to generate a $2^{\lceil \log N \rceil}$ bit truth table, and then truncates this truth table to be of length exactly $N$.

We now describe the encoding of circuits used by $\Phi$, which will guarantee that any $N$-bit string $x$ with a circuit of size $\frac{N}{2\log N}$ is in the range of $\Phi$. Given a circuit of size $s$ on $\lceil \log N \rceil$ inputs computing $x$, for each of its $s$ gates we can use 2 bits to encode whether it is an $\land, \lor$, or $\lnot$ gate, and an additional $2\log s$ bits to specify its inputs. We can then use an additional $\log s$ bits to specify which gate is the terminal output gate. Overall this requires $2s\log s + O(s)$ bits. It is clear that from such an encoding, $\Phi$ can efficiently decode the represented circuit and test it on all possible input values. For $s \leq \frac{N}{2\log N}$, we have:
\[
2s\log s + O(s) \leq \frac{N}{\log N}\log\left(\frac{N}{2\log N}\right) + O\left(\frac{N}{\log N}\right) = N - \Omega\left(\frac{N\log\log N}{\log N}\right) + O\left(\frac{N}{\log N}\right)
\]

which is strictly less then $N$ for $N$ sufficiently large. So $\Phi$ is indeed a valid instance of {\sc Empty}, and any string outside the range of $\Phi$ is a solution to {\sc Hard Truth Table}. It is also clear from the above description that $\Phi$ can be constructed in $poly(N)$ time.

\end{proof}

A related problem was studied in \cite{total}, referred to there as ``{\sc Complexity}.'' Rather then an explicit construction problem with sparse input, in this problem you are given as input one truth table $x$ of length $N$, and asked to produce another truth table $y$ such that $y$ requires large circuits, even with access to $x$-oracle gates. More formally:

\begin{definition}
The problem {\sc Complexity} is defined as follows: given an $N$-bit string $x$, find another $N$-bit string $y$ such that $y$ requires $x$-oracle circuits of size $\frac{N}{\log^2 N}$.
\end{definition}
Note that the ``$x$ oracle'' is not the typical definition of an oracle gate that can solve arbitrarily sized instances of a fixed language, but rather an oracle for a fixed boolean function on $\log N$ variables. 
In \cite{total} the following is shown:
\begin{theorem}
{\sc Complexity} reduces in polynomial time to {\sc Empty}.
\end{theorem}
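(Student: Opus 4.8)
The plan is to mimic the proof of Theorem~\ref{thm:truth-tables-in-apepp} almost verbatim, the only new ingredients being that we must count \emph{oracle} circuits rather than ordinary ones, and that the circuit we feed to {\sc Empty} will itself depend on the input string $x$. Given an instance $x \in \{0,1\}^N$ of {\sc Complexity}, set $n = \lceil \log N \rceil$ and $s = \frac{N}{\log^2 N}$. I would construct, in $\mathrm{poly}(N)$ time, a circuit $\Phi_x : \{0,1\}^k \to \{0,1\}^N$ that interprets its input as an encoding of a size-$s$ circuit on $n$ inputs which may use $x$-oracle gates, decodes it, evaluates it on all $2^n$ inputs, and truncates the resulting truth table to length exactly $N$. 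Crucially $\Phi_x$ is itself an ordinary (oracle-free) boolean circuit: since $x$ is part of the input to the reduction, we can hard-wire the $N$ bits of $x$ into $\Phi_x$ as constants and realise each $x$-oracle gate as a $(\log N)$-to-$1$ multiplexer over those constants, at a cost of $O(N)$ ordinary gates per oracle gate, so the overall size of $\Phi_x$ is $\mathrm{poly}(N)$ and it is constructible in $\mathrm{poly}(N)$ time.

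The substance of the argument is the encoding. I would encode a size-$s$ oracle circuit as a list of $s$ fixed-width gate-descriptions followed by $\lceil \log s \rceil$ bits naming the output gate, where each block is wide enough to describe an oracle gate: a constant number of bits for the gate type ($\land$, $\lor$, $\lnot$, or oracle) together with the indices of its input wires, of which an oracle gate has $n = \lceil \log N \rceil$, each index costing $\lceil \log(s+n) \rceil \le \log s + O(1)$ bits. This yields a total encoding length of at most
\[
s\bigl(n \lceil \log(s+n)\rceil + O(1)\bigr) + \lceil\log s\rceil \;\le\; \frac{N}{\log N}\log\!\left(\frac{N}{\log^2 N}\right) + O\!\left(\frac{N}{\log^2 N}\right) \;=\; N - \Omega\!\left(\frac{N\log\log N}{\log N}\right),
\]
which is strictly less than $N$ for $N$ sufficiently large. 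Hence $k < N$, so $\Phi_x$ is a legitimate instance of {\sc Empty}, and by construction every $N$-bit string computable by an $x$-oracle circuit of size at most $s$ lies in its range; therefore any string outside the range of $\Phi_x$ is a valid solution to {\sc Complexity} on input $x$. As in Theorem~\ref{thm:truth-tables-in-apepp}, we only need validity for $N$ large and dispatch the finitely many remaining input lengths by brute force.

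The only place where something genuinely new happens relative to Theorem~\ref{thm:truth-tables-in-apepp} is this counting step: an oracle gate has fan-in $\log N$ rather than $2$, so each gate now costs a factor $\Theta(\log N)$ more to describe, and this is precisely why the hardness threshold in {\sc Complexity} is $\frac{N}{\log^2 N}$ rather than $\Theta(\frac{N}{\log N})$ --- the extra $\log N$ in the denominator is exactly what keeps the encoding length below $N$. Beyond getting that estimate right, I expect the main thing to be careful about is the bookkeeping that certifies $\Phi_x$ is an ordinary circuit of polynomial size despite simulating oracle gates on all $2^n$ inputs; the multiplexer implementation above handles this, and the remainder of the argument is a transcription of Shannon's counting bound.
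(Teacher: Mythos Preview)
The paper does not actually prove this theorem; it attributes the result to \cite{total} and states it without proof. Your argument is correct and is exactly the natural one: it transcribes the Shannon counting bound of Theorem~\ref{thm:truth-tables-in-apepp} to $x$-oracle circuits, absorbing the extra $\log N$ fan-in of oracle gates by dropping the size threshold from $\Theta(N/\log N)$ to $N/\log^2 N$, and it correctly observes that $\Phi_x$ remains an ordinary polynomial-size circuit because $x$ is hard-wired. This is presumably the argument in \cite{total} as well, since there is essentially only one way to do it.
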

It should be noted that {\sc Complexity} bears resemblance to a problem studied by Ilango \cite{Ilango}, termed the ``Minimum Oracle Circuit Size Problem,'' or ``MOCSP.'' In this problem, the input consists of two truth tables $x$ and $y$ and a size parameter $s$, and the goal is to determine if $y$ has $x$-oracle circuits of size at most $s$. Ilango demonstrates that MOCSP is {\cl NP}-complete under randomized reductions.

\subsection{Pseudorandom Generators}
\begin{definition}
We will say that a sequence $R = (x_1, \ldots, x_m)$ of $n$-bit strings is a pseudorandom generator if, for all $n$-input circuits of size $n$:
\[
|Pr_{x \sim R}[C(x)=1] - Pr_{y \sim \{0,1\}^n}[C(y)=1]| \leq 1/n
\]
\end{definition}

Standard applications of the probabilistic method show that such pseudorandom generators exist of size polynomial in $n$. Thus we can define the following total search problem:
\begin{definition}
{\sc PRG} is the following search problem: given $1^n$, output a pseudorandom generator $R = (x_1, \ldots, x_m)$, $x_i \in \{0,1\}^n$.
\end{definition}

A polynomial time algorithm for {\sc PRG} would suffice to derandomize {\cl BPP} \cite{NW}. We now show how to formalize the argument for the totality of {\sc PRG} using the empty pigeonhole principle. In particular, we show that a PRG of size $n^6$ can be constructed in {\cl APEPP}.

As noted in the introduction, the results of Impagliazzo and Wigderson \cite{IW} imply that {\sc PRG} reduces directly to {\sc Hard Truth Table}, so a reduction of {\sc PRG} to {\sc Empty} follows from Theorem~\ref{thm:truth-tables-in-apepp}. However, we provide here a much simpler direct proof that {\sc PRG} reduces to {\sc Empty}, relying only on Yao's next bit predictor lemma, and neither the nearly disjoint subsets construction of Nisan and Wigderson \cite{NW} nor the rather involved worst-case to average-case reductions of Impagliazzo and Wigderson \cite{IW}. Together with our completeness result in Section~\ref{sec:completeness}, this gives an alternative, self-contained proof that worst-case-hard truth tables can be used to construct pseudorandom generators (although it yields a weaker result, as our derandomization will require an {\cl NP} oracle).

\begin{theorem}\label{thm:prg}
{\sc PRG} reduces in polynomial time to {\sc EMPTY}
\end{theorem}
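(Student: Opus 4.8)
The plan is to construct an instance of {\sc Empty}, i.e.\ a circuit $C: \{0,1\}^k \to \{0,1\}^N$ with $k < N$, so that every string outside the range of $C$ encodes (a sequence of strings that is) a pseudorandom generator. Here $N = mn$ is the total length needed to write down $m$ strings of length $n$, with $m = n^6$ as promised. The design principle matches the template described at the start of Section~\ref{sec:constructions-in-apepp}: if a candidate sequence $R = (x_1,\ldots,x_m)$ \emph{fails} to be a PRG, then it admits a short description, and $C$ will be exactly the decoding map for such short descriptions.

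First I would unpack the failure condition. If $R$ is not a PRG, there is a size-$n$ circuit $C'$ on $n$ inputs with statistical advantage exceeding $1/n$ between $R$ and uniform. The key tool is Yao's next-bit predictor lemma: a distinguisher with advantage $>1/n$ yields a next-bit predictor — a size-$\mathrm{poly}(n)$ circuit $P$ and an index $j \in \{1,\ldots,n\}$ such that, over a uniformly chosen $x_i$ from $R$, $P(x_{i,1}\cdots x_{i,j-1})$ agrees with $x_{i,j}$ with probability at least $1/2 + 1/n^2$ (roughly; the constants don't matter). The crucial counting step: over the $m = n^6$ strings in $R$, the predictor $P$ and bit position $j$ are \emph{correct on at least an $\tfrac12 + \tfrac{1}{n^2}$ fraction of the indices $i$}. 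That means the length-$m$ indicator string recording, for each $i$, whether $P$ predicts $x_{i,j}$ correctly, is a bit-string of weight at least $\tfrac{m}{2} + \tfrac{m}{n^2}$ (equivalently, its complement has weight at most $\tfrac{m}{2} - \tfrac{m}{n^2}$). By Lemma~\ref{lem:sparse-enc} (applied with a suitable $\epsilon \approx 1/n^2$, so that $\epsilon^2 m = m/n^4 = n^2 = \omega(\log m)$ beats the $\log m$ overhead), this low-weight error pattern can be encoded in $m - \omega(\log m)$ bits — a genuine savings.

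Next I would assemble the encoding that $C$ decodes. Given the losing sequence $R$, a short description consists of: (i) the description of the predictor circuit $P$, which is $\mathrm{poly}(n) \ll m$ bits; (ii) the index $j \le n$, which is $\log n$ bits; (iii) the $\le m - \omega(\log m)$-bit encoding (via Lemma~\ref{lem:sparse-enc}) of the set of indices $i$ on which $P$ errs; and (iv) the \emph{full contents} of every bit of every $x_i$ \emph{except} the $j$-th bit of each $x_i$, together with any prefix bits needed to run $P$ — but those prefix bits are already among the non-$j$-th bits. So we store $m(n-1)$ bits for the non-predicted coordinates. The decoder $C$ then reconstructs each $x_{i,j}$ as follows: compute $P(x_{i,1}\cdots x_{i,j-1})$; if $i$ is flagged as an error index, flip the output; otherwise keep it. This recovers $R$ exactly. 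Total input length is $m(n-1) + (m - \omega(\log m)) + \mathrm{poly}(n) + \log n = mn - \omega(\log m) + \mathrm{poly}(n)$, which for $m = n^6$ and $n$ large is strictly less than $mn = N$, since $\omega(\log m) = \omega(\log n)$ dominates $\mathrm{poly}(n)$ — wait, it does not; I need $\epsilon$ chosen so the savings $\epsilon^2 m$ exceeds $\mathrm{poly}(n)$, which is why $m = n^6$ and a predictor of size $\mathrm{poly}(n)$ with a fixed exponent must be balanced: pick $\epsilon = n^{-c}$ for the right constant so that $\epsilon^2 m = n^{6-2c}$ beats the $\mathrm{poly}(n)$ cost of encoding $P$ and still keeps the advantage $\epsilon$ above the $1/n$ distinguishing threshold after Yao's lemma. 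The circuit $C$ is clearly $\mathrm{poly}(N)$-time constructible.

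The main obstacle I anticipate is the bookkeeping in the last paragraph: making the three quantities line up — Yao's lemma degrades the distinguishing advantage $1/n$ to a prediction advantage of order $1/n^2$ (so $\epsilon \sim 1/n^2$), the savings from Lemma~\ref{lem:sparse-enc} is then $\Theta(\epsilon^2 m) = \Theta(m/n^4)$, and this must exceed the description length of the predictor circuit, which Yao's construction makes of size $O(n^2)$ or so (it's essentially the distinguisher plus some hardwired sampled bits). With $m = n^6$ one gets savings $\Theta(n^2)$ against cost $O(n^2 \log n)$ for the circuit encoding — so in fact the naive choice $m = n^6$ is slightly too small and one wants $m = n^6 \cdot \mathrm{polylog}$ or simply a larger fixed polynomial; I would state the theorem with enough slack (e.g.\ "$R$ of size $\mathrm{poly}(n)$") and choose the polynomial at the end once all constants are visible. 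The conceptual content — failure of pseudorandomness $\Rightarrow$ compressibility via next-bit prediction plus sparse-error encoding $\Rightarrow$ membership in the range of an explicit shrinking map — is robust; only the exponents need care.
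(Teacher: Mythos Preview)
Your approach is essentially identical to the paper's: compress a non-PRG sequence $R$ via Yao's next-bit predictor, storing the $m(n-1)$ non-predicted bits, the predictor circuit, the index, and a sparse encoding (Lemma~\ref{lem:sparse-enc}) of the error pattern. The paper's decoder $\Phi$ is exactly your $C$, with the minor cosmetic difference that the paper lets the predictor see all $n-1$ remaining bits rather than just the prefix.

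Your one worry---that $m=n^6$ may be too small because the predictor costs $O(n^2\log n)$ bits to encode---is misplaced, and this is the only point to fix. Yao's hybrid argument applied to a size-$n$ distinguisher yields a predictor of size $cn$ for a \emph{universal constant} $c$: it is the distinguisher with at most $n$ hardwired advice bits and $O(1)$ extra gates, not size $O(n^2)$. Hence encoding the predictor costs $\tilde{O}(n)$ bits. With prediction advantage $1/n^2$, Lemma~\ref{lem:sparse-enc} (taking $\epsilon=1/n^2$) saves $\epsilon^2 m = n^2$ bits on the error string, and $n^2$ comfortably dominates $\tilde{O}(n)+\log n$. So the input length is $m(n-1)+(m-n^2+O(\log n))+\tilde{O}(n)+\log n = mn - n^2 + \tilde{O}(n) < mn$ for large $n$, and $m=n^6$ works as stated. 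No larger polynomial or polylog slack is needed.
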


\begin{proof}
We start by constructing the following circuit $\Phi$. $\Phi$ interprets its input as representing a list $R^- = (x_1, \ldots, x_{n^6})$ of $n^6$ strings of length $n-1$, a circuit $D$ of size $cn$ (for a fixed universal constant $c$ to be defined later) with $n-1$ input bits and one output bit, a $\log n$-bit index $i \in [n]$, and a string $S$ encoding an $n^6$-bit string with weight at most $\frac{n^6}{2} - n^4$. Given these, $\Phi$ feeds each $n-1$-bit string $x_j \in R^-$ through $D$, then inserts the output as an extra bit at the $i^{th}$ position of $x_j$ to obtain an $n$-bit string $x_j^*$. In the final step, $\Phi$ decodes the $n^6$-bit string represented by $S$, and flips $i^{th}$ bit of $x_j^*$ if and only if the $j^{th}$ position of $S$ has a 1, to obtain $x_j'$. $\Phi$ then outputs $R = (x_1', \ldots, x_{n^6}')$.

It is clear that $\Phi$ can be implemented as a circuit of size polynomial in $n$, and further that $\Phi$ can be constructed in polynomial time given $1^n$. We now show there are fewer inputs then outputs. Note that the input size is equal to the number of bits needed to specify $R^-, D, i, S$, which is $n^7 - n^6 + \Tilde{O}(n) + \log n + bits(S)$, where $bits(S)$ is the number of bits needed to specify $S$. Since $S$ an $n^6$-bit string with weight at most $\frac{n^6}{2} - n^4 = n^6(\frac{1}{2} - \frac{1}{n^2})$, we can apply Lemma~\ref{lem:sparse-enc} with $\epsilon=\frac{1}{n^2}$ to give an encoding for $S$ using $n^6(1-\frac{1}{n^4}) + \log (n^6) = n^6 - n^2 + 6\log n$ bits. Thus the overall number of bits of input is at most $n^7 - n^6 + \Tilde{O}(n) + \log n + n^6 - n^2 + 6\log n = n^7 + \Tilde{O}(n) + 7\log n - n^2$ which is strictly less then the $n^7$ bits of output (for sufficiently large $n$). So this is indeed a polynomial time reduction to a valid instance of {\sc Empty}. It remains to show that any string outside the range of $\Phi$ is a pseudorandom generator.

Let $R$ be a sequence of $n$-bit strings of size $n^6$ which is not a pseudorandom generator. So there exists a circuit $C$ of size $n$ such that:
\[
|Pr_{x \sim R}[C(x)=1] - Pr_{y \sim \{0,1\}^n}[C(y)=1]| > 1/n
\]

For $x \in \{0,1\}^n$, $i \in [n]$, let $x^{-i}$ be the $n-1$-bit string obtained by deleting the $i^{th}$ bit of $x$, and let $x^i$ denote the $i^{th}$ bit of $x$. By Yao's next bit predictor lemma \cite{Yao} \cite{Vadhan}, the previous inequality implies the existence of an index $i \in [n]$ and a circuit $D: \{0,1\}^{n-1} \rightarrow \{0,1\}$ of size $cn$ for some fixed universal constant $c$, such that 
\[
Pr_{x \sim R}[D(x^{-i}) = x^i] > \frac{1}{2} + \frac{1}{n^2}
\]
Since $R$ has size $n^6$, this implies that $D$ correctly guesses the $i^{th}$ bit of $x \in R$ from the other $n-1$ bits for at least $n^6(\frac{1}{2} + \frac{1}{n^2}) = \frac{n^6}{2} + n^4$ of the elements of $R$. So if we let $S$ be the $n^6$-bit string with a 1 at the indices where $D$ guesses the wrong value of $x^i$, we see that the weight of $S$ is at most $\frac{n^6}{2} - n^4$ as required. Taking $R^- = (x_1^{-i}, \ldots, x_{n^6}^{-i})$, we have that from $R^-,D,i,S$ we can efficiently deduce $R$.

So overall, have established that for any $R$ which is not a pseudorandom generator, there exists some $R^-,D,i,S$ such that our circuit $\Phi$ outputs $R$ on input $R^-,D,i,S$. Thus, any string outside the range of $\Phi$ is a pseudorandom generator.
\end{proof}

\subsection{Strongly Explicit Randomness Extractors and Ramsey Graphs}

A $(k,\epsilon)$ two-source extractor with one bit of output is a function $f: \{0,1\}^n \times \{0,1\}^n \rightarrow \{0,1\}$ such that for any pair of distributions $X,Y$ on $\{0,1\}^n$ of min-entropy at least $k$, the value of $f(xy)$ for a random $x,y$ in the product distribution of $X,Y$ is $\epsilon$-close to an unbiased coin flip. By a well-known simplification of \cite{flat-sources}, to show that a function $f$ is a $(k,\epsilon)$ extractor, it suffices to show that it satisfies the above condition for every pair of ``flat'' $k$-sources $X,Y$, which are uniform distributions over subsets of $\{0,1\}^n$ of size $2^k$. We will thus use the following definition of two-source extractors to define our explicit construction problem:
\begin{definition}
We say that a function $f:\{0,1\}^n \times \{0,1\}^n \rightarrow \{0,1\}$ is a $(k,\epsilon)$ extractor if the following holds: for any two sets $X,Y \subseteq \{0,1\}^n$ of size $2^k$, $|Pr_{x \sim X, y \sim Y}[f(xy) = 1] - \frac{1}{2}| \leq \epsilon$.
\end{definition}

\begin{definition}
For any pair of functions $k, \epsilon: \mathbb{N} \rightarrow \mathbb{N}$, {\sc $(k,\epsilon)$-Extractor} is the following search problem: given $1^n$, output a circuit $C$ with $2n$ inputs such that the function $f_C: \{0,1\}^n \times \{0,1\}^n \rightarrow \{0,1\}$ defined by $C$ is a $(k(n),\epsilon(n))$ extractor.
\end{definition}

The above problem definition does not expressly constrain the size of $C$, though for a construction to be ``explicit'' in any useful sense (efficiently computable as a function of $n$), $C$ would have to have size polynomial in $n$. The following reduction placing extractor construction in {\cl APEPP} will immediately imply that we can construct a $(\log n + O(1), \epsilon)$ extractor of circuit size approximately $n^3$ in {\cl APEPP} for any fixed $\epsilon$.

\begin{theorem}\label{thm:ext}
For any efficiently computable $\epsilon(n)$ satisfying $\frac{1}{n^c} < \epsilon(n) < \frac{1}{2}$ for a constant $c$ and sufficiently large $n$, {\sc $(\log n + 2\log(1/\epsilon(n)) + 3, \epsilon(n))$-Extractor} reduces in polynomial time to {\sc Empty}.
\end{theorem}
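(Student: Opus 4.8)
The plan is to follow the template of the previous two theorems: construct a $\mathrm{poly}(n)$-time computable circuit $\Phi$ with strictly more output wires than input wires such that every string outside $\mathrm{range}(\Phi)$ is (a description of) a circuit computing a genuine $(k,\epsilon)$-extractor, where $k=k(n)=\lceil\log n+2\log(1/\epsilon(n))+3\rceil$. The one new ingredient is that the candidate extractors cannot be \emph{arbitrary} functions $f:\{0,1\}^n\times\{0,1\}^n\to\{0,1\}$: a uniformly random such $f$ is a good extractor with overwhelming probability, but it has circuit complexity $\Theta(2^{2n}/n)$, far too large to be a useful output. Instead I restrict to the structured family $f_p(x,y)=\lambda(p(x,y))$, where $p$ ranges over bivariate polynomials of individual degree $<D$ over $\mathbb{F}_{2^n}$, $D:=2^{k(n)}=\Theta(n/\epsilon^2)$, and $\lambda:\mathbb{F}_{2^n}\to\{0,1\}$ is a fixed balanced map (say, the lowest-order bit of the polynomial-basis representation). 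Each $p$ yields a $\mathrm{poly}(n)$-size circuit $C_p$ by hardwiring its $D^2$ coefficients into a fixed bivariate Horner scheme, so the ``object space'' is $\mathbb{F}_{2^n}^{D^2}\cong\{0,1\}^M$ with $M=nD^2$, and a solution to {\sc $(k,\epsilon)$-Extractor} is any coefficient vector $p$ with $f_p$ satisfying the extractor condition.

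The structural fact replacing ``$f$ is uniformly random'' in the probabilistic argument is: for \emph{any} sets $X,Y\subseteq\{0,1\}^n$ of size exactly $D$, the grid-evaluation map $p\mapsto(p(x,y))_{x\in X,\,y\in Y}$ is a bijection $\mathbb{F}_{2^n}^{D^2}\to\mathbb{F}_{2^n}^{D^2}$, since it is the tensor product of two nonsingular $D\times D$ Vandermonde matrices (the entries of $X$, resp.\ of $Y$, are distinct). Hence when $p$ is uniform the $D^2$ extracted bits $\lambda(p(x,y))$ over $X\times Y$ are $D^2$ independent fair coins, so by Hoeffding's inequality the probability that $f_p$ is $\epsilon$-biased on this particular pair is at most $2e^{-2\epsilon^2D^2}$. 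A union bound over the $\le\binom{2^n}{D}^2\le 2^{2Dn}$ pairs, together with $D=\Theta(n/\epsilon^2)$ (with the hidden constant coming from the ``$+3$'' in $k$), drives this sum far below $1$, so the family really does contain $(k,\epsilon)$-extractors; the reduction below is just the constructive form of this counting bound.

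I build $\Phi$ in the Shannon style of Theorem~\ref{thm:truth-tables-in-apepp}. On input it reads: (i) an encoding, via Lemma~\ref{lem:subset-enc}, of a pair $(X,Y)$ of size-$D$ subsets of $\{0,1\}^n$, costing $\le 2\log\binom{2^n}{D}\le 2Dn$ bits; (ii) one ``polarity'' bit $c$; and (iii) a description of the tuple $(p(x,y))_{(x,y)\in X\times Y}\in\mathbb{F}_{2^n}^{D^2}$ that splits off the $D^2$-bit string $\sigma$ of $\lambda$-values — after applying $c$ so that $\sigma$ has weight $\le D^2/2-\epsilon D^2$, this is encoded via Lemma~\ref{lem:sparse-enc} with sparsity parameter $\epsilon$ in $(1-\epsilon^2)D^2+O(\log D)$ bits — from the remaining $(n-1)D^2$ bits. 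From these, $\Phi$ reconstructs all $D^2$ field values $p(x,y)$, solves the nonsingular Vandermonde system for the coefficient vector of $p$, and outputs it. This is $\mathrm{poly}(n)$-time, and $\Phi$ itself is $\mathrm{poly}(n)$-time constructible from $1^n$ since $\epsilon$ is efficiently computable. Its input length is $nD^2-\epsilon^2D^2+2Dn+O(\log D)$, strictly below its output length $M=nD^2$ precisely because $D=\Theta(n/\epsilon^2)$ makes the $\epsilon^2 D^2$ bits saved by sparse-encoding the bias pattern dominate the $2Dn$-bit cost of naming the witness pair. Finally, if $f_p$ is \emph{not} a $(k,\epsilon)$-extractor there is, by definition, a witnessing pair $(X,Y)$ of size $D$ on which $\lambda(p(\cdot,\cdot))$ is $\epsilon$-biased; some choice of $c$ then makes the associated $\sigma$ sparse, so $p\in\mathrm{range}(\Phi)$. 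Contrapositively, every string outside $\mathrm{range}(\Phi)$ is the coefficient vector of a genuine $(k,\epsilon)$-extractor, so we solve {\sc Empty} on $\Phi$ and return $C_p$.

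The main obstacle is isolating a candidate family that is simultaneously (a) computable by $\mathrm{poly}(n)$-size circuits, (b) ``random enough'' that a single biased pair $(X,Y)$ genuinely pins the seed — concretely, that the grid-evaluation map is invertible — and (c) described by a seed short enough that the $\epsilon^2 D^2$ bits saved by sparse-encoding the bias pattern outweigh the $\Theta(Dn)$ bits needed to name the pair. Individual-degree-$<D$ bivariate polynomials over $\mathbb{F}_{2^n}$ thread this needle, and checking that $k=\log n+2\log(1/\epsilon)+3$ is calibrated so that $\epsilon^2 D^2>2Dn$ (equivalently $\epsilon^2 D>2n$) is the crux. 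Minor points are handling the two directions of the bias — absorbed into the single polarity bit — and rounding $k$ and $D=2^{k}$ up to integers when $\log n$ or $\log(1/\epsilon)$ is not integral, which only enlarges the available slack.
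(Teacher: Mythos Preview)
Your proposal is correct and follows essentially the same approach as the paper: parametrize candidate extractors by polynomial coefficients, use a bad flat-source pair to pin down the polynomial via Vandermonde interpolation, and save bits on the biased output pattern via Lemma~\ref{lem:sparse-enc}. The only difference is that the paper uses \emph{univariate} polynomials of degree $<d^2n^2$ over $\mathbb{F}_{2^{2n}}$ (treating the concatenation $xy$ as a single $2n$-bit field element, so the $|X||Y|$ distinct grid points give an ordinary Vandermonde system) rather than your bivariate polynomials of individual degree $<D$ over $\mathbb{F}_{2^n}$ with a tensor-Vandermonde system, but the arithmetic and bit-counting are otherwise identical.
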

\begin{proof}
Let $\epsilon = \epsilon(n)$, and let $d = \lceil \frac{4}{\epsilon^2} \rceil$. 
We will set up an instance of {\sc Empty} with at most $2d^2n^3 + 2dn^2 - \epsilon^2d^2n^2 + 2\log(2dn) + 1$ inputs and exactly $2d^2n^3$ outputs which has the following property: Let $A$ be any $2d^2n^3$-bit string outside the range of this circuit, viewed as an ordered list of $d^2n^2$ elements of $\mathbb{F}_{2^{2n}}$ denoted $\alpha_1, \ldots \alpha_{d^2n^2}$, and consider the function $f: \{0,1\}^{2n} \rightarrow \{0,1\}^{2n}$ defined by
\[
f(x) = \sum\limits_{i=1}^{d^2n^2}\alpha_i x^{i-1}
\]
Then the function $g: \{0,1\}^{2n} \rightarrow \{0,1\}$ defined by 
\[
g(x) = f(x) \mod 2
\]
is a $(\log dn, \epsilon)$ extractor. Since $\log dn = \log n + \log d \leq \log n + \log(4/\epsilon^2 + 1) \leq \log n + 2\log(1/\epsilon) + 3$, this would give the required result.

Let $\alpha_1, \ldots \alpha_{d^2n^2}$ be any sequence of coefficients in $\mathbb{F}_{2^{2n}}$ such that the above function $g$ corresponding to the $\alpha_i$ is not a $(\log dn, \epsilon)$ extractor. So there exist two sets of $n$-bit strings $X,Y$, each of size $2^{\log dn} = dn$, and some $b \in \{0,1\}$ such that $Pr_{x \sim X, y \sim Y}[f(xy) = b] > \frac{1}{2} + \epsilon$. Let $R = \{xy \mid x \in X, y \in Y\} \subseteq \{0,1\}^{2n}$. We have $|R| = |X||Y| = d^2n^2$. Let $r_1, \ldots r_{d^2n^2}$ denote the lexicographical enumeration of $R$. By assumption, we have that $g(r_i) = b \mod 2$ for at least a $\frac{1}{2} + \epsilon$ fraction of indices $i$. So then, if we let $\beta_i$ be the $2n-1$-bit prefix of $f(r_i)$, we can deduce the value of $f(r_i)$ from $\beta_i$ and $b$ for at least $d^2n^2(\frac{1}{2} + \epsilon)$ values of $i$. Thus, there is some $d^2n^2(\frac{1}{2} - \epsilon)$-weight $d^2n^2$-bit string $S$, such that from $b$, $S$, and the $\beta_i$'s, we can deduce $f(r_i)$ for all $i$. Now, once we are able to deduce $f(x)$ for each of the $d^2n^2$ distinct values of $x$ in $R$, since $f$ is a degree $d^2n^2-1$ polynomial, we can uniquely and efficiently determine the coefficients $\alpha_i$ of $f$ using Gaussian elimination on the corresponding $d^2n^2 \times d^2n^2$ Vandermonde matrix.

So we have shown that for any set of $\alpha_i$'s which does not define a $(\log dn, \epsilon)$ extractor, there exists $X,Y,b,S$, and $\beta_i$'s, from which we can efficiently deduce the $\alpha_i$'s. It is clear that we can encode $X,Y$ using $2dn^2$ bits, $b$ using 1 bit, and the $\beta_i$'s using $d^2n^2(2n-1) = 2d^2n^3 - d^2n^2$ bits. Since $S$ is a $d^2n^2(\frac{1}{2} - \epsilon)$-weight $d^2n^2$-bit string, by Lemma~\ref{lem:sparse-enc}, we can encode $S$ using at most $d^2n^2(1 - \epsilon^2) + \log(d^2n^2) = d^2n^2 - \epsilon^2d^2n^2 + \log(d^2n^2)$ bits. So in total the number of input bits is at most:
\[
2dn^2 + 1 + d^2n^2 - \epsilon^2d^2n^2 + \log(d^2n^2) + 2d^2n^3 - d^2n^2 = 2d^2n^3 + (2d-\epsilon^2d^2)n^2 + 2\log(dn) + 1
\]
Since we chose $\frac{4}{\epsilon^2} + 1 \geq d \geq \frac{4}{\epsilon^2}$, we have that$(2d - \epsilon^2d^2) < -1$ for $\epsilon \leq \frac{1}{2}$, so the number of input bits is at most $2d^2n^3 - n^2 + 2\log(dn) + 1$. Since we assumed $\epsilon \geq \frac{1}{n^c}$ for a constant $c$, we have $\epsilon \geq \frac{1}{2^{n^2/5}}$ for sufficiently large $n$, and so $n^2 > 2\log(dn) + 1$, which is strictly less then the number of output bits $2d^2n^3$.

So this implies that the circuit $\Phi$ mapping $X,Y,b,S$ and the $\beta_i$'s to a corresponding set of $\alpha_i$'s is a valid instance of {\sc Empty}, and that any solution to this instance is a set of coefficients defining a $(\log n + 2\log(1/\epsilon) + 3, \epsilon)$ extractor. From the coefficients $\alpha_i$ we can easily construct an efficient circuit computing the function $g   $, thus solving the problem {\sc $(\log n + 2\log(1/\epsilon) + 3, \epsilon)$-Extractor}. It is also clear from the construction that this reduction can be carried out in $poly(n, \frac{1}{\epsilon})$ time, so $poly(n)$ time under our assumption that $\epsilon(n) \geq \frac{1}{n^c}$ for a constant $c$.
\end{proof}

For the typical parameter regime where $\epsilon$ is an arbitrarily small constant, this gives a two source extractor for min-entropy $\log n + O(1)$.

\begin{corollary}
Explicit construction of strongly explicit Ramsey graphs ($n$-vertex graphs containing no clique or independent set of size $c\log n$ for some constant $c$), in both the bipartite and non-bipartite case, reduces to {\sc Empty}.
\end{corollary}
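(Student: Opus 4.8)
The plan is to derive both cases of the corollary from Theorem~\ref{thm:ext} by instantiating it with a fixed small error parameter and then observing that a one–bit two–source extractor for min-entropy $\log n + O(1)$ is, essentially by definition, the adjacency function of an optimal bipartite Ramsey graph; the non-bipartite case then follows by composing with the standard order-based reduction from bipartite to ordinary Ramsey graphs.

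Concretely, I would fix $\epsilon = \frac{1}{4}$ (any constant below $\frac{1}{2}$ works), so that Theorem~\ref{thm:ext} gives a polynomial time reduction from {\sc $(k,\epsilon)$-Extractor} with $k = \log n + O(1)$ to {\sc Empty}, whose solutions are $\mathrm{poly}(n)$-size circuits $C$ on $2n$ inputs computing a $(k,\epsilon)$ extractor $f_C : \{0,1\}^n \times \{0,1\}^n \to \{0,1\}$. Read $C$ as the adjacency circuit of the bipartite graph $H$ with both sides equal to $\{0,1\}^n$, where $N = 2^n$ denotes the number of vertices per side, and $x$ is joined to $y$ iff $f_C(xy) = 1$. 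If $H$ had a monochromatic combinatorial rectangle $X \times Y$ with $|X| = |Y| = 2^k$, then $\mathrm{Pr}_{x \sim X, y \sim Y}[f_C(xy) = 1] \in \{0, 1\}$, contradicting the extractor bound $|\mathrm{Pr}[\cdot] - \frac{1}{2}| \le \epsilon < \frac{1}{2}$; hence $H$ has no monochromatic $2^k \times 2^k$ rectangle, and since $2^k = O(n) = O(\log N)$ this is an $O(\log N)$-bipartite-Ramsey graph, matching the optimal parameters guaranteed by the probabilistic method. Because $C$ has size $\mathrm{poly}(n) = \mathrm{polylog}(N)$ the construction is \emph{strongly} explicit; if $N$ is not a power of two, take the next power of two and restrict the graph to the first $N$ vertices on each side, which only strengthens the Ramsey property.

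For the non-bipartite case I would post-compose with the classical symmetrization: fix any total order on $\{0,1\}^n$ and let $G$ be the graph on $\{0,1\}^n$ in which $\{x, y\}$ with $x < y$ is an edge iff $f_C(xy) = 1$. Given a clique or independent set $S$ of size $2^{k+1}$, split $S$ along the order into its first half $X$ and second half $Y$, each of size $2^k$; for every $x \in X$ and $y \in Y$ we have $x < y$, so the adjacency of $\{x,y\}$ in $G$ equals $f_C(xy)$, whence $X \times Y$ is monochromatic for $f_C$, again a contradiction. Thus $G$ is $2^{k+1} = O(\log N)$-Ramsey, and its adjacency relation is computed by $C$ together with an $O(n)$-size comparator, so it too is a strongly explicit construction. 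Prepending the (polynomial time) post-processing ``extract the coefficients, build $C$, then build the Ramsey adjacency circuit'' to the reduction of Theorem~\ref{thm:ext} yields the claimed polynomial time reduction from strongly explicit $O(\log N)$-Ramsey graph construction --- bipartite or ordinary --- to {\sc Empty}.

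There is no real obstacle here; the argument is entirely Theorem~\ref{thm:ext} plus bookkeeping. The only points needing care are confirming that min-entropy $\log n + O(1)$ on a universe of size $2^n$ yields the optimal $O(\log N)$ Ramsey bound for $N = 2^n$ vertices, and spelling out the routine facts that restricting an extractor to a sub-rectangle preserves its no-monochromatic-rectangle property and that the order trick turns a $K$-bipartite-Ramsey graph into a $2K$-Ramsey graph.
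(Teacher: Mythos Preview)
Your proposal is correct and follows essentially the same route as the paper: the paper's proof simply cites \cite{BRSW} for the two facts you spell out in detail, namely that a two-source extractor with any constant error $\epsilon<\tfrac{1}{2}$ is automatically a bipartite Ramsey graph (via the monochromatic-rectangle contradiction) and that a strongly explicit bipartite Ramsey graph can be efficiently converted to a strongly explicit non-bipartite one. Your ordering-based symmetrization is exactly the standard conversion that the paper defers to the citation.
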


\begin{proof}
As noted in \cite{BRSW}, any two-source extractor in the above sense (with $\epsilon$ fixed to any constant less then one half) is automatically a bipartite Ramsey graph, and from a strongly explicit bipartite Ramsey graph we can construct a strongly explicit non-bipartite Ramsey graph efficiently.
\end{proof}

\subsection{Rigid Matrices}
\begin{definition}\cite{Valiant}
We say that $n \times n$ matrix $M$ over $\mathbb{F}_q$ is $(r, s)$ rigid if for any matrix $S \in \mathbb{F}_q^{n \times n}$ with at most $s$ non-zero entries, $M+S$ has rank greater than $r$.
\end{definition}

\begin{definition}
For any $q: \mathbb{N} \rightarrow \mathbb{N}$ such that $q(n)$ is a prime power $\forall n$, {\sc $(\epsilon, q)$-Rigid} is the following search problem: given $1^n$, output an $n \times n$ matrix $M$ over $\mathbb{F}_{q(n)}$ which is $(\epsilon n, \epsilon n^2)$ rigid.
\end{definition}

\begin{theorem}\label{thm:rigid-in-apepp}
For any $\epsilon \leq \frac{1}{16}$, and any efficiently computable $q(n)$ satisfying the above, {\sc $(\epsilon, q)$-Rigid} reduces in polynomial time to {\sc Empty}.
\end{theorem}

\begin{proof}
Let $M$ be any $n \times n$ matrix over $\mathbb{F}_q$ which is not $(r,s)$ rigid. So there exists an $n \times r$ matrix $L$, an $r \times n$ matrix $R$, and an $n \times n$ matrix $S$ with at most $s$ non-zero entries, such that $M = LR + S$. It is clear that from the descriptions of $L,R,S$ we can efficiently compute $M$.

$L$ and $R$ can each be described explicitly using $nr \log q$ bits. For $S$, we encode it by specifying an $n^2$-bit string $T$ of weight $s$ denoting the entries of $S$ which are nonzero, together with an $s \log q$-bit string giving the values of the nonzero entries. Applying the encoding scheme in Lemma~\ref{lem:subset-enc} for $T$, overall the number of bits in this encoding is at most $\log\binom{n^2}{s} + (2nr + s)\log q$. Setting $r = \epsilon n$ and $s= \epsilon n^2$ this is at most:
\begin{gather*}
\log\binom{n^2}{\epsilon n} + (3\epsilon n^2)\log q
 \leq \\
(\frac{3}{4} + \epsilon - \epsilon^2)n^2 + (3 \epsilon n^2)\log q \leq \\
(\frac{3}{4} + \epsilon - \epsilon^2 + 3\epsilon)n^2\log q
\end{gather*}
Since we chose $\epsilon \leq \frac{1}{16}$, this is at most $\frac{997}{1000}n^2\log q$, which is strictly less then the $n^2\log q$ bits needed to specify an arbitrary matrix in $\mathbb{F}_q^{n \times n}$.
\end{proof}

\subsection{Strings of High Time-Bounded Kolmogorov Complexity}

\begin{definition}
Let $U$ be any fixed Turing machine, and let $t: \mathbb{N} \rightarrow \mathbb{N}$ be a time bound. For a string $x$, $K_U^t(x)$ denotes the length of the smallest string $y$ such that $U$ outputs $x$ on input $y$ in $t(|x|)$ steps. 
\end{definition}

\begin{definition}
For a Turing machine $U$ and time bound $t$, we define the following explicit construction problem {\sc $K_U^t$-Random}: given $1^n$, output an $n$ bit string $x$ such that $K_U^t(x) \geq n - 1$
\end{definition}

\begin{theorem}\label{thm:kt-random}
For any fixed Turing machine $U$ and fixed polynomial time bound $t$, {\sc $K_U^t$-Random} reduces in polynomial time to {\sc Empty}.
\end{theorem}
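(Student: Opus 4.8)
The plan is to mimic the structure of all the preceding reductions in this section: exhibit an efficiently computable, length-decreasing map $\Phi : \{0,1\}^{k} \to \{0,1\}^{n}$ with $k < n$ such that every $n$-bit string of small $K_U^t$-complexity lies in the range of $\Phi$. Then any string outside the range of the resulting instance of {\sc Empty} automatically has $K_U^t(x) \ge n-1$, which is exactly a solution to {\sc $K_U^t$-Random}.

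The construction of $\Phi$ is essentially immediate from the definition of $K_U^t$. On input a string $y$ of length $n-1$, interpret $y$ as a program for $U$, run $U(y)$ for $t(n)$ steps, and output whatever $U$ produces, truncated or padded to exactly $n$ bits (say padded with zeros if $U$ halts with a shorter string or fails to halt in time). Since $t$ is a fixed polynomial, $U(y)$ can be simulated in $\mathrm{poly}(n)$ time, so $\Phi$ is computable in polynomial time and, being a uniform simulation, can be written down as a $\mathrm{poly}(n)$-size circuit given $1^n$ — here I would invoke the standard fact that a fixed $t(n)$-time Turing machine computation on length-$n$ inputs is simulated by a circuit of size $\mathrm{poly}(t(n)) = \mathrm{poly}(n)$, which is what the paper's notion of polynomial-time reduction to {\sc Empty} requires. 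The map has $n-1$ input bits and $n$ output bits, so it is a syntactically valid instance of {\sc Empty}.

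For correctness: if $x$ is an $n$-bit string with $K_U^t(x) \le n-1$, then by definition there is a string $y$ with $|y| \le n-1$ such that $U(y) = x$ within $t(|x|) = t(n)$ steps. Pad $y$ up to length exactly $n-1$ in a way the simulator can undo (e.g. prepend a $1$ followed by zeros, or just note that $\Phi$ can itself first strip a canonical padding), so that $\Phi$ on the padded input reproduces $x$ exactly; hence $x$ is in the range of $\Phi$. Conversely, any $x \notin \mathrm{range}(\Phi)$ has no description of length $\le n-1$ running in time $t(n)$, i.e. $K_U^t(x) \ge n$, which certainly satisfies $K_U^t(x) \ge n-1$. (One could even shave the instance to exactly match the ``$\ge n-1$'' bound by using $n-1$ input bits; the slack is harmless.)

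The only mild subtlety — and the one place I would be careful — is the encoding/padding bookkeeping needed to make $\Phi$ hit \emph{all} short descriptions with an input alphabet of a fixed size $n-1$: descriptions of length $< n-1$ must be embeddable into length-$(n-1)$ inputs in a way $\Phi$ can decode before simulating $U$. This is routine (prefix-free padding, or letting $\Phi$ read off the true description length from a short header), and costs only an additive $O(\log n)$ which is absorbed since we already have a full bit of slack between $n-1$ and $n$. There is no real obstacle here; the theorem is a direct consequence of the definition of time-bounded Kolmogorov complexity together with the observation that a fixed polynomial-time machine is simulable by a polynomial-size circuit.
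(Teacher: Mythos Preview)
Your proposal is correct and essentially identical to the paper's proof: both build a circuit $\Phi:\{0,1\}^{n-1}\to\{0,1\}^n$ that decodes a self-delimiting padding, simulates $U$ for $t(n)$ steps, and pads/truncates the output, so that every $x$ with $K_U^t(x)\le n-2$ lies in the range. The paper simply fixes the padding convention you left as ``routine'' to be inputs of the form $0^{*}1y$, and correspondingly states the conclusion as $K_U^t(x)\ge n-1$ (your intermediate claim of $\ge n$ is slightly off once a bit is spent on the delimiter, but your final stated bound is the right one).
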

\begin{proof}
We construct a circuit $\Phi$ with $n-1$ input bits and $n$ output bits as follows. If the input is of the form $0^*1y$, $\Phi$ simulates $U$ on $y$ for $t(n)$ steps and outputs the result, padded/truncated to length $n$ if it is not already. Otherwise, $\Phi$ outputs $0^n$. It is clear that $\Phi$ can be produced in polynomial time given $1^n$ for any fixed $U,t$. Further, for any $n$-bit string $x$ with $K_U^t(x) \leq n - 2$, there is some $y$ of length $k \leq n-2$ such that $U$ outputs $x$ on input $y$ in $t(n)$ steps, and so $\Phi$ will output $x$ on input $0^{n-k-2}1y$. So any string $x$ outside the range of $\Phi$ must have $K_U^t(x) \geq n - 1$.
\end{proof}

\subsection{Other Problems}
In Section~\ref{sec:direct-reductions}, we introduce two more explicit construction problems and show that each of these, in addition to a variant of the rigidity problem, can be reduced directly to {\sc Hard Truth Table} in polynomial time. This also implies that both problems are contained in {\cl APEPP}. We will  postpone a formal definition of each of these new problems until Section~\ref{sec:direct-reductions}, but give an informal statement here for completeness:

\paragraph{Explicit communication problems outside ${\cl PSPACE^{CC}}$:} An explicit $2^n \times 2^n$ communication matrix which cannot by solved by any $o(n)$-space protocol can be constructed in  {\cl APEPP}.

\paragraph{Explicit data structure problems with high bit-probe complexity:} An explicit data structure problem with nearly maximum complexity in the bit-probe model can be constructed in {\cl APEPP}.

% \paragraph{Dense, polynomial time recognizable properties} It should also be noted that there is another natural class of explicit construction problems known to reduce to {\sc PRG}. This class consists of any explicit construction problem of the following form: there is some propery $\Pi: \{0,1\}^* \rightarrow \{0,1\}$ satisfying the following conditions:
% \begin{enumerate}
%     \item $\Pi$ is decidable in polynomial time
%     \item For sufficiently large $n$, $\Pi(x)=1$ for at least [??] of $n$-bit strings.
% \end{enumerate}
% The corresponding explicit construction problem is then: given $1^n$, output an $n$-bit string $x$ such that $\Pi(x) = 1$. As noted in \cite{explicit-constructions}, any such problem reduces directly to {\sc PRG}. This class is notably weaker then 
% We thus have:

\section{Constructing Hard Truth Tables is Complete for {\cl APEPP}}\label{sec:completeness}

In this section we show that constructing a hard truth table is complete for {\cl APEPP} under ${\cl P}^{\cl NP}$ reductions. As mentioned before, the core of this theorem was originally proven by Je\v{r}\'{a}bek \cite{Jerabek}, and the main construction underlying the reduction dates back further to the work of Goldreich, Goldwasser, and Micali \cite{GGM}.  Je\v{r}\'{a}bek's result is phrased in the language of proof complexity, stating that the theorem asserting the existence of hard boolean functions is \emph{equivalent} to the empty pigeonhole principle in a particular theory of Bounded Arithmetic. We demonstrate below that when translated to the language of search problems and explicit constructions, his proof yields a ${\cl P}^{\cl NP}$ reduction from {\sc Empty} to the problem of constructing a hard truth table. We in fact prove a more general statement here which holds for arbitrary circuit classes equipped with oracle gates. For a very broad set of circuit classes $\mathcal{C}$, we prove that given a truth table which is hard for $\mathcal{C}$-circuits, we can find an empty pigeonhole of any $\mathcal{C}$-circuit \emph{using polynomially many calls to an oracle for inverting $\mathcal{C}$-circuits} (by inverting we mean finding the preimage of a given string, or reporting that none exist). In order to make this precise, we first define the type of generalized circuit classes we will consider:

\begin{definition}
A circuit class is defined by a basis $\mathcal{C}$, which is simply a (possibly infinite) set of boolean-valued boolean functions. A $\mathcal{C}$-circuit is then defined as a circuit composed entirely of gates computing functions in the basis $\mathcal{C}$. For a language $L$, we will refer to the ``basis $L$'' to mean the basis $\{L_n \mid n \in \mathbb{N}\} \cup \{\land,\lor,\lnot\}$, where $L_n$ is the $n$-bit boolean function deciding $L$ on length $n$ inputs. For a complexity class {\cl C} with a complete language $L$, we will refer to ``the basis {\cl C}'' to mean the basis corresponding to $L$.

We say that a basis $\mathcal{C}$ is ``sufficiently strong'' if there exist $\mathcal{C}$-circuits computing the two-input $\land, \lor$ functions and the one-input $\lnot$ function.

For any basis $\mathcal{C}$, a ``$\mathcal{C}$-inverter oracle'' is an oracle which, given a $\mathcal{C}$-circuit $C$ and some string $y$, determines whether there exists an $x$ such that $C(x)=y$, and produces such an $x$ if it exists. A $\mathcal{C}$-inverter reduction is a polynomial time reduction that uses a $\mathcal{C}$-inverter oracle.
\end{definition}

Note that in the special case where $\mathcal{C} = \{\land, \lor, \lnot\}$, a $\mathcal{C}$-inverter reduction is equivalent to a ${\cl P}^{\cl NP}$ reduction (since inverting a circuit over the standard boolean basis is {\cl NP}-complete).

\begin{definition}
For any basis $\mathcal{C}$, the class of search problems {\cl APEPP$^{\mathcal{C}}$} is defined by the following complete problem {\sc EMPTY$^{\mathcal{C}}$}: given a $\mathcal{C}$-circuit with more output wires than input wires, find a boolean string whose length is equal to the number of output wires but which is not in the range of this circuit. For any strictly increasing function $f: \mathbb{N} \rightarrow \mathbb{N}$, we define the problem {\sc $EMPTY^{\mathcal{C}}_{f(n)}$}, which is the special case of {\sc EMPTY$^{\mathcal{C}}$} where the circuit is required to have $f(n)$ output wires, where $n$ is the number of input wires.
\end{definition}

We start with the following technical lemma, which allows us to restrict our attention to circuits with exactly twice as many outputs as inputs.

\begin{lemma}\label{lem:swapping-sizes}
For any basis $\mathcal{C}$, {\sc $EMPTY_{2n}^{\mathcal{C}}$} is complete for ${\cl APEPP^{\mathcal{C}}}$ under $\mathcal{C}$-inverter reductions.
\end{lemma}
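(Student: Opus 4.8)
The goal is to show that the variant of {\sc Empty} in which the circuit has exactly $2n$ outputs is already as hard as the general version. The plan is to give $\mathcal{C}$-inverter reductions in both directions; one direction is trivial (any instance of {\sc $EMPTY^{2n}_{\mathcal{C}}$} is in particular an instance of {\sc EMPTY$_{\mathcal{C}}$}, so it reduces to {\sc EMPTY$_{\mathcal{C}}$}), so the content is reducing {\sc EMPTY$_{\mathcal{C}}$} to {\sc $EMPTY^{2n}_{\mathcal{C}}$}. So fix an arbitrary $\mathcal{C}$-circuit $C:\{0,1\}^n\to\{0,1\}^m$ with $m\geq n+1$; we must produce, using polynomially many calls to a $\mathcal{C}$-inverter oracle, a string outside the range of $C$.

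The first step is a padding/amplification move to go from the weak stretch $m\geq n+1$ to a large stretch, say doubling. Given $C$ with stretch $m-n\geq 1$, consider the circuit that applies $C$ in parallel to $t$ disjoint blocks of $n$ inputs each, producing $tm$ output bits from $tn$ input bits; its stretch is $t(m-n)\geq t$, so for $t$ large enough (polynomial in $n$, e.g. $t=n$) the output length is at least twice the input length. A string outside the range of this product circuit need not decompose into blocks each outside the range of $C$ — but if $(y_1,\dots,y_t)$ lies outside the range of the product circuit, then \emph{some} block $y_j$ lies outside the range of $C$, and we can identify which one using $t$ calls to the $\mathcal{C}$-inverter oracle (ask, for each $j$, whether $y_j$ has a $C$-preimage; at least one answer is ``no''). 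This reduces the general case to the case $m\geq 2n$. To go from $m\geq 2n$ down to exactly $m=2n$: truncating outputs is the wrong direction (a string outside the range of the truncation need not lift back), so instead, given $C:\{0,1\}^n\to\{0,1\}^m$ with $m\geq 2n$, pad the \emph{input}: define $C':\{0,1\}^{m/2}\to\{0,1\}^m$ (assuming $m$ even; otherwise handle the parity by appending one extra fixed output bit) by $C'(x,w)=C(x)$ for $x\in\{0,1\}^n$ and $w\in\{0,1\}^{m/2-n}$ an ignored pad. Then $C'$ has exactly twice as many outputs as inputs, its range equals the range of $C$, and $C'$ is a $\mathcal{C}$-circuit provided $\mathcal{C}$ is sufficiently strong (the ignored pad wires contribute nothing, or can be absorbed using an $\land$/$\lor$ with a constant built from the strong basis). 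Thus a solution to the {\sc $EMPTY^{2n}_{\mathcal{C}}$} instance $C'$ is directly a solution for $C$.

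Chaining these two reductions gives the result: from an arbitrary {\sc EMPTY$_{\mathcal{C}}$} instance $C$ with stretch $\geq 1$, first form the $t$-fold product to get stretch $\geq$ doubling, then input-pad to get exactly-doubling, invoke the {\sc $EMPTY^{2n}_{\mathcal{C}}$} oracle on the padded circuit, read off a block outside the range of the product circuit, and then use $t$ inverter queries to pick out a single block $y_j$ outside the range of $C$. All steps are polynomial time, all $\mathcal{C}$-circuit constructions preserve membership in the class when $\mathcal{C}$ is sufficiently strong, and the only oracle calls are inverter queries.

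The main obstacle I anticipate is purely bookkeeping rather than conceptual: making sure the transformations genuinely stay within the class of $\mathcal{C}$-circuits (this is where ``sufficiently strong'' is used — to build the fan-in-two gates and any constants needed for padding), and handling parity/divisibility nuisances (when $m$ is odd, or when $m-n$ does not divide evenly) by appending a small constant number of dummy output bits set to fixed values, which changes neither the reduction's correctness nor its complexity. One should also double-check that in the product-circuit step, the inverter oracle is being applied to the \emph{original} circuit $C$ (single block, size polynomial in $n$), not to the product, so each of the $t$ queries is a legitimate polynomial-size $\mathcal{C}$-circuit-inversion query.
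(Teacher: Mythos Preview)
Your argument has a genuine gap in the amplification step. You claim that applying $C$ in parallel to $t$ blocks yields a circuit whose output length is at least twice its input length once $t$ is large enough. But the parallel product maps $\{0,1\}^{tn}\to\{0,1\}^{tm}$, and the ratio of output to input is $tm/(tn)=m/n$, \emph{independent of $t$}. Parallel composition increases the \emph{additive} stretch to $t(m-n)$, but it leaves the \emph{multiplicative} stretch fixed at $m/n$; in particular, if $m=n+1$ you never reach $m'\ge 2n'$ this way, no matter how many copies you take. So your ``reduce to $m\ge 2n$'' step fails, and the subsequent input-padding trick (which is itself fine) never gets off the ground.

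The paper handles this with \emph{sequential} rather than parallel composition. It first truncates to the $m=n+1$ case (a string outside the truncated range, arbitrarily padded, is outside the original range). Then, to amplify $n+1$ to $2n$, it builds a telescoping chain: for each $i$, let $C^i:\{0,1\}^{(n+i)n}\to\{0,1\}^{(n+i+1)n}$ apply $C$ to $n+i$ length-$n$ blocks and drop $i$ trailing bits; composing $C^{n-1}\circ\cdots\circ C^0$ gives $C^*:\{0,1\}^{n^2}\to\{0,1\}^{2n^2}$, an honest {\sc $EMPTY^{2n}_{\mathcal{C}}$} instance. Recovering a non-range element of $C$ from a non-range element $y^*$ of $C^*$ then uses the inverter oracle to peel back the layers one at a time (exactly the block-by-block search you described, but applied at each level of the composition), and must succeed at some layer since $y^*$ has no $C^*$-preimage. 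The essential point you are missing is that only \emph{iterated} application of $C$ can push the output/input ratio past $2$; a single layer of parallel copies cannot.
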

\begin{proof}
It is straightforward to see that {\sc $EMPTY_{n+1}^{\mathcal{C}}$} is complete for {\cl APEPP}$^{\mathcal{C}}$: given a $\mathcal{C}$-circuit $C$ with $n$ inputs and more than $n$ outputs, we can simply delete all the output bits except for the first $n+1$ to obtain an instant of {\sc $EMPTY_{n+1}^{\mathcal{C}}$}. Any $n+1$-bit string outside the range of this smaller circuit can than be padded arbitrarily to the output size of the original circuit, and this padded string must also be outside the range of $C$.

We now reduce {\sc $EMPTY_{n+1}^{\mathcal{C}}$} to {\sc $EMPTY_{2n}^{\mathcal{C}}$}. Let $C$ be some $\mathcal{C}$-circuit with $n$ inputs and $n+1$ outputs. For a positive integer $i$, we define $C^i: \{0,1\}^n \rightarrow \{0,1\}^{n + i}$ inductively as follows. For the base case, $C^1$ is simply defined as $C$. For $i > 1$, we first compute $C^{i-1}$ on the input $x$ to obtain a string $x'$ of length $n + i - 1$, and then compute $C$ on the first $n$ bits of $x'$ and concatenate the output with the remaining bits of $x'$. Since $C$ replaces the first $n$ bits with $n+1$ bits, if $C^{i-1}$ has output length $n + i-1$ then $C^i$ has output length $n + i$.

We now claim that for any positive $i$, if we are given an $n+i$-bit string outside the range of $C^i$, we can find an $n$-bit string outside the range of $C$ in $poly(i|C|)$ time using a $\mathcal{C}$-inverter. In particular, we prove by induction on $i$ that this can be accomplished with $i$ inverter calls. For $i=1$ this is trivially the case, since $C^1=C$. Now say $i > 1$. Let $y \in xz$ be outside the range of $C^i$ for some $x \in \{0,1\}^{n+1}$, $z \in \{0,1\}^{i-1}$. We first use the inverter oracle to determine if $x$ has a preimage under $C$; if it does not then we have found an empty pigeonhole for $C$ and are done. Otherwise, we use the inverter to find a preimage $x' \in \{0,1\}^n$ for $x$ under $C$. So then $x'z$ must be outside the range of $C^{i-1}$, since if it were not then $xz=y$ would have a preimage under $C^i$, contradicting our initial assumption. So by induction we can use $i-1$ inverter calls to find an empty pigeonhole for $C$ given $x'z$, completing the proof of the inductive case.

Now, setting $i = n$, we get a reduction from {\sc $EMPTY_{n+1}^{\mathcal{C}}$} to {\sc $EMPTY_{2n}^{\mathcal{C}}$} as claimed.

\end{proof}
We now define the hard truth table construction problem that will be used in our reduction:
\begin{definition}
Let {\sc $\epsilon$-Hard}$^{\mathcal{C}}$ denote the following search problem: given $1^N$, output a string $x$ of length $N$ such that $x$ cannot be computed by $\mathcal{C}$-circuits of size $N^\epsilon$.
\end{definition}
In the case where $\mathcal{C} =\{\land,\lor,\lnot\}$, we drop the subscript and refer to this problem simply as {\sc $\epsilon$-Hard}.
For $N = 2^n$, a solution to {\sc $\epsilon$-Hard} on input $1^N$ is a truth table of a function on $n$ variables requiring $2^{\epsilon n}$-sized circuits, the same object used to build the Impagliazzo-Wigderson generator.

\begin{theorem}\label{thm:epsilon-hard-reduction}
Let $\mathcal{C}$ be any sufficiently strong basis and $\epsilon > 0$ be a constant such that {\sc $\epsilon$-Hard}$^{\mathcal{C}}$ is total for sufficiently large input lengths. Then {\sc Empty}$^{\mathcal{C}}$ reduces in polynomial time to {\sc $\epsilon$-Hard}$^{\mathcal{C}}$ under $\mathcal{C}$-inverter reductions.
\end{theorem}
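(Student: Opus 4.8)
The plan is to carry out the Goldreich--Goldwasser--Micali tree construction sketched in the introduction, made precise for an arbitrary sufficiently strong basis $\mathcal{C}$. By Lemma~\ref{lem:swapping-sizes} it suffices to give a $\mathcal{C}$-inverter reduction from {\sc $EMPTY^{2n}_{\mathcal{C}}$} to {\sc $\epsilon$-Hard}$_{\mathcal{C}}$; composing with that lemma then yields the theorem. Since explicit constructions are trivial by brute force at any fixed input length, we may assume the given $\mathcal{C}$-circuit $C:\{0,1\}^n \to \{0,1\}^{2n}$, and hence $n$, is as large as needed. Write $C(z) = (C_0(z), C_1(z))$ with $C_b:\{0,1\}^n\to\{0,1\}^n$, and for $s \in \{0,1\}^k$ set $C_s = C_{s_k}\circ\cdots\circ C_{s_1}$. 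Put $k = \lceil \tfrac{2}{\epsilon}\log|C|\rceil$ and $m = n2^k$; note $m = \mathrm{poly}(|C|)$ since $\epsilon$ is constant. Define the $\mathcal{C}$-circuit $C^*:\{0,1\}^n\to\{0,1\}^m$ by $C^*(x) = (C_s(x))_{s\in\{0,1\}^k}$, i.e. the labels of the leaves of the depth-$k$ full binary tree, concatenated in lexicographic order of $s$; this is constructible in polynomial time and has size $\mathrm{poly}(|C|)$. The reduction maps $C$ to $1^m$; totality of {\sc $\epsilon$-Hard}$_{\mathcal{C}}$ at length $m$ holds because $m$ is large. It remains to post-process a solution.

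The argument rests on two properties of $C^*$. First, \emph{every string in the range of $C^*$ is computed by a $\mathcal{C}$-circuit of size at most $m^{\epsilon}$}: given $y = C^*(w)$, build a circuit that decodes its input index $i$ into a leaf $s\in\{0,1\}^k$ and an offset $j\in[n]$, then, starting from the hardwired constant $w$, iterates $C$ along the path $s$ --- at level $\ell$ it applies $C$ to the current $n$-bit block and uses an $n$-bit multiplexer controlled by $s_\ell$ to pick $C_0$ or $C_1$ of the result --- and finally selects bit $j$ of the leaf block with one more multiplexer. Because $\mathcal{C}$ is sufficiently strong it simulates $\land,\lor,\lnot$, and hence multiplexers and index decoding, with constant-factor overhead, so the circuit has size $O(k|C|)$, which is below $2^{\epsilon k}\le m^{\epsilon}$ for the chosen $k$ once $|C|$ is large. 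In particular the hard string $x$ returned by the oracle lies outside the range of $C^*$.

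Second, \emph{from any $y^*\notin\mathrm{range}(C^*)$ a $\mathcal{C}$-inverter oracle extracts a $2n$-bit string outside the range of $C$}. Treat $y^*$ as a labelling of the leaves of the depth-$k$ tree and fill the tree bottom-up: for $\ell = k, k-1,\ldots,1$ and each internal node $t\in\{0,1\}^{\ell-1}$, call the $\mathcal{C}$-inverter on $C$ with the $2n$-bit target formed by the current labels of $t$'s two children; if it returns a preimage, use it as the label of $t$, and if it reports that none exists, halt and output that target. This makes $2^k-1 = O(m/n)$ oracle calls, polynomial in $|C|$. Were the process to run to completion, the resulting root label $x'$ would satisfy $C_s(x') = y^*_s$ for every leaf $s$ (by induction down the tree), i.e. $C^*(x') = y^*$, contradicting $y^*\notin\mathrm{range}(C^*)$; hence the process halts and produces a $2n$-bit string outside the range of $C$. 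Applying this to the hard string $x$ --- which is outside $\mathrm{range}(C^*)$ by the first property --- solves {\sc $EMPTY^{2n}_{\mathcal{C}}$} on $C$, completing the reduction.

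I expect the second property to be the delicate step: one must verify that a run of the bottom-up inversion in which \emph{every} oracle call succeeds really does reconstruct a genuine preimage of $y^*$ under $C^*$ (so that a failure at some node is forced), and confirm that both the number of oracle calls and the blow-up from $n$ to $m$ stay polynomial. The size accounting for the first property --- encoding $C^*$ as a genuine $\mathcal{C}$-circuit, checking $O(k|C|) \le m^{\epsilon}$ for the chosen parameters, and the mild care needed when $n$ or $m$ is not a power of two (absorbed by the slack in Definition~\ref{def:circ-size}) --- is routine but should be written out carefully.
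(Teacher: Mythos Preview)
Your proposal is correct and follows essentially the same route as the paper: reduce to {\sc $EMPTY^{2n}_{\mathcal{C}}$} via Lemma~\ref{lem:swapping-sizes}, build the depth-$k$ GGM tree $C^*$ with $m=n2^k$ and $k=\Theta(\tfrac{1}{\epsilon}\log|C|)$, argue every range element has a $\mathcal{C}$-circuit of size $O(k|C|)<m^\epsilon$ via the path-following multiplexer circuit, and then peel the tree back layer by layer with the $\mathcal{C}$-inverter to locate a $2n$-bit non-image of $C$. The only cosmetic differences are your slightly tighter choice $k=\lceil\tfrac{2}{\epsilon}\log|C|\rceil$ versus the paper's $k=2\lceil\log|C|\rceil\lceil\tfrac{1}{\epsilon}\rceil$, and your node-by-node halting versus the paper's layer-by-layer sweep; neither affects the argument.
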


\begin{proof}
By Lemma~\ref{lem:swapping-sizes} we know that {\sc Empty}$^{\mathcal{C}}$ reduces to {\sc EMPTY}$_{2n}^{\mathcal{C}}$ under $\mathcal{C}$-inverter reductions. Now, let $C$ be an instance of {\sc EMPTY}$_{2n}^{\mathcal{C}}$, and let $k = 2 \lceil \log |C| \rceil \lceil \frac{1}{\epsilon}\rceil$. Consider the following map $C^*: \{0,1\}^n \rightarrow \{0,1\}^{2^kn}$, defined informally as follows: given a string $x \in \{0,1\}^n$, apply $C$ once to get 2 $n$-bit strings, then apply $C$ to both of those $n$-bit strings to get four, and continue $k$ times until we have $2^k$ $n$-bit strings, or equivalently a $2^kn$-bit string. This process is illustrated in Figure~\ref{fig:circuit_tree} below:

\begin{figure}[H]
    \centering
    \includegraphics[width=0.7\textwidth]{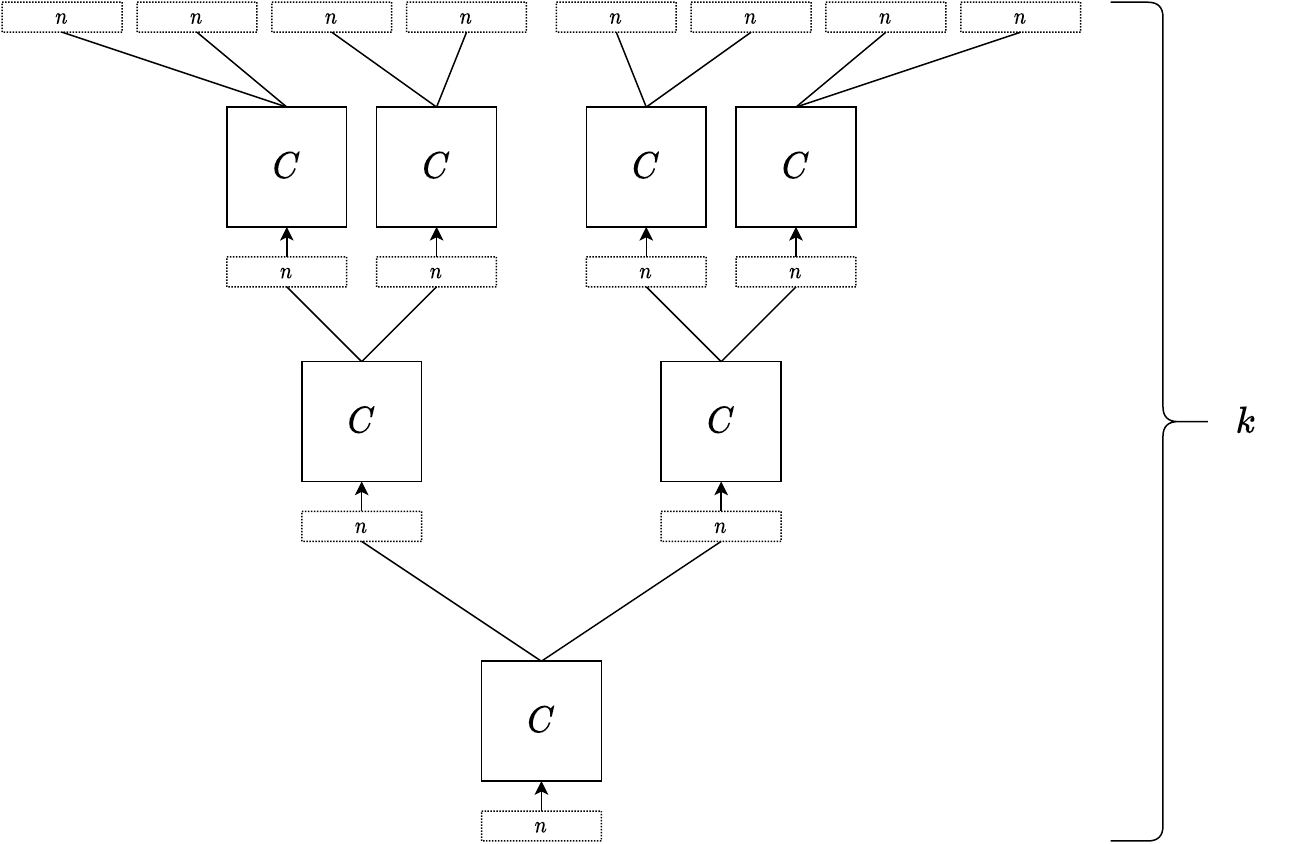}
    \caption{Extending a map $C: \{0,1\}^n \rightarrow \{0,1\}^{2n}$ to a map $C^*: \{0,1\}^n \rightarrow \{0,1\}^{2^kn}$}. Dotted boxes indicate the number of bits along a wire.
    \label{fig:circuit_tree}
\end{figure}

To define this function more formally, first we define the following maps $L,R: \{0,1\}^{2n} \rightarrow \{0,1\}^n$, where $L$ takes a $2n$-bit string and ouputs the first $n$ bits, and $R$ takes a $2n$-bit string and outputs the last $n$ bits. Given a nonempty sequence $\sigma_1, \ldots \sigma_t \in \{L,R\}^*$, let $C^{\sigma}: \{0,1\}^n \rightarrow \{0,1\}^n$ be the function $\sigma_t \circ C \circ \ldots \circ \sigma_1 \circ C$. Now, given a binary string, we can associate it with such a sequence by associating 0 with $L$ and 1 with $R$, and so we will abuse notation and write $C^x$ for a binary string $x$ as shorthand for $C^{\sigma}$ where $\sigma$ is the sequence of $L,R$ associated with the binary string $x$. We are now ready to formally define our function $C^*$. As $C^*$ is a map $\{0,1\}^n \rightarrow \{0,1\}^{2^kn}$, we can think of the output as being clumped into $2^k$ blocks, each containing $n$ bits. Given this terminology, $C^*$ behaves as follows: on input $x$, the $i^{th}$ block of the output of $C^*$ will be $C^{\|i\|}(x)$, where $\|i\|$ denotes the standard representation of $i$ as a $k$-bit binary string.

From here, the proof proceeds in two steps. First, we show that, by setting $m = n 2^k = poly(|C|)$, any solution to {\sc $\epsilon$-Hard}$^{\mathcal{C}}$ on input $1^m$ will be a string that is not in the range of $C^*$. Second, we will show that given a string outside the range of $C^*$, we can find a string outside the range of $C$ using only a polynomial number of calls to a $\mathcal{C}$-inverter.

To carry out the first of these steps, we will show that any string in the range of $C^*$, when interpreted as a truth table of length $m = n 2^k$ on $\lceil \log n \rceil + k$ variables, can be computed by a circuit of size $O(|C|k)$. Since, by construction of $k$, we have that $m \geq |C|^{\frac{2}{\epsilon}}$, a solution to {\sc $\epsilon$-Hard} on input $1^m$ will be a truth table of length $m$ not computable by a circuit of size $m^{\epsilon} \geq |C|^2$, and thus a circuit of size $O(|C|k) = O(|C|\log |C|)$ would be a contradiction for all input lengths greater than some absolute constant. We construct such a circuit for any string in the range of $C^*$ as follows: let $y$ be a $2^kn$-bit string such that for some $x \in \{0,1\}^n$, $C^*(x)=y$. The circuit computing $y$ will have $x$ written as advice/constants, and will feed $x$ through $k$ copies of the circuit $C$ in series. We will split the $\lceil \log n \rceil + k$ input variables into a block of $k$ variables we call $i$, and a block of $\lceil \log n \rceil$ variables we call $j$. We then use $i$ to determine whether to apply $L$ or $R$ to the output of one of the copies of $C$ before feeding it into the next, to get some resulting string $x^i$, and then we use $j$ to index into the $j^{th}$ position of $x^i$, to get $y_{i,j}$. A diagram of this circuit is shown in Figure~\ref{fig:efficient_circuit}:

\begin{figure}[H]
    \centering
    \includegraphics[width=1\textwidth]{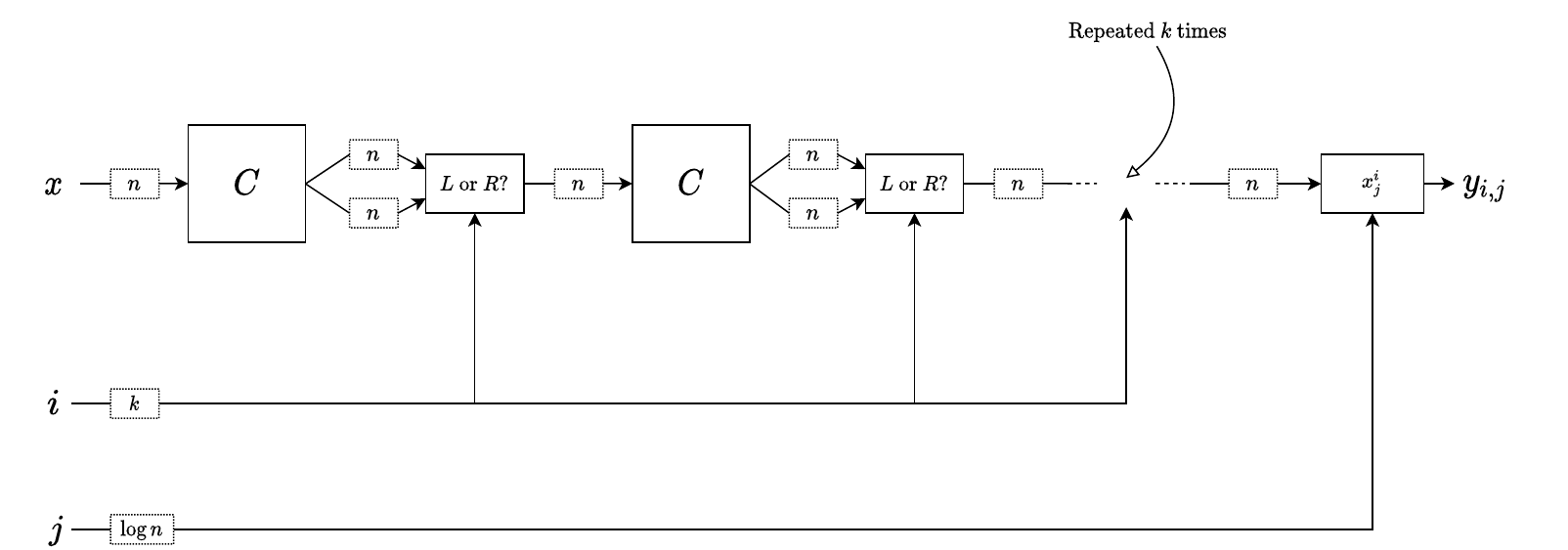}
    \caption{A succinct circuit whose truth table is $y$, for any $y$ in the range of $C^*$. Dotted boxes indicate the number of bits along a wire. Note that although $x$ is shown as an input in this diagram, for any given $y$ we fix a preimage $x$ as constants/advice, and so the only true inputs to this circuit are $i,j$.}
    \label{fig:efficient_circuit}
\end{figure}

To see that this circuit has size $O(|C|k)$, note that the subcircuits computing either $L$ or $R$ depending on a bit of $i$ can be computed easily with $O(n)$ gates over the basis $\{\land, \lor,\lnot\}$ (this is essentially a multiplexer), and also that the final subcircuit indexing into an $n$-bit string can be computed with $O(n)$ $\{\land, \lor,\lnot\}$ gates as well; since we assumed $\mathcal{C}$ is sufficiently strong, both of these subcircuits can therefore be computed with $O(n)$ $\mathcal{C}$-gates. Since $|C| \geq n$, and this circuit contains only $k$ copies of $C$ and the aforementioned subcircuits, plus the constants describing the string $x$ of length $n$, this circuit has $O(|C|k)$ size as claimed. 

 Thus, we now know that any solution to  {\sc $\epsilon$-Hard}$^{\mathcal{C}}$ on input $1^m$ will not be in the range of $C^*$, and by assumption {\sc $\epsilon$-Hard}$^{\mathcal{C}}$ is total for sufficiently large input lengths so such a solution exists. It remains only to show that we can use a string outside the range of $C^*$, together with a $\mathcal{C}$-inverter oracle, to find a string outside the range of $C$. We proceed exactly as in the proof of Lemma~\ref{lem:swapping-sizes}. Let $y$ be any string outside the range of $C^*$. Refer to Figure~\ref{fig:circuit_tree} which gives a diagram of a circuit computing $C^*$; at a layer $i \in [k]$ of this circuit, we have $2^i$ blocks of $n$ bits feeding into $2^i$ copies of $C$, and these copies of $C$ then output $2^{i+1}$ blocks of $n$ bits at the next layer. So working back from the output layer $k$, we can test if any consecutive $2n$-bit block of $y$ is outside of the range of $C$. If none of them are, then we find a preimage for all blocks, interpret this as the output of the previous layer, and continue our search from there. We follow this process all the way back to the input layer or until we find an empty pigeonhole of $C$. If we never find an empty pigeonhole of $C$, then this process will terminate at the input layer with a string $x$ such that $C^*(x) = y$, which is impossible by assumption, so at some point we must indeed find a string outside the range of $C$. Checking whether a particular string is an empty pigeonhole, or finding a preimage if it's not, can be accomplished with one call to a $\mathcal{C}$-inverter oracle by definition. We perform this test at most $2^k = poly(|C|)$ times (once for every copy of $C$ in the diagram in Figure~\ref{fig:circuit_tree}), so overall this process can be accomplished in polynomial time using a $\mathcal{C}$-inverter oracle.
\end{proof}

We now examine the implications of this theorem for particular circuit classes of interest.

\begin{theorem}
For any $0 < \epsilon < \frac{1}{2}$, {\sc $\epsilon$-Hard}$^{\Sigma^{\cl P}_i}$ is complete for {\cl APEPP}$^{\Sigma^{\cl P}_i}$ under $\Delta^{\cl P}_{i+2}$ reductions.
\end{theorem}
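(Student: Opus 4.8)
The plan is to derive this as essentially a direct corollary of Theorem~\ref{thm:epsilon-hard-reduction} applied to the basis $\mathcal{C} = \Sigma^{\cl P}_i$, so most of the work consists of (i) checking the hypotheses of that theorem hold for this basis, and (ii) translating the abstract ``$\mathcal{C}$-inverter reduction'' into the stated complexity-class language of $\Delta^{\cl P}_{i+2}$ reductions. First I would verify containment: {\sc $\epsilon$-Hard}$_{\Sigma^{\cl P}_i}$ reduces to {\sc Empty}$_{\Sigma^{\cl P}_i}$ via the same Shannon-counting argument as in Theorem~\ref{thm:truth-tables-in-apepp}. The point is that $\Sigma^{\cl P}_i$-circuits of size $s$ on $\lceil \log N \rceil$ inputs still have an encoding of length $2s\log s + O(s)$ (the oracle gates have bounded fan-in just like ordinary gates, so the counting is unchanged), and the circuit $\Phi$ that decodes such an encoding and evaluates it on all inputs is itself a $\Sigma^{\cl P}_i$-circuit of size $poly(N)$; since $s \le N^\epsilon < N$ for $\epsilon < 1/2$, there are fewer inputs than outputs, giving a valid instance of {\sc Empty}$_{\Sigma^{\cl P}_i}$ whose solutions are hard truth tables. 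This also shows {\sc $\epsilon$-Hard}$_{\Sigma^{\cl P}_i}$ is total for large input lengths, which is the hypothesis needed to invoke Theorem~\ref{thm:epsilon-hard-reduction}.

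Second, the basis $\Sigma^{\cl P}_i$ is ``sufficiently strong'' essentially by definition, since it explicitly includes $\land, \lor, \lnot$ (the definition of ``the basis {\cl C}'' in the excerpt adjoins these to the oracle gates). So Theorem~\ref{thm:epsilon-hard-reduction} applies verbatim and gives a polynomial-time reduction from {\sc Empty}$_{\Sigma^{\cl P}_i}$ to {\sc $\epsilon$-Hard}$_{\Sigma^{\cl P}_i}$ using a $\Sigma^{\cl P}_i$-inverter oracle. Combining the two directions, {\sc $\epsilon$-Hard}$_{\Sigma^{\cl P}_i}$ is complete for {\cl APEPP}$_{\Sigma^{\cl P}_i}$ under $\Sigma^{\cl P}_i$-inverter reductions. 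The remaining step is to identify a $\Sigma^{\cl P}_i$-inverter oracle with $\Delta^{\cl P}_{i+2}$ computation. A $\Sigma^{\cl P}_i$-inverter takes a $\Sigma^{\cl P}_i$-circuit $C$ and a string $y$ and must decide whether $\exists x\, C(x) = y$, producing such an $x$. Evaluating $C(x)$ for a fixed $x$ requires evaluating the oracle gates, which is a $\Sigma^{\cl P}_i$ (hence $\Delta^{\cl P}_{i+1}$) predicate; the existential quantifier over $x$ then puts the decision problem in $\Sigma^{\cl P}_{i+1}$, so the inversion search problem is solvable in $\cl{FP}^{\Sigma^{\cl P}_{i+1}} \subseteq \Delta^{\cl P}_{i+2}$ by the standard prefix-search / self-reduction argument. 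Therefore a polynomial-time machine with a $\Sigma^{\cl P}_i$-inverter oracle can be simulated in $\Delta^{\cl P}_{i+2}$, which converts the $\Sigma^{\cl P}_i$-inverter reductions into $\Delta^{\cl P}_{i+2}$ reductions and yields the claim.

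I expect the only genuine subtlety — the ``main obstacle,'' though it is minor — to be the bookkeeping in the second paragraph: making sure that the notion of ``$\Sigma^{\cl P}_i$-oracle gate'' used inside a circuit (a gate for the fixed $n$-bit slice $L_n$ of a complete language) really does have its evaluation-plus-existential-quantification land in $\Sigma^{\cl P}_{i+1}$ and not one level higher, and conversely that a $\Sigma^{\cl P}_{i+1}$ predicate can be realized by such an inverter query. Both directions are routine once one fixes a complete language for $\Sigma^{\cl P}_i$ and observes that a single oracle call suffices to decide an arbitrary $\Sigma^{\cl P}_i$ statement of polynomial size; the existential quantifier over the preimage $x$ (of length polynomial in $|C|$) then gives exactly one more alternation. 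I would state this correspondence as a short lemma — ``a $\Sigma^{\cl P}_i$-inverter reduction is the same as a $\Delta^{\cl P}_{i+2}$ reduction'' — mirroring the parenthetical remark in the excerpt that a $\{\land,\lor,\lnot\}$-inverter reduction is a $\cl{P}^{\cl NP}$ reduction, and then the theorem follows immediately by plugging this into the completeness statement obtained from Theorem~\ref{thm:epsilon-hard-reduction}.
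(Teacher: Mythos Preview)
Your overall strategy matches the paper's proof: containment via the Shannon counting argument, hardness via Theorem~\ref{thm:epsilon-hard-reduction}, and then identifying the $\Sigma^{\cl P}_i$-inverter oracle with a $\Sigma^{\cl P}_{i+1}$ decision procedure (hence $\Delta^{\cl P}_{i+2}$ search via prefix binary search). The hardness direction is fine; your high-level argument that evaluating a $\Sigma^{\cl P}_i$-circuit on a fixed input is in ${\cl P}^{\Sigma^{\cl P}_i}$, so that $\exists x\,[C(x)=y]$ lands in ${\cl NP}^{\Sigma^{\cl P}_i}=\Sigma^{\cl P}_{i+1}$, is a clean repackaging of what the paper spells out explicitly (guessing gate values plus witnesses for the YES-answering oracle gates and collapsing to a single $\Pi^{\cl P}_i$ conjunction).

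There is, however, one concrete error in your containment argument. You write that ``the oracle gates have bounded fan-in just like ordinary gates, so the counting is unchanged,'' and then justify the input/output gap by ``$s\le N^\epsilon<N$ for $\epsilon<1/2$.'' Both steps are wrong. In the paper's model, a $\Sigma^{\cl P}_i$-oracle gate for $L_n$ has fan-in $n$, which can be as large as the circuit size $s$; hence the encoding of a size-$s$ oracle circuit is no longer $2s\log s+O(s)$ but rather $\Theta(s^2\log s)$ in the worst case. This is precisely why the theorem requires $\epsilon<\tfrac12$ rather than $\epsilon<1$: you need $s^2\log s<N$, and $s=N^\epsilon$ forces $2\epsilon<1$. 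The paper flags exactly this point (``we must add the assumption $\epsilon<\tfrac12$ to account for the unbounded fan-in of oracle gates in our counting argument''). Once you replace your bounded-fan-in claim with this corrected encoding bound, the containment goes through and the rest of your proposal is correct.
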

\begin{proof}
Containment of {\sc $\epsilon$-Hard}$^{\Sigma^{\cl P}_i}$ in {\cl APEPP}$^{\Sigma^{\cl P}_i}$ follows directly from the proof of Theorem~\ref{thm:truth-tables-in-apepp} with minimal adjustments; we must add the assumption $\epsilon < \frac{1}{2}$ to account for the unbounded fan-in of oracle gates in our counting argument. So it remains only to show that {\sc $\epsilon$-Hard}$^{\Sigma^{\cl P}_i}$ is hard for this class as well. Since the above reduction uses a polynomial number of calls to the $\mathcal{C}$-inverter, it suffices to show that we can implement a $\Sigma^{\cl P}_i$-circuit inverter using a $\Delta^{\cl P}_{i+2}$ oracle. Given this, we can complete the entire reduction in ${\cl P}^{\Delta^{\cl P}_{i+2}} = \Delta^{\cl P}_{i+2}$.

Let $C$ be a $\Sigma^{\cl P}_i$-circuit with $m$ oracle gates, and $y$ be a potential output. To test if $y$ is a valid output, we nondeterministically guess an input $x$, in addition to an output value for every gate in $C$, and a set of witness strings $z_1 \ldots z_m$, one for each of our oracle gates. We then check that each gate output is valid (we will use the guessed witnesses here), and that the value of the terminal gate outputs is $y$. The verification of the terminal gates and all classical $\land$/$\lor$/$\lnot$ gates can be done in polynomial time. Verifying that all $\Sigma^{\cl P}_i$ oracle gates have valid outputs given their inputs corresponds to verifying that a sequence of strings $x_1, \ldots, x_m$ satisfy a sequence of $\Sigma^{\cl P}_i$ and $\Pi^{\cl P}_i$ predicates $\mathcal{P}_1, \ldots, \mathcal{P}_m$, where $m$ is of polynomial length. For each $i$ such that $\mathcal{P}_i$ is a $\Sigma^{\cl P}_i$ predicate, this predicate is of the form $\exists z \mathcal{P}'_i(x_i,z)$ where $\mathcal{P}'_i$ is a $\Pi^{\cl P}_{i-1}$ predicate, and so we can use the $z_i$ we originally guessed and simplify these to $\Pi^{\cl P}_{i-1}$ predicates. For any $i$ such that $\mathcal{P}_i$ is a $\Pi^{\cl P}_i$ predicate we ignore $z_i$. In this way, we can transform all $\mathcal{P}_i$ into $\Pi^{\cl P}_i$ predicates. Verifying that a sequence of of strings satisfies a sequence of $\Pi^{\cl P}_i$ predicates can then be checked with a single $\Pi^{\cl P}_i$ predicate representing their conjunction. So overall the verification process can be carried by checking a single $\Pi^{\cl P}_i$ predicate, and so determining the existence of a solution can be done in $\Sigma^{\cl P}_{i+1}$. From a $\Sigma^{\cl P}_{i+1}$ test to determine the existence of a preimage for $y$, we can compute a preimage when one exists in $\Delta^{\cl P}_{i+2}$ by a standard application of binary search.
\end{proof}

In the absence of any oracle gates, we have the following:
\begin{theorem}\label{thm:apepp-completeness}
For any $0 < \epsilon < 1$, {\sc $\epsilon$-Hard} is complete for {\cl APEPP} under ${\cl P}^{\cl NP}$ reductions.
\end{theorem}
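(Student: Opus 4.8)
The plan is to get both directions essentially for free from results already in place, with all the real work deferred to Theorem~\ref{thm:epsilon-hard-reduction}. For \emph{containment}, I would first observe that {\sc $\epsilon$-Hard} $\in {\cl APEPP}$, i.e.\ that it reduces in polynomial time to {\sc Empty}, by a direct weakening of the argument in Theorem~\ref{thm:truth-tables-in-apepp}. The circuit $\Phi$ built there interprets its input as an encoding of a circuit on $\lceil\log N\rceil$ inputs and outputs that circuit's truth table truncated to length $N$; every $N$-bit string computable by a size-$s$ circuit lies in its range once the encoding fits in $2s\log s + O(s)$ bits. For $s \le N^{\epsilon}$ with $0 < \epsilon < 1$ this quantity is $O(N^{\epsilon}\log N)$, which is far below $N$ for all sufficiently large $N$, so $\Phi$ has strictly fewer input than output wires and is a valid {\sc Empty} instance all of whose solutions have circuit complexity exceeding $N^{\epsilon}$; the finitely many small input lengths are handled by brute force. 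Since {\sc Empty} is total (dual pigeonhole), this same counting also shows that {\sc $\epsilon$-Hard} is total for all sufficiently large input lengths, which is exactly the hypothesis needed for the other direction.

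For \emph{hardness}, I would invoke Theorem~\ref{thm:epsilon-hard-reduction} with the basis $\mathcal{C} = \{\land,\lor,\lnot\}$. This basis is trivially sufficiently strong, and by the previous paragraph {\sc $\epsilon$-Hard} (which for this basis coincides with {\sc $\epsilon$-Hard}$_{\mathcal{C}}$) is total for sufficiently large input lengths, so the hypotheses of that theorem hold for every $0 < \epsilon < 1$. Consequently {\sc Empty}, which is {\sc Empty}$_{\mathcal{C}}$ for the standard basis, reduces to {\sc $\epsilon$-Hard} under a $\mathcal{C}$-inverter reduction. Finally, as already noted after the definition of $\mathcal{C}$-inverter oracles, a $\{\land,\lor,\lnot\}$-inverter oracle is polynomial-time equivalent to an {\cl NP} oracle: deciding whether a given string lies in the range of a circuit is in {\cl NP}, and a preimage, when one exists, is recovered bit by bit by the standard downward self-reduction. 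Hence the whole reduction runs in ${\cl P}^{\cl NP}$, and combined with containment this shows {\sc $\epsilon$-Hard} is complete for {\cl APEPP} under ${\cl P}^{\cl NP}$ reductions.

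I do not expect a genuine obstacle here, since the substantive ingredients --- the GGM-style tree construction $C^{*}$, the bound $O(|C|k)$ on the circuit complexity of every string in its range, and the layer-by-layer inversion of $C^{*}$ using polynomially many {\cl NP}-oracle calls --- are all supplied by Theorem~\ref{thm:epsilon-hard-reduction}. The plan is therefore just to verify the bookkeeping at two points. First, that $\epsilon < 1$ (rather than the $\epsilon < \tfrac12$ required in the oracle-circuit variant) is admissible: this is fine precisely because the standard basis has no unbounded-fan-in oracle gates, so the encoding bound $O(N^{\epsilon}\log N) < N$ holds for the full range $0 < \epsilon < 1$. Second, that the totality of {\sc $\epsilon$-Hard} is available at the specific input length $m = n2^{k} = \mathrm{poly}(|C|)$ that arises inside the reduction of Theorem~\ref{thm:epsilon-hard-reduction}; this is immediate since totality has been established for \emph{all} sufficiently large input lengths.
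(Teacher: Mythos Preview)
Your proposal is correct and matches the paper's approach: the paper states Theorem~\ref{thm:apepp-completeness} without proof as the oracle-free special case of the preceding machinery, and you have correctly supplied the routine verification by invoking Theorem~\ref{thm:truth-tables-in-apepp} for containment and Theorem~\ref{thm:epsilon-hard-reduction} with $\mathcal{C}=\{\land,\lor,\lnot\}$ for hardness. Your observation that the full range $0<\epsilon<1$ is admissible here (versus $\epsilon<\tfrac12$ in the oracle-circuit variant) because the standard basis has no unbounded-fan-in gates is exactly the point the paper is implicitly relying on.
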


\subsection{Implications of Completeness}
This result gives an \emph{exact algorithmic characterization} of the possibility of proving $2^{\Omega(n)}$ $\Sigma^{\cl P}_i$-circuit lower bounds for ${\cl E}^{\Sigma^{\cl P}_{i+1}}$:
\begin{theorem}\label{thm:LB-algo-equivalence}
There exists a language in ${\cl E}^{\Sigma^{\cl P}_{i+1}}$ with $\Sigma^{\cl P}_i$-circuit complexity $2^{\Omega(n)}$ if and only if there is a $\Delta^{\cl P}_{i+2}$ algorithm for {\sc Empty}$^{\Sigma^{\cl P}_i}$.
\end{theorem}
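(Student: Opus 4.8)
The plan is to route the equivalence through the search problem {\sc $\epsilon$-Hard}$_{\Sigma^{\cl P}_i}$, leaning on the completeness result just established. By the preceding theorem, for every fixed $\epsilon \in (0,\tfrac12)$ the problem {\sc $\epsilon$-Hard}$_{\Sigma^{\cl P}_i}$ is complete for {\cl APEPP}$_{\Sigma^{\cl P}_i}$ under $\Delta^{\cl P}_{i+2}$ reductions: it reduces to {\sc Empty}$_{\Sigma^{\cl P}_i}$ by a polynomial-time many-one reduction (the containment direction, via the proof of Theorem~\ref{thm:truth-tables-in-apepp}), and {\sc Empty}$_{\Sigma^{\cl P}_i}$ reduces to it by a $\Sigma^{\cl P}_i$-inverter reduction, which is a $\Delta^{\cl P}_{i+2}$ reduction since the inverter is implementable in $\Delta^{\cl P}_{i+2}$. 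As ${\cl P}^{\Delta^{\cl P}_{i+2}} = \Delta^{\cl P}_{i+2}$, composing either reduction with a $\Delta^{\cl P}_{i+2}$ solver for one of the two problems yields one for the other; hence ``{\sc Empty}$_{\Sigma^{\cl P}_i}$ has a $\Delta^{\cl P}_{i+2}$ algorithm'' is equivalent to ``{\sc $\epsilon$-Hard}$_{\Sigma^{\cl P}_i}$ has a $\Delta^{\cl P}_{i+2}$ algorithm,'' uniformly over all such $\epsilon$. It therefore suffices to show that ${\cl E}^{\Sigma^{\cl P}_{i+1}}$ contains a language of $\Sigma^{\cl P}_i$-circuit complexity $2^{\Omega(n)}$ if and only if {\sc $\epsilon$-Hard}$_{\Sigma^{\cl P}_i}$ admits a $\Delta^{\cl P}_{i+2}$ algorithm for some (equivalently, every) fixed $\epsilon \in (0,\tfrac12)$; this is the familiar truth-table$\leftrightarrow$language correspondence and does not need the completeness theorem.

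For the implication from an algorithm to a hard language, fix $\epsilon = \tfrac14$ and let $A$ be a $\Delta^{\cl P}_{i+2} = {\cl P}^{\Sigma^{\cl P}_{i+1}}$ algorithm for {\sc $\epsilon$-Hard}$_{\Sigma^{\cl P}_i}$. Define $L$ by: on input $w$ with $|w| = n$, run $A(1^{2^n})$ to obtain a string $x$ of length $2^n$ and output the bit of $x$ indexed by $w$. Since $A$ runs in time $\mathrm{poly}(2^n) = 2^{O(n)}$ with a $\Sigma^{\cl P}_{i+1}$ oracle, we get $L \in {\cl E}^{\Sigma^{\cl P}_{i+1}}$, and the length-$2^n$ truth table of $L$ on $n$-bit inputs is exactly $x$. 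Because $2^n$ is a power of two, Definition~\ref{def:circ-size} (with $\Sigma^{\cl P}_i$-oracle gates allowed) makes the $\Sigma^{\cl P}_i$-circuit complexity of the string $x$ coincide with that of $L$ on length-$n$ inputs, and by correctness of $A$ this exceeds $(2^n)^{1/4} = 2^{n/4}$ for all sufficiently large $n$ (using that {\sc $\epsilon$-Hard}$_{\Sigma^{\cl P}_i}$ is total for large inputs). Hence $L$ has $\Sigma^{\cl P}_i$-circuit complexity $2^{\Omega(n)}$.

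For the converse, let $L \in {\cl E}^{\Sigma^{\cl P}_{i+1}}$ have $\Sigma^{\cl P}_i$-circuit complexity at least $2^{\delta n}$ for all large $n$, and fix $\epsilon$ with $0 < \epsilon < \min(\delta, \tfrac12)$. The $\Delta^{\cl P}_{i+2}$ algorithm for {\sc $\epsilon$-Hard}$_{\Sigma^{\cl P}_i}$ on input $1^N$ sets $n = \lfloor \log N \rfloor$ (so $2^n \le N < 2^{n+1}$ and $\lceil \log N \rceil \in \{n, n+1\}$), computes the length-$2^n$ truth table $t$ of $L$ on $n$-bit inputs --- possible in $\mathrm{poly}(N)$ time with a $\Sigma^{\cl P}_{i+1}$ oracle because $2^n \le N$ --- and outputs $x = t\, 0^{N - 2^n}$, a string of length exactly $N$. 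Every index below $2^n$ lies in the constrained range $\{0, \dots, N-1\}$, so any $\Sigma^{\cl P}_i$-circuit of size $s$ computing $x$ restricts --- by hard-wiring the leading input bit to $0$ when $N$ is not a power of two --- to a $\Sigma^{\cl P}_i$-circuit of size $s$ for $L$ on $n$-bit inputs, forcing $s \ge 2^{\delta n} \ge N^\delta / 2^\delta > N^\epsilon$ for $N$ large. Finitely many small instances are handled by a hard-coded table of solutions, which exist since the problem is total for large inputs. Combining the two implications with the bridge in the first paragraph closes the chain of equivalences.

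I expect no genuine obstacle here, since the substantive work sits in the completeness theorem; the care required is purely bookkeeping --- checking that $\Delta^{\cl P}_{i+2} = {\cl P}^{\Sigma^{\cl P}_{i+1}}$ absorbs both the $\Sigma^{\cl P}_i$-inverter reductions and the base ${\cl E}^{\Sigma^{\cl P}_{i+1}}$ evaluations --- and in making the truth-table/language correspondence respect Definition~\ref{def:circ-size} when $N$ is not a power of two, which the zero-padding and leading-bit hard-wiring above handle.
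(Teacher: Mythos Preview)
Your proposal is correct and follows essentially the same approach as the paper: both directions pass through {\sc $\epsilon$-Hard}$_{\Sigma^{\cl P}_i}$ via the completeness theorem, with the truth-table/language correspondence handled by padding and restriction exactly as you describe. The only difference is organizational --- you isolate the ``bridge'' (that {\sc Empty}$_{\Sigma^{\cl P}_i}$ and {\sc $\epsilon$-Hard}$_{\Sigma^{\cl P}_i}$ are $\Delta^{\cl P}_{i+2}$-equivalent) up front, whereas the paper invokes completeness inline in each direction --- and you pick $\epsilon < \min(\delta,\tfrac12)$ where the paper uses $\epsilon/2$, but these are cosmetic.
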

\begin{proof}
Say there is a language $L$ in ${\cl E}^{\Sigma^{\cl P}_{i+1}}$ with $\Sigma^{\cl P}_i$-circuit complexity $2^{\Omega(n)}$. So there exists an $\epsilon > 0$ such that for all but finitely many $n$, $L$ cannot be computed on length $n$ inputs with $\Sigma^{\cl P}_i$-circuits of size less then $2^{\epsilon n}$. So then we have a polynomial time algorithm for {\sc $\frac{\epsilon}{2}$-Hard}$^{\Sigma^{\cl P}_i}$ as follows: given $1^n$, output the truth table of $L$ over $\lfloor \log n \rfloor$-bit inputs. Since $L \in {\cl E}^{\Sigma^{\cl P}_{i+1}}$, this can be done in $2^{\lfloor \log n \rfloor}2^{O(\log n)} = poly(n)$ time with a $\Sigma^{\cl P}_{i+1}$ oracle. This truth table will have length $\frac{n}{2} \leq 2^{\lfloor \log n \rfloor} \leq n$. We then pad this truth table with 0's at the end to be of length $n$. If there was a circuit of size $n^{\epsilon/2}$ for this $n$-bit truth table on $\lceil \log n \rceil$ bits, then on the first $\lfloor \log n \rfloor$ bits of input this computes the truth table for $L$ on $\lfloor \log n \rfloor$-bit inputs. Since $\frac{n}{2} \leq 2^{\lfloor \log n \rfloor}$, this would imply a circuit of size $2^{\epsilon \lfloor \log n \rfloor}$ to compute $L$ on $\lfloor \log n \rfloor$-bit inputs, contradicting the hardness assumption. Thus, there exists a $\Delta^{\cl P}_{i+2}$ algorithm for {\sc $\frac{\epsilon}{2}$-Hard}$^{\Sigma^{\cl P}_i}$ for some $\epsilon > 0$, and so by Theorem~\ref{thm:apepp-completeness}, there also exists a $\Delta^{\cl P}_{i+2}$ algorithm for {\sc Empty}$^{\Sigma^{\cl P}_i}$.

Alternatively, say there is a $\Delta^{\cl P}_{i+2}$ algorithm for {\sc Empty}$^{\Sigma^{\cl P}_i}$. So in particular there is a $\Delta^{\cl P}_{i+2}$ algorithm for {\sc $\epsilon$-Hard}$^{\Sigma^{\cl P}_i}$ for any fixed $\epsilon < \frac{1}{2}$. Consider the language $L$ decided by the following ${\cl E}^{\Sigma^{\cl P}_{i+1}}$ machine: given an $n$-bit input, we use our $\Delta^{\cl P}_{i+2}$ algorithm for {\sc $\epsilon$-Hard}$^{\Sigma^{\cl P}_i}$ on input $1^{2^n}$ to generate a truth table, then look up the $n$-bit input in this truth table to determine whether to accept or reject. By definition this language must have $\Sigma^{\cl P}_i$-circuit complexity $2^{\Omega(n)}$, and this machine will run in time $poly(2^n) = 2^{O(n)}$ with a $\Sigma^{\cl P}_{i+1}$ oracle.
\end{proof}

In the most interesting case, we conclude that a $2^{\Omega(n)}$ circuit lower bound for ${\cl E}^{\cl NP}$ holds \emph{if and only if} there is a ${\cl P}^{\cl NP}$ algorithm for {\sc Empty}. Together with the results in Section~\ref{sec:constructions-in-apepp}, this gives newfound insight into the difficulty of proving exponential circuit lower bounds for the class ${\cl E}^{\cl NP}$: proving such a lower bound requires solving a \emph{universal explicit construction problem}, and would immediately imply ${\cl P}^{\cl NP}$ constructions for a vast range of combinatorial objects which we currently have no means of constructing without a ${\cl \Sigma_2^P}$ oracle. Theorem~\ref{thm:LB-algo-equivalence} also allows us to derive the following interesting fact about the circuit complexity of ${\cl E}^{\cl NP}$:
\begin{corollary}[Worst-Case to Worst-Case Hardness Amplification in ${\cl E}^{\cl NP}$]\label{cor:e-hard}
If there is a language in ${\cl E}^{\cl NP}$ of circuit complexity $2^{\Omega(n)}$, then there is a language in ${\cl E}^{\cl NP}$ requiring circuits of size $\varbound$.
\end{corollary}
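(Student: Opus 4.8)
The plan is to route the hypothesis through the two results already established, namely Theorem~\ref{thm:LB-algo-equivalence} and Theorem~\ref{thm:truth-tables-in-apepp}. First I would apply Theorem~\ref{thm:LB-algo-equivalence} in the case $i = 0$: the assumption that ${\cl E}^{\cl NP}$ contains a language of circuit complexity $2^{\Omega(n)}$ is \emph{equivalent} to the existence of a ${\cl P}^{\cl NP}$ algorithm $A$ that solves {\sc Empty} on all inputs. So it suffices to show that such an algorithm yields a language in ${\cl E}^{\cl NP}$ of circuit complexity $\varbound$.

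Next I would invoke Theorem~\ref{thm:truth-tables-in-apepp}, which supplies a polynomial-time reduction $\rho$ from {\sc Hard Truth Table} to {\sc Empty}: on input $1^N$, $\rho$ produces in time $poly(N)$ a valid instance $\Phi_N$ of {\sc Empty} (valid because {\sc Hard Truth Table} is total for all sufficiently large $N$ by Shannon's counting argument), every solution of which is a length-$N$ string not computed by any circuit of size $\ttbound$. Composing, $B := A \circ \rho$ is a ${\cl P}^{\cl NP}$ algorithm for {\sc Hard Truth Table}: given $1^N$ it runs in $poly(N)$ time with an {\cl NP} oracle and outputs a length-$N$ truth table of circuit complexity exceeding $\ttbound$. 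Taking $N = 2^n$, this is precisely a truth table of an $n$-input function requiring circuits of size $\frac{2^n}{2n} = \varbound$ (here one uses that $\lceil \log N \rceil = n$ when $N = 2^n$, so that Definition~\ref{def:circ-size} coincides with the ordinary circuit complexity of an $n$-input function).

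Finally I would define the language $L$ by declaring $x \in L$ if and only if the bit of $B(1^{2^{|x|}})$ indexed by $x$ equals $1$. A machine deciding $L$ on an $n$-bit input simply runs $B$ on $1^{2^n}$ — which takes $poly(2^n) = 2^{O(n)}$ time with an {\cl NP} oracle — and reads off the appropriate bit, so $L \in {\cl E}^{\cl NP}$. For every $n$ large enough that $\Phi_{2^n}$ is a valid {\sc Empty} instance, the restriction of $L$ to $n$-bit inputs has truth table $B(1^{2^n})$, which by construction has circuit complexity exceeding $\varbound$; for the finitely many remaining small $n$ we hard-wire into the machine a fixed truth table of near-maximal circuit complexity, which exists by Shannon's bound. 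Hence $L$ requires circuits of size $\varbound$ on all but finitely many input lengths, i.e.\ has circuit complexity $\varbound$ in the almost-everywhere sense used throughout the paper.

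The argument is essentially a clean composition of black boxes, so there is no deep obstacle. The two points needing care are: (i) checking that $\ttbound$ instantiated at $N = 2^n$ is literally $\varbound$, and that the reduction $\rho$ (equivalently, totality of {\sc Hard Truth Table}) is valid exactly for the large $n$ we use; and (ii) ensuring the resulting hardness is \emph{almost-everywhere} rather than merely infinitely-often, which is why the finitely many small input lengths must be patched by brute force.
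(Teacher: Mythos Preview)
Your proposal is correct and follows essentially the same route as the paper: apply Theorem~\ref{thm:LB-algo-equivalence} to get a ${\cl P}^{\cl NP}$ algorithm for {\sc Empty}, compose with the reduction of Theorem~\ref{thm:truth-tables-in-apepp} to solve {\sc Hard Truth Table} in ${\cl P}^{\cl NP}$, and read off the desired ${\cl E}^{\cl NP}$ language from the resulting truth tables. Your additional care about the almost-everywhere condition and the instantiation $N = 2^n$ is fine, though the paper simply relies on its stated convention that ``requires circuits of size $s(n)$'' already means for all but finitely many $n$.
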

\begin{proof}
By Theorem~\ref{thm:LB-algo-equivalence}, if  there is a language in ${\cl E}^{\cl NP}$ of circuit complexity $2^{\Omega(n)}$, then there is a ${\cl P}^{\cl NP}$ algorithm for {\sc Empty}. By Theorem~\ref{thm:truth-tables-in-apepp}, this implies a ${\cl P}^{\cl NP}$ algorithm for {\sc Hard Truth Table}, and thus a ${\cl P}^{\cl NP}$ construction of a truth table of length $N$ with hardness $\ttbound$. This in turn implies the existence of a language in ${\cl E}^{\cl NP}$ of circuit complexity $\varbound$. 
\end{proof}

Tweaking the proof of Theorem~\ref{thm:epsilon-hard-reduction} slightly we also obtain the following:
\begin{corollary}[Worst-Case to Worst-Case Hardness Amplification in ${\cl EXP}^{\cl NP}$]\label{cor:exp-hard}
If there is a language in ${\cl EXP}^{\cl NP}$ of circuit complexity $2^{n^{\Omega(1)}}$, then there is a language in ${\cl EXP}^{\cl NP}$ requiring circuits of size $\varbound$.
\end{corollary}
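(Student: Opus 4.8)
The plan is to follow the proof of Corollary~\ref{cor:e-hard} line by line, but upgrading its polynomial-time reductions to quasipolynomial ones -- which is precisely what one obtains by allowing the number of levels in the Goldreich--Goldwasser--Micali tree of Theorem~\ref{thm:epsilon-hard-reduction} to grow superconstantly. Fix a language $L \in {\cl EXP}^{\cl NP}$ decidable in time $2^{n^c}$ with an {\cl NP} oracle and of circuit complexity at least $2^{n^\delta}$ on all but finitely many input lengths, for constants $c,\delta > 0$. For a parameter $m$, let $\Phi_m$ be the circuit constructed in the proof of Theorem~\ref{thm:truth-tables-in-apepp} on input $1^{2^m}$: it has size $2^{O(m)}$, input length $2^{O(m)}$, and every string outside its range, read as a length-$2^m$ truth table, has circuit complexity exceeding $\varbound$. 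It therefore suffices to produce such a non-range point in time $2^{{\rm poly}(m)}$ with an {\cl NP} oracle; reading off the bits of that string indexed by $\{0,1\}^m$ then defines a language in ${\cl EXP}^{\cl NP}$ of circuit complexity $\varbound$, which is exactly the conclusion.

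To produce a non-range point of $\Phi_m$, I would first apply Lemma~\ref{lem:swapping-sizes} (a $\mathcal{C}$-inverter, i.e.\ {\cl NP}, reduction running in ${\rm poly}(|\Phi_m|) = 2^{O(m)}$ time) to obtain an equivalent ``$2n$'' instance $\Phi_m'$, then run the reduction of Theorem~\ref{thm:epsilon-hard-reduction} on $\Phi_m'$, forming the tree circuit $\Phi_m^{*}\colon\{0,1\}^{n'}\to\{0,1\}^{M}$ with $k$ levels -- except that $k$ is now taken to be $m^{a}$ for a suitable integer constant $a$ rather than $\Theta(\log|\Phi_m'|)$, so that $M = n'2^{k} = 2^{\Theta(m^{a})}$. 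The crucial point, inherited verbatim from the proof of Theorem~\ref{thm:epsilon-hard-reduction}, is that every string in the range of $\Phi_m^{*}$ has circuit complexity $O(|\Phi_m'|\,k) = 2^{O(m)}$; that bound is valid whether or not $k$ is constant. Choosing $a > 1/\delta$ makes $2^{(\log M)^{\delta}} = 2^{\Theta(m^{a\delta})}$ dominate the $2^{O(m)}$ upper bound, so the truth table $T_M$ of $L$ on $\lfloor\log M\rfloor$-bit inputs, zero-padded to length exactly $M$ -- which has circuit complexity $2^{\Omega((\log M)^{\delta})}$ for large $m$ by the padding argument in the proof of Theorem~\ref{thm:LB-algo-equivalence}, and is computable in time $2^{{\rm poly}(\log M)} = 2^{{\rm poly}(m)}$ with an {\cl NP} oracle -- lies outside the range of $\Phi_m^{*}$. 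I would then peel $T_M$ back down the tree exactly as in Theorem~\ref{thm:epsilon-hard-reduction}, using the {\cl NP} oracle as a $\langle\land,\lor,\lnot\rangle$-inverter: since $T_M$ is not in the range, the descent must stall, exposing a non-range point of $\Phi_m'$, which translates back through Lemma~\ref{lem:swapping-sizes} to a non-range point of $\Phi_m$. The descent makes $2^{k} = 2^{{\rm poly}(m)}$ oracle calls, and $\Phi_m^{*}$ itself has size $2^{k}\cdot|\Phi_m'| = 2^{{\rm poly}(m)}$, so the whole procedure runs in time $2^{{\rm poly}(m)}$ with an {\cl NP} oracle, as required. (This is, in effect, the hardness extractor of Theorem~\ref{thm:hardness-extractor} run at quasipolynomial scale.)

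The main thing to get right is the simultaneous choice of the level count $k = m^{a}$. It must be large enough that $\log M = \Theta(m^{a})$ so that the \emph{weak} hardness $2^{(\log M)^{\delta}}$ of $L$'s truth table still beats the $O(|\Phi_m'|\,k)$ upper bound on the complexity of strings in the range of $\Phi_m^{*}$; yet $M = 2^{{\rm poly}(m)}$ must stay quasipolynomial in $2^{m}$ so that building $\Phi_m^{*}$, computing $T_M$, and making the $2^{k}$ inverter calls all stay inside ${\cl EXP}^{\cl NP}$. Once one records that $|\Phi_m|$ and its input length are $2^{O(m)}$ (so that $\log M$ is $\Theta(m^{a})$ once $a\ge 1$ and $O(|\Phi_m'|k) = 2^{O(m)}$), all three requirements collapse to the single inequality $a\delta > 1$. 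The only remaining routine point is verifying that the ``all but finitely many $n$'' hardness hypothesis on $L$ transfers to an ``all but finitely many $m$'' circuit lower bound for the constructed language, which holds because $\lfloor\log M\rfloor\to\infty$ with $m$ and the zero-padding loses only a constant factor in the exponent.
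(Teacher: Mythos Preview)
Your proposal is correct and takes essentially the same approach as the paper: both modify the reduction of Theorem~\ref{thm:epsilon-hard-reduction} by letting the tree depth $k$ grow polylogarithmically in $|C|$ (the paper sets $k=\log^{\lceil 1/\epsilon\rceil}|C|$, you set $k=m^{a}$ with $a>1/\delta$, which is the same thing since $\log|C|=\Theta(m)$), so that the reduction becomes quasipolynomial and the weaker hardness $2^{(\log M)^{\delta}}$ still beats the $O(|C|k)$ range bound. Your write-up is considerably more detailed than the paper's two-sentence sketch, but the underlying argument is identical.
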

\begin{proof}
The proof follows that of the previous corollary, with the following modification to the reduction in Theorem~\ref{thm:epsilon-hard-reduction}: we start with the assumption that for some $\epsilon>0$ we are able to construct $N$-bit truth tables with hardness $2^{\log^{\epsilon}N}$ in time quasipolynomial in $N$ using an {\cl NP} oracle, and then apply the same reduction setting $k = \log^{\lceil \frac{1}{\epsilon} \rceil} |C|$.
\end{proof}

We thus obtain a rather unexpected ``collapse'' theorem for the circuit complexity of ${\cl EXP}^{\cl NP}$: if ${\cl EXP}^{\cl NP}$ has circuits of size $\varbound$ infinitely often, then this class in fact has circuits of size $2^{n^\epsilon}$ infinitely often for every $\epsilon > 0$.

We can refine this slightly as follows.
\begin{definition}
{\sc MCSP}, defined originally in \cite{circuit-minimization}, is the following decision problem: given a truth table $x$ and a size parameter $s$, determine whether $x$ has a circuit of size at most $s$. Let {\sc sMCSP} denote the search variant of this problem, where we are given a truth table $x$ and must output a circuit computing $x$ of minimum size.
\end{definition}
For the hardness amplification procedures in Corollaries \ref{cor:e-hard} and \ref{cor:exp-hard}, we can in fact replace the {\cl NP} oracle with an oracle for {\sc sMCSP}, which is non-trivial since {\sc sMCSP} is not known to be {\cl NP}-hard.
\begin{corollary}\label{cor:smcsp-amp}
If there is a language in ${\cl E}^{\cl sMCSP}$ (resp. ${\cl EXP}^{\cl sMCSP}$)  of circuit complexity $2^{\Omega(n)}$ (resp. $2^{n^{\Omega(1)}}$), then there is a language in ${\cl E}^{\cl sMCSP}$ (resp.  ${\cl EXP}^{\cl sMCSP}$)  requiring circuits of size $\varbound$.
\end{corollary}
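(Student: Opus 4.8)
The plan is to re‑run the proofs of Corollaries~\ref{cor:e-hard} and~\ref{cor:exp-hard} with {\sc sMCSP} substituted for {\cl NP} throughout, and verify that every appeal to the {\cl NP} oracle can in fact be serviced by an {\sc sMCSP} oracle. Tracing those proofs, the {\cl NP} oracle is used in only two ways. First, in the forward direction of Theorem~\ref{thm:LB-algo-equivalence} the assumed hard language $L$ is evaluated by an ${\cl E}^{\cl NP}$ (resp.\ ${\cl EXP}^{\cl NP}$) machine in order to write down its truth tables; under the substitution $L\in{\cl E}^{\cl sMCSP}$ (resp.\ ${\cl EXP}^{\cl sMCSP}$) this becomes a direct call to the {\sc sMCSP} oracle, yielding a polynomial‑time (resp.\ quasipolynomial‑time) {\sc sMCSP}‑oracle algorithm for {\sc $\epsilon$-Hard} for some $\epsilon>0$. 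Second, and this is the only nontrivial point, the reduction of Theorem~\ref{thm:epsilon-hard-reduction}, together with the size‑swap of Lemma~\ref{lem:swapping-sizes}, invokes a $\mathcal{C}$-inverter oracle, which for the standard basis is general circuit inversion and hence {\cl NP}-complete. The key observation is that to \emph{conclude} Corollaries~\ref{cor:e-hard} and~\ref{cor:exp-hard} we only ever need to solve {\sc Hard Truth Table}, and {\sc Hard Truth Table} reduces in polynomial time to {\sc Empty} on the single explicit circuit $\Phi_N$ of Theorem~\ref{thm:truth-tables-in-apepp} --- the circuit that decodes an encoding of a size-$\le\ttbound$ circuit and outputs the truth table it computes. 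Hence every circuit passed to the inverter during the walk‑back is built entirely out of copies of $\Phi_N$, glued by the size‑swap and the GGM tree, so inverting it amounts to: given a (possibly partially specified) length-$N$ truth table, produce a circuit of size at most $\ttbound$ computing it, or certify that none exists. That is exactly search circuit minimization.

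Concretely, I would work layer by layer through the two tree constructions exactly as in the proofs of Lemma~\ref{lem:swapping-sizes} and Theorem~\ref{thm:epsilon-hard-reduction}: the projection (``ignore the last $i$ bits'') pieces are inverted trivially by filling in arbitrary values, and every remaining inverter call reduces to inverting one copy of $\Phi_N$ on a block of its output. Since $N=2^n$, a full output block of $\Phi_N$ is a genuine $n$-variable truth table, and inverting $\Phi_N$ on it is one {\sc sMCSP} query: compute the minimum circuit, accept iff its size is at most $\ttbound$, and in that case return its encoding as the preimage. The subtlety is that the counting in Lemma~\ref{lem:swapping-sizes} forces $\Phi_N$'s output to be truncated before it is fed onward, so some blocks are only \emph{partially} specified. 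I would handle this by arranging every truncation length to be a power of two: a length-$2^{n'}$ prefix of an $n$-variable truth table is itself an $n'$-variable truth table $u$, a minimum circuit for $u$ (one {\sc sMCSP} query on $u$) doubles --- read as an $n$-variable circuit ignoring its extra inputs --- as a circuit of the same size computing an extension of $u$, and conversely any size-$s$ circuit agreeing with $u$ on the first $2^{n'}$ positions restricts, by hardwiring the high inputs to $0$, to a size-$\le s$ circuit for $u$; so minimizing over extensions is no harder than one {\sc sMCSP} call. Stitching the layers together gives a polynomial‑time (resp.\ quasipolynomial‑time) {\sc sMCSP}‑oracle algorithm for {\sc Hard Truth Table}, and defining a language by running it on inputs $1^{2^n}$ and looking up the $n$-bit input places a language of circuit complexity $\ttbound=\varbound$ in ${\cl E}^{\cl sMCSP}$ (resp.\ ${\cl EXP}^{\cl sMCSP}$).

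I expect the main obstacle to be reconciling this ``inversion $=$ {\sc sMCSP}'' requirement with the circuit‑counting constraints of the two tree constructions. Clean {\sc sMCSP}‑based inversion wants $\Phi_N$ to emit outputs that split into power‑of‑two‑length prefixes re‑usable as inputs to $\Phi_N$, yet the strong encoding needed so that \emph{every} size-$\le\ttbound$ circuit lands in $\Phi_N$'s range gives $\Phi_N$ roughly $N$ input bits, leaving essentially no room to split $N$ output bits into two re‑usable halves. The plan is therefore to redesign the size‑swap and the GGM tree so that each bundle whose preimage must be computed corresponds to a power‑of‑two prefix of a truth table, while (i) keeping the input/output inequality of Lemma~\ref{lem:swapping-sizes} valid, (ii) keeping the range‑element circuit complexity of the composed map at $O(|\Phi_N|k)$ so that an {\sc $\epsilon$-Hard} solution of the chosen length $m=\mathrm{poly}(N)$ is guaranteed outside the range, and (iii) not losing the factor of $2$ in the final hardness. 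The delicate part is (iii): one must either exhibit a reduction from circuit minimization with a prefix‑specified partial truth table to ordinary {\sc sMCSP} that preserves the size bound exactly, or replace the counting‑optimal circuit encoding by one whose slight inefficiency can afterwards be absorbed, so that the construction still certifies circuit complexity exactly $\varbound$ rather than merely $\Omega(2^n/n)$.
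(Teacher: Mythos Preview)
Your core approach is correct and is exactly the paper's: trace the reductions of Lemma~\ref{lem:swapping-sizes} and Theorem~\ref{thm:epsilon-hard-reduction}, observe that every $\mathcal{C}$-inverter call is on (a prefix of the output of) the single circuit $\Phi_N$ from Theorem~\ref{thm:truth-tables-in-apepp}, and note that inverting $\Phi_N$ is search circuit minimization.

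The obstacle you anticipate, however, is not real, and your proposed redesign is unnecessary. Recall that Definition~\ref{def:circ-size} already assigns a circuit complexity to a string of \emph{any} length $j$: the minimum size of a $\lceil\log j\rceil$-input circuit agreeing on positions $1,\dots,j$, with the remaining positions unconstrained. The problem {\sc sMCSP} is defined relative to this notion and so accepts inputs of arbitrary length. Your own restriction/extension argument (hardwire the high input bits to $0$ in one direction, ignore the extra inputs in the other) shows that the minimum $\lceil\log N\rceil$-input circuit matching a given length-$j$ prefix has exactly the same size as the minimum circuit for that $j$-bit string under Definition~\ref{def:circ-size}, and this equivalence holds for every $j$, not only powers of two. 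Hence each partial-output inverter call arising in the \emph{unmodified} Lemma~\ref{lem:swapping-sizes} walk-back is answered by a single {\sc sMCSP} query on the corresponding prefix, compared against the threshold $\ttbound$; if the minimum exceeds that threshold, any $N$-bit extension of the prefix lies outside the range of $\Phi_N$ and solves {\sc Hard Truth Table}. The exact constant is preserved automatically, since the range of $\Phi_N$ is by construction precisely the set of $N$-bit strings of complexity at most $\ttbound$. Your concerns (i)--(iii) therefore dissolve, and the paper's two-sentence proof is already complete as stated.
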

\begin{proof}
Recall the two reductions in Lemma~\ref{lem:swapping-sizes} and Theorem~\ref{thm:epsilon-hard-reduction}. In order to find an empty pigeonhole of the input circuit $C$ given a solution to {\sc $\epsilon$-Hard}, we only need to use the $\mathcal{C}$-inverter on $C$ itself. In the case of a reduction from {\sc Hard Truth Table} to {\sc $\epsilon$-Hard}, the circuit of interest $C$ maps circuits of size at most $\ttbound$ to their $N$-bit truth tables, and so an oracle for {\sc sMSCP} would suffice to invert $C$.
\end{proof}
It should be noted that a related result was proven in \cite{circuit-minimization}, showing that this type of hardness amplification is possible in {\cl E} assuming {\sc MCSP}$ \in {\cl P}$. However, their proof does not translate directly to an unconditional result in the oracle setting. Due to their use of the Impagliazzo-Wigderson generator, directly applying their proof in the oracle setting using the relativized generator of \cite{AM-derandomization} would instead show that  if ${\cl E}^{\cl MCSP}$ requires $2^{\Omega(n)}$-sized \emph{nondeterministic} circuits, then ${\cl E}^{\cl MCSP}$ requires $\varbound$-sized \emph{standard} circuits, which is a weaker statement then what is shown above (modulo the search/decision distinction between {\sc sMCSP} and {\sc MCSP}). Another result of a similar flavor was also proven in \cite{Hitchcock-Pavan}, where they establish that, assuming the (unproven) {\cl NP}-completeness of {\sc MCSP}, $2^{n^{\Omega(1)}}$ lower bounds for ${\cl NP} \cap {\cl coNP}$ imply $2^{\Omega(n)}$ lower bounds for ${\cl E}^{\cl NP}$. This type of amplification is incomparable to the amplification demonstrated in Corollaries \ref{cor:e-hard} and \ref{cor:exp-hard}. 

In \cite{hardness-harder}, Buresh-Oppenheim and Santhanam define a notion of ``hardness extraction'' that is highly relevant to the results in this section. Informally, a hardness extractor is a procedure which takes a truth table of length $N$ and circuit complexity $s$, and produces a truth table with nearly maximum circuit complexity relative to its size, whose length is as close to $s$ as possible. The proof of Corollary \ref{cor:smcsp-amp} can in fact be viewed as a construction of a near-optimal hardness extractor using an {\sc sMCSP} oracle. In particular our procedure is able to extract approximately the square root of the input's hardness:
\begin{theorem}\label{thm:hardness-extractor}
There is a polynomial time algorithm using an {\sc sMCSP} oracle which, given a truth table $x$ of length $M$ and circuit complexity $s$, outputs a truth table $y$ of length $N = {\Omega}(\sqrt{\frac{s}{\log M}})$ and circuit complexity $\Omega(\frac{N}{\log N})$.
\end{theorem}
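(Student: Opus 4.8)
The plan is to unpack the reduction chain from Corollary~\ref{cor:smcsp-amp} and Theorem~\ref{thm:epsilon-hard-reduction} and track the quantitative parameters explicitly, rather than treating the hardness assumption as a black box. The starting point is the observation that a truth table $x$ of length $M$ and circuit complexity $s$ is, by Shannon-style counting (the argument of Theorem~\ref{thm:truth-tables-in-apepp}), a string outside the range of the encoding circuit $\Phi$ which maps circuit descriptions of size up to roughly $s$ to their $M$-bit truth tables—provided $s$ is below the threshold where such encodings saturate all $M$ bits. So $x$ furnishes a solution to the instance of {\sc Empty} defined by this $\Phi$, where $\Phi$ has $M$ output bits and $O(s \log M)$ input bits (using the $2s\log s + O(s)$ encoding from Theorem~\ref{thm:truth-tables-in-apepp}). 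Thus, without any oracle at all, $x$ solves a specific {\sc Empty} instance whose ``gap'' (output bits minus input bits) is $M - O(s\log M)$; but what we actually want is to run the reduction of Theorem~\ref{thm:epsilon-hard-reduction} \emph{in reverse direction of the usual argument}—we have a solution to an {\sc Empty} instance and want to manufacture a hard truth table of controlled length.

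The key step is to re-examine the GGM-style tree construction $C^*$ from the proof of Theorem~\ref{thm:epsilon-hard-reduction}. There, starting from a circuit $C : \{0,1\}^a \to \{0,1\}^{2a}$ (after applying Lemma~\ref{lem:swapping-sizes} to square up the sizes) and iterating $k$ times, one obtains $C^* : \{0,1\}^a \to \{0,1\}^{a2^k}$ with the property that every string in the range of $C^*$ has circuit complexity $O(|C|\,k)$ as a truth table of length $m = a2^k$. The crucial point for hardness extraction is that we get to \emph{choose} $k$: a string outside the range of $C^*$ is a truth table of length $m = a2^k$ whose circuit complexity is forced to exceed the worst-case complexity of strings in the range, namely more than $O(|C| k)$. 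To turn this into a truth table of near-maximal hardness $\Omega(N/\log N)$ at length $N$, I want to choose $k$ so that $m = a 2^k = N$ and simultaneously the ``cheap'' complexity bound $O(|C|k)$ is below $N/\log N = \Omega(a2^k / k)$; since $|C| = O(s \log M)$ (the size of the encoding circuit $\Phi$ above, whose input size is $a = O(s\log M)$), this forces roughly $k^2 \gtrsim a = \Theta(s \log M)$, i.e. $k = \Theta(\sqrt{s/\log M} \cdot \sqrt{\log\cdots})$ up to logarithmic factors, giving $N = a2^k$—wait, that is exponential in $k$, far too large.

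So the actual mechanism must be the reverse: we are \emph{given} a hard string $x$ of length $M$ and complexity $s$, we feed it into the {\sc sMCSP} oracle to realize the $\mathcal{C}$-inverter on the encoding circuit $\Phi_M$ (this is exactly the observation in the proof of Corollary~\ref{cor:smcsp-amp}: the only circuit we ever need to invert is $\Phi$ itself, mapping circuits of bounded size to truth tables, which {\sc sMCSP} handles), and then run the Theorem~\ref{thm:epsilon-hard-reduction} machinery to \emph{extract} a smaller hard string. Concretely: $x$ solves {\sc Empty} on an instance with $\sim M$ output bits; by Lemma~\ref{lem:swapping-sizes} (and using {\sc sMCSP} for the inverter steps there, which only invert sub-circuits built from $\Phi$) we convert this into a solution to {\sc EMPTY}$^{2n}$ on a circuit $C$ of size $\mathrm{poly}(s, \log M)$ with $a = \Theta(s/\log M)$ inputs—the point being that the ``information content'' of $x$ relative to its description size is $M - O(s\log M)$, so the effective input size $a$ of the squared-up circuit is $\Theta(s/\log M)$ up to log factors. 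Then a string outside the range of $C : \{0,1\}^a \to \{0,1\}^{2a}$ is a truth table of length $2a = \Theta(s/\log M)$ with circuit complexity exceeding $a - O(\log a) = \Omega(a)$, hence of complexity $\Omega(N)$ where $N = 2a$; refining with the $N/(2\log N)$-style bound from Lemma~\ref{lem:swapping-sizes}'s successor (Theorem~\ref{thm:truth-tables-in-apepp}'s counting) pushes this to $\Omega(N/\log N)$. To actually \emph{produce} such a string from $x$ we invoke the inverter-chain of Lemma~\ref{lem:swapping-sizes}: walking back layer by layer through the tower $C^{n-1}\circ\cdots\circ C^0$ starting from $x$, using {\sc sMCSP} to invert each layer, until the process fails—at which point we have located a string outside the range of $C$. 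The square-root loss is precisely the squaring in Lemma~\ref{lem:swapping-sizes} (going from an $(n{+}1)$-output circuit to a $2n$-output circuit involves composing $n$ copies, squaring the relevant size scale), so $N = \Theta(\sqrt{a}) = \Omega(\sqrt{s/\log M})$.

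The main obstacle I anticipate is bookkeeping the exact relationship between the circuit complexity $s$ of $x$, the input size of the encoding circuit $\Phi$ (which depends on the $2s\log s + O(s)$ encoding), and the gap that survives after Lemma~\ref{lem:swapping-sizes}'s size-equalization—making sure the arithmetic genuinely yields $N = \Omega(\sqrt{s/\log M})$ with circuit complexity $\Omega(N/\log N)$ rather than, say, $\Omega(N/\log^2 N)$, and confirming that every inversion step in both Lemma~\ref{lem:swapping-sizes} and Theorem~\ref{thm:epsilon-hard-reduction} really only inverts sub-circuits that an {\sc sMCSP} oracle can handle (i.e.\ that no genuinely adversarial {\sc NP} instance sneaks in that {\sc sMCSP} cannot solve). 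Once those are pinned down, the construction is just the composition of the reductions already proved, run on the specific {\sc Empty} instance that $x$ witnesses by Shannon's counting argument.
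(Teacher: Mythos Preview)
Your proposal has the right ingredients (the GGM tree, the peel-back with an {\sc sMCSP} inverter, the Shannon encoding circuit) but assembles them in the wrong direction, and you misidentify where the square root comes from.

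Concretely: you start from the encoding circuit $\Phi$ for $x$ itself (with $M$ outputs and $O(s\log M)$ inputs) and then try to use Lemma~\ref{lem:swapping-sizes} to ``shrink'' to an {\sc EMPTY}$^{2n}$ instance with $a=\Theta(s/\log M)$ inputs. But Lemma~\ref{lem:swapping-sizes} does not do this: it reduces an arbitrary {\sc Empty} instance \emph{to} an {\sc EMPTY}$^{2n}$ instance by composing many copies, making the instance \emph{bigger}, not smaller. You have a solution to a large instance and want one for a small instance; the lemma runs the other way. Your claimed value $a=\Theta(s/\log M)$ has no derivation, and even if you obtained a string outside the range of some squared-up circuit $C$, there is no reason that string is a hard truth table unless $C$ is specifically the circuit-to-truth-table encoder for that length.

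The paper's argument goes in the opposite order. Fix the \emph{target} length $N$ first, and build the encoding circuit $C$ with $N$ outputs and $\lfloor N/2\rfloor$ inputs (as in Theorem~\ref{thm:truth-tables-in-apepp}) so that any string outside its range has circuit complexity $\Omega(N/\log N)$. The crucial quantitative fact is that this $C$ has size $O(N^2)$ (it simulates a described circuit on each of $N$ inputs). Now GGM-expand $C$ for $k\approx\log(M/N)$ levels to get $C^*$ with output length $\geq M$; every range element of $C^*$ then has circuit complexity $O(|C|\,k)=O(N^2\log M)$. Choosing $N=\epsilon\sqrt{s/\log M}$ makes this bound strictly below $s$, so $x$ (zero-padded to the output length of $C^*$) cannot lie in the range. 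Then you peel back through the tree with {\sc sMCSP} (which suffices because every inversion is of a copy of $C$, i.e.\ a minimum-circuit query) to land on a string outside the range of $C$, which is by construction an $N$-bit truth table of complexity $\Omega(N/\log N)$. The square root is thus \emph{not} an artifact of Lemma~\ref{lem:swapping-sizes}; it comes from the $O(N^2)$ size of the truth-table-evaluation circuit, exactly the {\sc 3SUM}-adjacent bottleneck the paper flags after the theorem.
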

\begin{proof}
The proof follows from a more careful analysis of Corollary~\ref{cor:smcsp-amp}; we give a sketch here. Adapting the proof of Theorem~\ref{thm:truth-tables-in-apepp}, for any $N$ we can efficiently construct a circuit $C$ with $N$ outputs and $\lfloor\frac{N}{2}\rfloor$ inputs, such that any $N$-bit string outside its range requires circuits of size $\delta(\frac{N}{\log N})$ for some fixed $\delta > 0$. 
% In particular, omitting the more involved permutation codes in the proof of Theorem~\ref{thm:truth-tables-in-apepp} at the cost of degrading the constant $\delta$, it is clear that such a $C$ can be constructed of circuit size $O(N^2)$.
In particular, it is clear from the proof of Theorem~\ref{thm:truth-tables-in-apepp} that such a $C$ can be constructed of circuit size $O(N^2)$.
Now, let $k$ to be the minimum integer such that $2^k\frac{N}{2} \geq M$. Following the argument in the proof of Theorem~\ref{thm:epsilon-hard-reduction}, we can construct a map $C^*: \{0,1\}^{\lfloor\frac{N}{2}\rfloor} \rightarrow \{0,1\}^{2^k\lfloor\frac{N}{2}\rfloor}$ such that any element of its range has circuit complexity $O(|C|k) = O(N^2k)$, and such that given a string outside the range of $C^*$, we can find a string outside the range of $C$ using $2^k$ calls to an {\sc sMCSP} oracle. Setting $N = \epsilon\sqrt{\frac{s}{\log M}}$ for $\epsilon$ sufficiently small, we get a value of $k \leq \log M$. This in turn means that a circuit of size $O(N^2 k) = O(\epsilon^2 s)$ whose truth table is $x0^{\lfloor\frac{N}{2}\rfloor2^k-M}$ would contradict the fact that $x$ has hardness at least $s$ (for sufficiently small choice of $\epsilon$). Thus, $x0^{\lfloor\frac{N}{2}\rfloor2^k-M}$ must lie outside the range of $C^*$, so using $x$ and our {\sc sMCSP} oracle we can find an $N$-bit string outside the range of $C$. This process requires at most $2^k \leq M$ calls to our {\sc sMCSP} oracle, each of input size at most $N < M$, so overall this takes polynomial time with access to an {\sc sMCSP} oracle.
\end{proof}

A natural goal would be to improve this procedure to extract $(\frac{s}{\log M})^{\frac{1}{2} + \epsilon}$ or ideally $\Omega(\frac{s}{\log M})$ bits of hardness. The only obstacle here is improving the $O(N^2)$ upper bound on the circuit complexity of $C$, the circuit which takes descriptions of $\log N$-input circuits of size $\approx N$ and outputs their truth tables. However, Williams observes in \cite{sat-lb} (see footnote 7 of that paper) that an $N^{2 - \epsilon}$ upper bound on $size(C)$ would imply a (nonuniform) breakthrough for {\sc 3SUM}, so improving this extractor in its current form appears difficult.

\section{Direct {\cl P} Reductions to {\sc Hard Truth Table}}\label{sec:direct-reductions}
Ideally we could extend the completeness result in Theorem~\ref{thm:apepp-completeness} to work with polynomial time reductions, as opposed to ${\cl P}^{\cl NP}$ reductions. However, the {\cl NP} oracle seems highly necessary for the proof techniques used above. Despite this obstacle, we show that there is a natural set of problems in {\cl APEPP} which can be reduced to the problem of finding truth tables of hard functions via {\cl P} reductions. 

A simple way to phrase the following results is that any truth table with sufficiently large circuit complexity will necessarily satisfy a variety of other pseudorandom properties for which no explicit constructions are known, including: rigidity over $\mathbb{F}_2$, high space-bounded communication complexity, and high bit-probe complexity. To show this, we demonstrate that the failure of a string $x$ to possess any of these properties implies a smaller than worst case circuit for $x$. 

To give the tightest reductions possible, we will introduce one new parameterized version of the hard truth table construction problem:

\begin{definition}
{\sc $\delta$-Quite Hard} is the following problem: given $1^N$, output an $N$-bit truth table with hardness $\frac{\delta N}{\log N}$
\end{definition}
This problem is total for sufficiently small $\delta$. We recall also the definition of {\sc $\epsilon$-Hard}, where we must construct a truth table of hardness $N^{\epsilon}$.

\subsection{Rigidity}
We begin with the case of rigidity. We will define the following weaker version of the rigidity construction problem:

\begin{definition}
{\sc $\epsilon$-Rather Rigid} is the following search  problem: given $1^N$, construct an $N \times N$ matrix over $\mathbb{F}_2$ which is $(\epsilon N, \epsilon N^2)$-rigid.
\end{definition}

These parameters are the best possible up to constant factors, and in particular would be sufficient to carry out Valiant's program over $\mathbb{F}_2$.

\begin{theorem}~\label{thm:rigidity-circuit}
For any sufficiently small $\delta > 0$, there exists some $\epsilon > 0$ such that {\sc $\epsilon$-Rather Rigid} reduces in polynomial time to {\sc $\delta$-Quite Hard}.
\end{theorem}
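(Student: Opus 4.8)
The plan is to reduce everything to one statement about truth tables: \emph{every $N\times N$ matrix over $\mathbb{F}_2$ that is not $(\epsilon N,\epsilon N^2)$-rigid, regarded (via row-major order) as an $N^2$-bit truth table, has circuit complexity strictly less than $\frac{\delta N^2}{2\log N}$.} Granting this, the reduction is immediate: on input $1^N$, query {\sc $\delta$-Quite Hard} on $1^{N^2}$ to obtain an $N^2$-bit string $t$ of circuit complexity at least $\frac{\delta N^2}{\log(N^2)}=\frac{\delta N^2}{2\log N}$, reshape $t$ through a fixed efficiently computable bijection $[N^2]\leftrightarrow[N]\times[N]$ into a matrix $M$ over $\mathbb{F}_2$ (this alters circuit complexity by at most an additive $O(\log^2 N)$ term), and output $M$; the displayed statement forces $M$ to be $(\epsilon N,\epsilon N^2)$-rigid, and the finitely many small $N$ are handled by brute force. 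So the entire proof is the circuit upper bound.

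To prove it, suppose $M$ is not $(\epsilon N,\epsilon N^2)$-rigid, so $M=LR+S$ over $\mathbb{F}_2$ with $L$ of shape $N\times\epsilon N$, $R$ of shape $\epsilon N\times N$, and $S$ having at most $\epsilon N^2$ nonzero entries. I will build a circuit for the map $(i,j)\mapsto M_{ij}$, with $i,j\in\{0,1\}^{\log N}$, as the XOR of two subcircuits that share these inputs. The \emph{low-rank part} computes $(LR)_{ij}=\bigoplus_{k=1}^{\epsilon N}L_{ik}R_{kj}$: by Shannon--Lupanov in its multi-output form, the maps $i\mapsto(L_{i1},\dots,L_{i,\epsilon N})$ and $j\mapsto(R_{1j},\dots,R_{\epsilon N,j})$ each have circuits of size $(1+o(1))\frac{\epsilon N^2}{2\log N}$, and the $O(\epsilon N)$ gates for the inner product are negligible, so this part costs $(1+o(1))\cdot 2\epsilon\cdot\frac{N^2}{2\log N}$. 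The \emph{sparse part} computes $S_{ij}$, which is the indicator of a subset of $\{0,1\}^{\lceil\log(N^2)\rceil}$ of density at most $\epsilon$; here I invoke the refined Shannon--Lupanov bound that a Boolean function on $n$ variables with at most $\alpha 2^n$ ones has circuit complexity $O\!\left(H(\alpha)\,2^n/n\right)$, where $H(\alpha)=-\alpha\log\alpha-(1-\alpha)\log(1-\alpha)$, giving a circuit of size $O\!\left(H(\epsilon)\,\frac{N^2}{\log N}\right)$. Adding the two pieces, the single XOR, and the $O(\log^2 N)$ index remapping, $M$ has circuit complexity at most $\bigl(c_1\epsilon+c_2 H(\epsilon)+o(1)\bigr)\frac{N^2}{\log N}$ for absolute constants $c_1,c_2$.

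The choice of $\epsilon$ closes the argument. Since $H(\epsilon)\to 0$ as $\epsilon\to 0^{+}$ and dominates the $\epsilon$ term, for every sufficiently small $\delta>0$ we may fix a constant $\epsilon=\epsilon(\delta)>0$ so small that $c_1\epsilon+c_2 H(\epsilon)<\delta/2$; then for all large $N$ the bound above is below $\frac{\delta N^2}{2\log N}$, contradicting the fact that $t$ has hardness $\frac{\delta N^2}{2\log N}$. Hence $M$ is $(\epsilon N,\epsilon N^2)$-rigid, the reduction is correct, and it plainly runs in polynomial time.

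I expect the one genuinely nontrivial ingredient to be the refined Shannon--Lupanov bound for sparse functions: the naive encodings of $S$'s support (a full-length DNF, a prefix trie, or a hash-then-table-lookup circuit) all overshoot the worst-case size $\frac{N^2}{2\log N}$ by a $\Theta(\log N)$ factor, and recovering the needed $1/\log N$ savings requires the Lupanov block-decomposition/universal-multiplexer trick, exploiting that after splitting the $\lceil\log(N^2)\rceil$-bit input most ``slices'' of $\mathbb{1}_S$ are nearly constant. I would either cite this bound from the Boolean-function-complexity literature or include a short self-contained derivation; the low-rank part, the non-power-of-two bookkeeping, and the $o(1)$ terms are all routine.
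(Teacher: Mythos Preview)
Your proposal is correct and follows essentially the same approach as the paper: decompose a non-rigid $M$ as $LR\oplus S$, encode $L$ and $R$ via Shannon--Lupanov to contribute $O(\epsilon)\cdot\frac{N^2}{\log N}$ gates, encode the sparse $S$ via the refined Lupanov bound to contribute $O(H(\epsilon))\cdot\frac{N^2}{\log N}$ gates, and choose $\epsilon$ small enough that the sum drops below $\delta$. The only cosmetic differences are that the paper encodes $L,R$ column-/row-wise via $\epsilon N$ separate single-output circuits rather than one multi-output circuit, and that you track the $2\log N$ denominator and index-remapping overhead a bit more explicitly.
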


\begin{proof}
To prove this, it suffices to show that for any $N \times N$ matrix $M$ which is not $(\epsilon N, \epsilon N^2)$-rigid, we can construct a boolean circuit with $f(\epsilon)O(\frac{N^2}{\log N})$ gates which decides the value of $M_{i,j}$ given the $2\lceil \log N \rceil$-bit input $(i,j)$, for some function $f$ which approaches zero as $\epsilon$ approaches zero. This then implies that for any fixed $\delta$, an $N^2$-bit truth table requiring circuits of size $\frac{\delta N^2}{\log N}$ must be $(\epsilon N, \epsilon N^2)$-rigid  for some $\epsilon > 0$ which is a function only of $\delta$ (and otherwise determined by $f$ and the constants hidden in the $O(\cdot)$ term).

Say $M$ is not $(\epsilon N, \epsilon N^2)$-rigid. So there exists an $N \times \epsilon N$ matrix $L$, an $\epsilon N \times N$ matrix $R$, and an $N \times N$ matrix $S$ with at most $\epsilon N^2$ nonzero entries, such that $LR \oplus S = M$. We will construct a circuit allowing us to efficiently index $M$ which uses these matrices $L, R, S$ as advice. To encode $L$ and $R$, we can utilize the well-known theorem of Shannon that any truth table of length $N$ can be computed by a circuit of size $O(\frac{N}{\log N})$ \cite{Shannon}. Thus, $L$ can be specified as a list of $\epsilon N$ circuits, each of size $O(\frac{N}{\log N})$, where the $j^{th}$ circuit $L_j$ represents the $j^{th}$ column, and $L_j(i)$ computes $L_{i,j}$. The same can then be done for $R$ (indexing columns instead). To encode $S$, we employ a refinement of Shannon's result due to Lupanov \cite{Coin-Problem}, which tells us that for sufficiently large $N$, any truth table of length $N$ with at most $\epsilon N$ nonzero entries can be computed by circuit of size
\[
\frac{\log\binom{N}{\epsilon N}}{\log \log \binom{N}{\epsilon N}} + o\left(\frac{N}{\log N}\right) \leq H(\epsilon) O(\frac{N}{\log N})
\]
Where $H$ denotes the binary entropy function. Thus $S$ can be computed by a circuit of size $H(\epsilon) O(\frac{N^2}{\log N})$.

It remains to show that the additional circuitry we need to compute $M_{i,j}$ given $i, j$ and the encodings of $L,R,S$ does not increase things too much. By definition, we have that: 
\[M_{i,j} = \langle row_i(L), col_j(R) \rangle \oplus S_{i,j}\]
where the dot product is taken over $\mathbb{F}_2$. Now consider the circuit diagram shown in Figure~\ref{fig:rigid}:

\begin{figure}[H]
    \centering
    \includegraphics[width=0.3\textwidth]{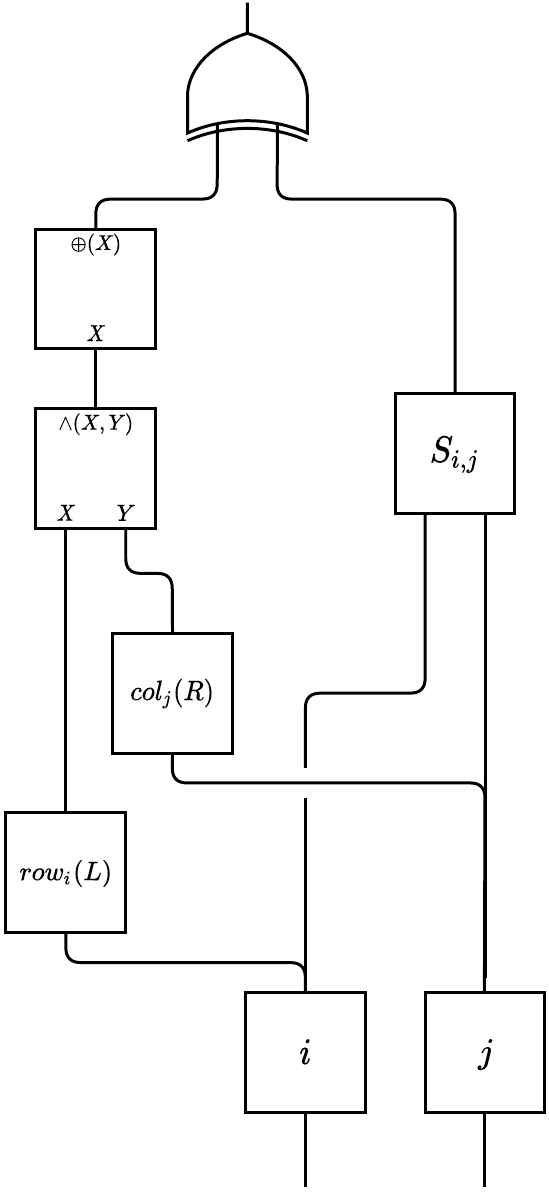}
    \caption{A small circuit for a non-rigid truth table}
    \label{fig:rigid}
\end{figure}
The subcircuit computing $\text{row}_i(L)$ is defined as follows: as described above, $L$ represents a $N \times \epsilon N$ matrix, specified as a list of $\epsilon N$ circuits of size $O(\frac{N}{\log N})$, each computing a column of $L$. $\text{row}_i(L)$ feeds $i$ into each of these circuits in parallel to get an $\epsilon N$-bit string giving value of the $i^{th}$ row of $L$. $\text{col}_j(R)$ is defined analogously. The circuit $S_{i,j}$ simply computes the $(i,j)^{th}$ entry of $S$ given $i,j$. Finally, the subcircuit $\oplus(X)$ computes the parity of its input string, the subcircuit $\land(X,Y)$ computes the bit-wise AND of two equal length strings, and the terminal gate computes the two-bit parity function. Given the previous equation relating the $(i,j)^{th}$ index of $M$ to the $(i,j)^{th}$ rows/columns/indices of $L, R, S$, it is straightforward to see that this circuit performs the necessary calculation.

It is clear that the $\oplus(X)$ and $\land(X,Y)$ can be implemented with a number of gates linear in their input size, which in this case is $\epsilon N$. From the analysis above, each of $\text{row}_i(L)$ and $\text{col}_j(R)$ can be implemented using $\epsilon O(\frac{N^2}{\log N})$ gates, and $S_{i,j}$ can be implemented using $H(\epsilon)  O(\frac{N^2}{\log N})$ gates. So overall, this circuit has size $(H(\epsilon) + \epsilon)O(\frac{N^2}{\log N})$. Since $H(\epsilon) + \epsilon$ approaches zero for decreasing  $\epsilon$, this implies that for any fixed $\delta > 0$, an $N^2$-bit truth table requiring circuits of size $\frac{\delta N^2}{\log N}$ must be $(\epsilon N, \epsilon N^2)$ rigid for some $\epsilon>0$ which is a function only of $\delta$ (and which is otherwise determined by the constants hidden in the $O(\cdot)$ expressions above).
\end{proof}

We thus conclude that if {\cl E} contains a language of circuit complexity $\Omega(\frac{2^n}{n})$, then there is a polynomial time construction of $(\Omega(n), \Omega(n^2))$-rigid matrices over $\mathbb{F}_2$.

\subsection{Space-Bounded Communication Complexity}

The class ${\cl PSPACE^{CC}}$ was defined originally in \cite{communication-classes} as a generalization of the class ${\cl PH^{CC}}$ to an unbounded alternation of quantifiers. We will not give this original definition, but rather a simplification due to \cite{space-communication}.

\begin{definition}
Let $f: \{0,1\}^n \times \{0,1\}^n \rightarrow \{0,1\}$. We say $f$ has a space-$s$ protocol if there is a deterministic protocol deciding $f$ of the following form. Alice receives $x \in \{0,1\}^n$, and Bob receives $y \in \{0,1\}^n$. There is an $s$-bit shared memory, and Alice and Bob alternate turns writing to this memory. On a player's turn, they can modify the contents of the $s$-bit shared memory as a function only of its previous contents and their private input $x$ or $y$, or decide to halt and output some $z \in \{0,1\}$ (this decision is also a function only of their input and the shared memory's previous state). This protocol is valid if $f(xy) = z$ for all $x,y$.
\end{definition}

By a result of Song \cite{space-communication}, we have that if $f  \in {\cl PSPACE^{CC}}$ then $f$ has a $poly(\log n)$ space protocol (Song uses a slightly different model where Alice and Bob have private $s$-bit tapes, but our model is at least as strong up to a doubling in space since sharing the tape only increases their ability to communicate). Due to a basic counting argument, most functions $f$ require $\Omega(n)$ space, and any such function must lie outside of $ {\cl PSPACE^{CC}}$. However, it has been a long standing open problem to give an explicit construction a communication matrix outside of even ${\cl PH^{CC}}$ \cite{space-communication}. We thus define the following search problem:

\begin{definition}
{\sc $\delta$-SPACE} is the following search problem. Given $1^N$, where $N=2^n$, output a communication matrix $\{0,1\}^n \times \{0,1\}^n \rightarrow \{0,1\}$ such that $f$ requires space-$\delta n$ communication protocols.
\end{definition}

\begin{theorem}\label{thm:space-circuit}
For any $\delta < \epsilon < \frac{1}{2}$, {\sc $\delta$-Space} reduces in polynomial time to {\sc $(\frac{1}{2} + \epsilon)$-Hard}.
\end{theorem}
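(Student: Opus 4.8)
The plan is to argue the contrapositive at the level of individual instances: every $f:\{0,1\}^n\times\{0,1\}^n\to\{0,1\}$ that admits a space-$\delta n$ communication protocol has a truth table --- a string of length $N^2=2^{2n}$ on $2n$ variables --- of circuit complexity $O(N^{1+2\delta})$, which for large $N$ is strictly less than $N^{1+2\epsilon}=(N^2)^{1/2+\epsilon}$. Granting this, the reduction is immediate and clearly polynomial time: on input $1^N$ with $N=2^n$, call {\sc $(\frac{1}{2}+\epsilon)$-Hard} on $1^{N^2}$ and return the resulting $N^2$-bit string, read as the $N\times N$ communication matrix obtained by splitting the $2n$ input bits into an $n$-bit half for Alice and an $n$-bit half for Bob. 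Since the returned matrix cannot be computed by circuits of size $(N^2)^{1/2+\epsilon}$, by the contrapositive it has no space-$\delta n$ protocol, and hence is a valid solution to {\sc $\delta$-Space}; small $N$ are handled by brute force as usual.

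The content is therefore the simulation lemma. First I would bound the number of rounds: for a fixed pair of inputs $x,y$, the configuration ``(contents of the $s$-bit shared memory, whose turn it is)'' evolves deterministically, and there are only $2\cdot 2^s$ such configurations, so a valid (hence everywhere-halting) space-$s$ protocol must terminate within $2^{s+1}$ rounds on \emph{every} input --- otherwise a configuration repeats and the protocol loops forever. Second, each player's move is a \emph{fixed} function of that player's $n$-bit private input and the $s$-bit memory, producing a new $s$-bit memory together with a ``halt'' flag and (when halting) an output bit, i.e.\ a function $\{0,1\}^{n+s}\to\{0,1\}^{s+2}$; applying the Shannon circuit-size upper bound \cite{Shannon} (as in the proof of Theorem~\ref{thm:rigidity-circuit}) to each output coordinate, it has a circuit of size $O\!\left(\frac{(s+2)\,2^{n+s}}{n+s}\right)$. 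Hardcoding Alice's and Bob's move functions, the initial memory, and who moves first as constants, a circuit for the truth table of $f$ reads $(i,j)=(x,y)$, runs $2^{s+1}$ copies of these move-circuits in series, threads the memory through, maintains a ``freeze once halted'' flag (costing only $O(2^s)$ extra gates), and outputs the committed answer bit. With $s=\delta n$ the per-round size is $O(2^{(1+\delta)n})=O(N^{1+\delta})$ --- the factor $s$ in the numerator is absorbed by the $n+s$ in the denominator --- so the whole circuit has size $O\!\left(2^{\delta n+1}\cdot N^{1+\delta}\right)=O(N^{1+2\delta})$, as claimed, and since $\delta<\epsilon$ this is below $N^{1+2\epsilon}$ for all sufficiently large $N$.

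I expect the only delicate points to be (i) the round bound --- making precise that ``valid'' forces halting within $2^{O(s)}$ rounds on all inputs, so that only polynomially-in-$N$ many copies of the move-circuits are needed --- and (ii) keeping the per-round circuit size at $O(2^{(1+\delta)n})$ rather than picking up a spurious $\mathrm{poly}(n)$ factor from the $s$ output coordinates. Everything else is bookkeeping: the alternation pattern, the fixed initial state, and the ``freeze'' gadget are all input-independent and wired in as constants, and the final comparison $O(N^{1+2\delta})<N^{1+2\epsilon}=(N^2)^{1/2+\epsilon}$ is routine.
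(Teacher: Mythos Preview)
Your proposal is correct and follows essentially the same approach as the paper: bound the number of rounds by $2^{s+1}$ via the pigeonhole argument on configurations, hardcode the two players' transition functions as advice, and unroll the protocol into a circuit of size roughly $2^{s+1}$ copies of a size-$O(N\cdot 2^s)$ block. The only cosmetic difference is that the paper stores the transition functions as $2s+4$ raw $N\times 2^s$ truth-table matrices and uses indexing circuitry (incurring an extra $\log N$ factor, $O(N^{1+2\delta}\log N)$), whereas you apply Shannon's upper bound directly to the $(s{+}2)$-output move function and thereby absorb that factor into the denominator to get $O(N^{1+2\delta})$; since the strict inequality $\delta<\epsilon$ is needed anyway, this tidier bound makes no difference to the conclusion.
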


\begin{proof}
The proof follows the exact same strategy as above, constructing a smaller-than-worst-case circuit for any matrix with a $\delta n$-space protocol. We will omit the detailed construction of our circuit since it is very similar to that of the above proof, but give here a clear sketch of its essential structure. Let $f: \{0,1\}^n \times \{0,1\}^n \rightarrow \{0,1\}$, and let $N = 2^n$. We can represent a space $s$ protocol for $f$ as $2s + 4$ different $N \times 2^s$ binary matrices, with each of the first $2s$ matrices telling us how one of the two players will modify a certain cell of the shared memory as a function of its previous state and their input, and the last four matrices determining whether a certain player will halt or continue and the value they will output if they halt, again as a function of their input and the shared memory contents. For $s = \epsilon \log N$ (with $\epsilon < 1$), we can then take these matrices as advice to our circuit, for a total of $\epsilon N^{1 + \epsilon} \log N$ bits of advice. We can then add circuitry to simulate $2^{s+1} = 2N^{\epsilon}$ steps of this protocol on a given input $x,y \in \{0,1\}^n \times \{0,1\}^n$, indexing into the advice matrices in order to determine what to do next using the same indexing constructions we had for the proof of Theorem~\ref{thm:rigidity-circuit}. Any matrix solvable by a space $s$ protocol will be solved by a protocol that halts after $2^{s+1}$ steps (otherwise it will loop forever), so simulating for $2^{s+1}$ steps suffices. Since each of the indexing operations can be implemented using a number of gates linear in $\epsilon N^{1 + \epsilon} \log N$ using the constructions from the proof of Theorem~\ref{thm:rigidity-circuit}, our overall circuit for $f$ will have size $O(2N^{\epsilon}\epsilon N^{1 + \epsilon} \log N) = O(N^{1 + 2\epsilon} \log N)$. Thus, for any $\epsilon < \frac{1}{2}$, a truth table of length $N^2$ requiring circuits of size $N^{1 + 2\epsilon}$ will require $\delta \log N$-space protocols for any $\delta < \epsilon$.
\end{proof}

\subsection{Bit Probe Lower Bounds}
In \cite{Elias-Flower}, Elias and Flower defined a broad model for studying the space/query complexity of data structure problems, known as the ``bit-probe model.'' 
\begin{definition}
Given two sets $D,Q$, a function $f: D \times Q \rightarrow \{0,1\}$ and an integer $b$, the bit-probe complexity of $f$ for space $b$, denoted $BC_b(f)$, is the minimum over all encodings $G: D \rightarrow \{0,1\}^b$ of the number of bits of $G(x)$ that need to be probed in order to determine $f(x,y)$ for the worst case $x \in D,y \in Q$, given access to $y$.
\end{definition}
In this general definition of a data structure problem, we think of $D$ as the set of all possible pieces of ``data'' we might wish to encode in our data structure, $b$ as the number of bits we can use to encode a piece of data, $Q$ as the set of queries we wish to answer about an encoded piece of data, and $f$ as telling us the correct answers to all data/query pairs. $BC_b(f)$ then tells us the minimum number of probes required by any space-$b$ data structure in order to answer every query correctly for every possible piece of data.

This model was investigated further by Miltersen \cite{Bro}, who showed, using a simple counting argument, that most problems require an infeasible amount of space/probes in this model, but pointed out that no explicit data structure problem is known to be infeasible in this sense. We thus define the following explicit construction problem:
\begin{definition}
{\sc $\delta$-Probe} is the following search problem: given $1^N$ where $N = 2^n$ output a truth table $f: \{0,1\}^n \times \{0,1\}^n \rightarrow \{0,1\}$ such that $BC_{2^{\delta n}}(f) > \delta n$.
\end{definition}

\begin{theorem}\label{thm:probe-circuit}
For any $\delta < 2\epsilon < 1$, {\sc $\delta$-Probe} reduces in polynomial time to {\sc $(\frac{1}{2} + \epsilon)$-Hard}.
\end{theorem}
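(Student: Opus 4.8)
The plan is to follow the template of Theorems~\ref{thm:rigidity-circuit} and~\ref{thm:space-circuit}: show that any $f$ with $BC_{2^{\delta n}}(f) \le \delta n$ admits a substantially-smaller-than-worst-case circuit computing $f(x,y)$ from $(x,y)$, so that a sufficiently hard truth table is forced to be {\sc $\delta$-Probe}-hard. Concretely, the reduction on input $1^N$ (with $N = 2^n$) invokes {\sc $(\frac{1}{2}+\epsilon)$-Hard} on input $1^{N^2}$, obtains an $N^2$-bit truth table $x$ of circuit complexity exceeding $(N^2)^{1/2+\epsilon} = N^{1+2\epsilon}$ (a valid, total instance since $\epsilon < \frac{1}{2}$), and reinterprets $x$ as a function $f:\{0,1\}^n\times\{0,1\}^n\to\{0,1\}$, the first argument being the ``data'' half and the second the ``query'' half. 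Correctness is by contrapositive: if $f$ is not a solution, i.e.\ $BC_{2^{\delta n}}(f)\le \delta n$, we build a circuit for $f$ of size $o(N^{1+2\epsilon})$, contradicting the hardness of $x$; as usual finitely many small input lengths are handled by brute force, and the reduction is plainly polynomial time.

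The circuit is assembled as follows. A space-$b$ bit-probe solution with $b = 2^{\delta n}$ and probe budget $t = \delta n$ consists of an encoding $G:\{0,1\}^n\to\{0,1\}^{b}$ of the data half $x$, together with, for every query $y$, a depth-$\le t$ decision tree $T_y$ over the $b$ bits of $G(x)$ whose leaves are labelled by the correct value $f(x,y)$. The single Boolean function $(x,j)\mapsto (G(x))_j$ with $j\in\{0,1\}^{\delta n}$ is a truth table of length $N^{1+\delta}$, so by Shannon--Lupanov it has a circuit of size $O(N^{1+\delta}/\log N)$; one probe of $G(x)$ is one evaluation of this subcircuit, and there are only $t = \delta n = O(\log N)$ probes. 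Each $T_y$ is describable in $O(2^{t}\log b) = O(N^{\delta}\log N)$ bits, so the family $\{T_y\}_y$ is $O(N^{1+\delta}\log N)$ bits of advice, from which $T_y$ is selected by a multiplexer controlled by $y$ of size $O(N^{1+\delta}\log N)$ (or, equivalently, via another application of Shannon's theorem). Walking $T_y$ --- at each of the $\le t$ steps, read off the next probe index from the description of $T_y$, query $G$ at that index, and descend on the returned bit --- uses the same indexing gadgets as in the proof of Theorem~\ref{thm:rigidity-circuit} and contributes only lower-order terms. Summing, the circuit has size $O(N^{1+\delta}\log N)$, and since $\delta < 2\epsilon$ this is $o(N^{1+2\epsilon})$; hence for all sufficiently large $N$ it contradicts the circuit complexity lower bound $N^{1+2\epsilon}$ guaranteed for $x$, so $f$ must satisfy $BC_{2^{\delta n}}(f) > \delta n$ and is a solution to {\sc $\delta$-Probe}.

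The main obstacle is the circuit-size accounting, and in particular handling the \emph{adaptivity} of the bit-probe model: the set of bits of $G(x)$ examined on query $y$ is not fixed in advance but is chosen step by step as a function of the answers already received, so the circuit genuinely has to simulate the decision tree $T_y$ rather than read off a fixed probe pattern. This is exactly what the indexing and multiplexer subcircuits from the proofs of Theorems~\ref{thm:rigidity-circuit} and~\ref{thm:space-circuit} are designed for, and the delicate point is checking that $t = \delta n$ rounds of such indexing, together with the selection of $T_y$ and the Shannon circuit for $G$, still total only $O(N^{1+\delta}\log N)$ --- which beats $N^{1+2\epsilon}$ precisely because of the hypothesis $\delta < 2\epsilon$.
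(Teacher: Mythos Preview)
Your proposal is correct and follows essentially the same approach as the paper: argue by contrapositive, hardcoding the encoding map $G$ together with the per-query probe scheme as advice into a circuit that simulates the bit-probe access, obtaining a circuit of size $\tilde{O}(N^{1+\delta}) = o(N^{1+2\epsilon})$ for any $f$ that fails {\sc $\delta$-Probe}.

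One difference worth flagging: the paper's proof treats the probes as \emph{non-adaptive}. It introduces a function $H: Q \to [b]^k$ giving the probe positions from $y$ alone, and a function $\phi:\{0,1\}^k\times Q\to\{0,1\}$ computing the answer from the probe outcomes and $y$; the circuit is then built from three advice matrices $R,S,Z$ encoding $G$, $H$, and $\phi$ respectively, with total size $O(|D|\,b + |Q|\,k\log b + 2^k|Q|) = O(N^{1+\delta})$. You instead handle the general \emph{adaptive} model by storing a depth-$t$ decision tree for each query and simulating it step by step, which costs an extra $\log N$ factor (from the $\log b$ bits per internal node) but proves the statement for the stronger model. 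Since the standard bit-probe model in the literature is adaptive, your treatment is arguably more faithful; the paper's simpler bookkeeping suffices under its non-adaptive reading of the definition and would need exactly your decision-tree encoding to cover adaptivity.
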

\begin{proof}

Say $f$ is a function $f: D \times Q \rightarrow \{0,1\}$ such that $BC_b(f) \leq k$. So in particular there is an encoding $G: D \rightarrow \{0,1\}^b$ such for any $x \in D$, $y \in Q$, given access to $y$ we can determine $f(x,y)$ using at most $k$ probes to $G(x)$. Let $H: Q \rightarrow [b]^k$ be the function which, given $y$, tells us which positions of $G(x)$ to query. Finally, let $\phi: \{0,1\}^k \times Q \rightarrow \{0,1\}$ be the function which determines $f(x,y)$ given the results of the probes and the value of $y$. We will now give a compact representation for $f$, and then use it to construct a circuit computing $f$.

Let $R$ be a $|D| \times b$ binary matrix whose rows are indexed by elements of $D$, where $R_{i,j}$ gives the value of the $j^{th}$ bit of $G(i)$. Let $S$ be a $|Q| \times k \lceil \log b \rceil$ matrix whose rows are indexed by elements of $Q$, such that the $i^{th}$ row of $S$ is $H(i) \in [b]^k$. Finally, let $Z$ be a $2^k \times |Q|$ matrix such that $Z_{i,j} = \phi(i,j)$.

Now, given $x \in D$, $y \in Q$, we can use $S,R,Z$ to determine $f(x,y)$ as follows. First find the $y^{th}$ row of $S$ to get $H(y)$, which is a list of $k$ indices in $[b]$. Next, find the $x^{th}$ row of $R$, which is the $b$-bit string $G(x)$. Then, probe the indices specified by $H(y)$ to get some $k$-bit string $w$, and finally output $Z_{w, y}$, which is precisely $\phi(w,y) = f(x,y)$.

As in the previous proofs, it can easily be shown that these indexing operations can be accomplished with circuits of size linear in $S,R,Z$. Therefore, $f$ has a circuit of size $O(|S| + |R| + |Z|) = O(|Q|k\log b + |D|b + 2^k|Q|)$. So for any function $D \times Q \rightarrow \{0,1\}^n$ requiring circuits of size $\omega(|Q|k\log b + |D|b + 2^k|Q|)$, we must have $BC_b(f) > k$. 

In particular, if we take $D = Q = \{0,1\}^n$, and for any fixed $\epsilon < 1$ we take $b= 2^{\epsilon n}$, $k = \epsilon n$, we get that for any function $f: \{0,1\}^{2n} \rightarrow \{0,1\}$ requiring circuits of size $\omega(2^{(1 +\epsilon) n})$, it must be the case that $BC_{2^{\epsilon n}}(f) > \epsilon n$, since $|Q|k\log b + |D|b + 2^k|Q| = 2^n \epsilon^2 n^2 + 2^n2^{\epsilon n} + 2^{\epsilon n}2^n = O(2^{(1+\epsilon) n})$. So if we take $\delta < \epsilon$, any truth table of length $N^2$ with hardness $N^{1+\epsilon}$, or in other words any solution to {\sc $(\frac{1}{2} + \frac{\epsilon}{2})$-Hard} on input $1^{N^2}$, must satisfy $BC_{2^{\delta n}}(f) > \delta n$. 
\end{proof}

\subsection{Some Concluding Thoughts} As noted in Section~\ref{sec:constructions-in-apepp}, Theorem~\ref{thm:space-circuit} and Theorem~\ref{thm:probe-circuit} immediately imply that the problems {\sc $\delta$-Space} and {\sc $\delta$-Probe} lie in {\cl APEPP}, since we can construct truth tables with hardness $\ttbound$ in {\cl APEPP} by Theorem~\ref{thm:truth-tables-in-apepp}. Although we phrase these results as reductions, they can also be interpreted as giving conditional polynomial time constructions of various objects, under different circuit lower bound assumptions for the class {\cl E}. In particular, Theorems \ref{thm:space-circuit} and \ref{thm:probe-circuit} establish that if {\cl E} contains a language of circuit complexity $2^{(\frac{1}{2} + \epsilon)n}$ for some $\epsilon > 0$, then polynomial time constructions of hard problems in the bit-probe and space-bounded communication models follow. Theorem~\ref{thm:rigidity-circuit} establishes that if {\cl E} contains a language of circuit complexity $\Omega(\frac{2^n}{n})$, then polynomial time constructions of $(\Omega(n), \Omega(n^2))$-rigid matrices over $\mathbb{F}_2$ follow; such matrices would be sufficient to carry out Valiant's lower bound program. 

It would be interesting as well to find some natural explicit construction problem for which a reduction exists in the opposite direction, i.e. this problem is at least has hard as {\sc Hard Truth Table} or perhaps {\sc $\epsilon$-Hard}. Aside from {\sc $K^{n^2}_U$-Random} for which such a reduction is immediate, we do not know of any other examples. However, we observe the following dichotomy: any explicit construction problem either has a non-trivial algorithm, or is at least as difficult as constructing a somewhat hard truth table. More precisely:
\begin{lemma}
Let $f: \mathbb{N} \rightarrow \mathbb{N}$ be non-increasing, and let $\Pi$ be any property (language) recognizable in complexity class {\cl C}. Let {\sc $\Pi$-Construction} be the search problem: given $1^n$, output an $n$-bit string with property $\Pi$. If {\sc $\Pi$-Construction} is total for sufficiently large $n$, then for sufficiently large $n$ one of the following holds:
\begin{enumerate}
    \item There is a ${\cl TIME(2^{\Tilde{O}(f(n))})}$ algorithm using a {\cl C}-oracle that solves {\sc $\Pi$-Construction} on length $n$ inputs.
    \item There is a polynomial time reduction from the problem of constructing an $n$-bit truth table with circuit complexity $f(n)$ to {\sc $\Pi$-Construction} on length $n$ inputs.
\end{enumerate}
\end{lemma}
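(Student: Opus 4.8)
The plan is to fix a sufficiently large input length $n$ and split into two cases according to the smallest circuit complexity attained by any length-$n$ string satisfying $\Pi$. Write $S_\Pi^n \subseteq \{0,1\}^n$ for the set of valid solutions to {\sc $\Pi$-Construction} on $1^n$; by the totality hypothesis $S_\Pi^n \neq \emptyset$ once $n$ is large. I would take the cases to be: (A) some $x \in S_\Pi^n$ is computed by a circuit of size less than $f(n)$; and (B) every $x \in S_\Pi^n$ has circuit complexity at least $f(n)$. These are exhaustive, and I claim (A) yields option~1 while (B) yields option~2.

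In case (A) I would establish option~1 by exhaustive search. On input $1^n$, the algorithm enumerates all circuits on $\lceil \log n \rceil$ inputs in nondecreasing order of size; for each circuit $C$ it forms the $n$-bit string $t_C$ given by $(t_C)_i = C(i)$ for $i \leq n$ (the truncated truth table of Definition~\ref{def:circ-size}), uses the {\cl C}-oracle to test whether $t_C \in \Pi$, and halts with output $t_C$ at the first success. Since $S_\Pi^n$ is nonempty and every string is computed by some circuit, the search always halts with a valid solution; and under the case~(A) hypothesis it halts before it exhausts the circuits of size $< f(n)$. A standard counting bound gives $2^{\Tilde{O}(f(n))}$ circuits of size at most $f(n)$ (in the relevant regime $f(n) = \Omega(\log n)$), and each iteration costs $\text{poly}(n)$ — evaluating a circuit of size $O(n)$ at $n$ points, plus one oracle test on an $n$-bit string — so the algorithm runs in ${\cl TIME}(2^{\Tilde{O}(f(n))})$ with a {\cl C}-oracle. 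Note the algorithm need not know $f(n)$: it simply runs until it finds a solution, and the running-time bound follows a posteriori from the case hypothesis.

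In case (B) I would use the trivial reduction: on input $1^n$ for the problem of constructing an $n$-bit truth table of circuit complexity $f(n)$, query the {\sc $\Pi$-Construction} solver on $1^n$, receive some $x \in S_\Pi^n$, and output $x$ unchanged; by the case~(B) hypothesis every element of $S_\Pi^n$ — in particular $x$ — has circuit complexity at least $f(n)$, so $x$ is a valid answer, whichever solution the solver returned. This reduction runs in polynomial time. Cases~(A) and~(B) together give that for all sufficiently large $n$ one of options~1 and~2 holds. The one point I would be careful about — the ``main obstacle,'' such as it is — is orienting the case split correctly: the search algorithm of case~(A) is already correct as soon as a \emph{single} $\Pi$-string has a small circuit, which forces the hard side of the dichotomy into the complementary regime where \emph{every} solution is circuit-hard, and it is exactly that strong hypothesis that makes the single-query reduction of case~(B) sound against an adversarially chosen {\sc $\Pi$-Construction} solver. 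The remaining ingredients (the $2^{\Tilde{O}(f(n))}$ count of small circuits, the $\text{poly}(n)$ per-circuit evaluation cost) are routine.
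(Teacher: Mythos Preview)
Your proposal is correct and follows essentially the same ``easy witness'' argument as the paper: split on whether some $\Pi$-string of length $n$ has a small circuit, search over small circuits in the first case, and use the trivial identity reduction in the second. Your write-up is in fact a bit more careful than the paper's (noting that the search need not know $f(n)$, and explaining why the case split must be oriented so that the reduction side gets the ``every solution is hard'' hypothesis).
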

\begin{proof}
The proof is a straightforward application of the ``easy witness'' paradigm \cite{easy-witness}. Let $\Pi^n$ be the set of $n$-bit strings with property $\Pi$ and let $H_f^n$ be the set of $n$-bit strings with circuit complexity at most $f(n)$. If $\Pi^n \cap H_f^n = \emptyset$ then any solution to the explicit construction problem for $\Pi$ is necessarily a string with circuit complexity exceeding $f(n)$, and hence a polynomial time reduction from truth table construction to {\sc $\Pi$-Construction} trivially follows. On the other hand, if $\Pi^n \cap H_f^n \neq \emptyset$, we can search over $H_f^n$ for a solution to {\sc $\Pi$-Construction} and will be guaranteed to find a solution. Since $|H_f^n| = 2^{\Tilde{O}(f(n))}$, we can then solve {\sc $\Pi$-Construction} by iterating over $H_f^n$ and using a {\cl C}-oracle to test if each potential solution indeed holds property $\Pi$, in ${\cl TIME(2^{\Tilde{O}(f(n))})}$.
\end{proof}

% In particular, we conclude that for any total property $\Pi$ recognizable in {\cl NP} (so for example any explicit construction problem in {\cl APEPP}), either {\sc $\Pi$-Construction} has a subexponential-time algorithm using an {\cl NP} oracle, or {\sc $\epsilon$-Hard} reduces in polynomial time to {\sc $\Pi$-Construction} for some fixed $\epsilon > 0$ (more precisely, this dichotomy holds separately for each input length, and so at least one of the two cases holds for infinitely many input lengths).
Setting $f(n) = n^\epsilon$ and combining this with Theorem~\ref{thm:epsilon-hard-reduction}, we conclude:
\begin{theorem}\label{thm:dichotomy}
Let {\sc EC} be any explicit construction problem in {\cl APEPP} (more formally, any search problem in ${\cl STF\Sigma^P_2} \cap {\cl APEPP}$). Then one of the following holds:
\begin{enumerate}
    \item {\sc EC} is {\cl APEPP}-complete under ${\cl P}^{\cl NP}$ reductions.
    \item For every $\epsilon > 0$, {\sc EC} can be solved in time $2^{n^{\epsilon}}$ with an {\cl NP} oracle, for infinitely many $n$.
\end{enumerate}
\end{theorem}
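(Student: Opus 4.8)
The statement to prove is Theorem~\ref{thm:dichotomy}: every explicit construction problem {\sc EC} in ${\cl STF\Sigma^P_2} \cap {\cl APEPP}$ is either {\cl APEPP}-complete under ${\cl P}^{\cl NP}$ reductions, or solvable in time $2^{n^\epsilon}$ with an {\cl NP} oracle for infinitely many $n$ (for every $\epsilon>0$). The natural route is to invoke the preceding lemma (the ``easy witness''-style dichotomy) with the circuit-size parameter $f(n) = n^\epsilon$ and the complexity class ${\cl C} = {\cl NP}$, since {\sc EC} being in ${\cl STF\Sigma^P_2}$ means the property $\Pi$ defining its solution set is checkable in {\cl coNP} (verifying $R(1^n,y)$ is a $\forall$-statement), and hence its complement, and testing membership, sit inside the polynomial hierarchy at a level absorbed by an {\cl NP} oracle after one round of guessing. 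I would first spell out that for a fixed input length $n$, the lemma gives, for each $\epsilon$, one of two outcomes: either $\Pi^n \cap H^n_{n^\epsilon} = \emptyset$, in which case every solution to {\sc EC} on $1^n$ has circuit complexity exceeding $n^\epsilon$, or $\Pi^n \cap H^n_{n^\epsilon}\neq\emptyset$, in which case brute-force search over the at most $2^{\tilde O(n^\epsilon)}$ easy witnesses, each tested with an {\cl NP} oracle, solves {\sc EC} on $1^n$ in time $2^{\tilde O(n^\epsilon)} \le 2^{n^{\epsilon'}}$ for any $\epsilon' > \epsilon$.

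The crux is converting the ``empty intersection'' branch into {\cl APEPP}-completeness. Here I would argue as follows. Fix a small constant $\epsilon_0 \in (0,1)$. Suppose that for this $\epsilon_0$, the intersection $\Pi^n \cap H^n_{n^{\epsilon_0}}$ is empty for all but finitely many $n$ — then on every large $n$, a solver for {\sc EC} on $1^n$ outputs a string of circuit complexity $> n^{\epsilon_0}$, i.e.\ a solution to {\sc $\epsilon_0$-Hard} on $1^n$. That is a polynomial-time (indeed trivial) reduction from {\sc $\epsilon_0$-Hard} to {\sc EC}. By Theorem~\ref{thm:epsilon-hard-reduction} (or Theorem~\ref{thm:apepp-completeness}), {\sc $\epsilon_0$-Hard} is {\cl APEPP}-hard under ${\cl P}^{\cl NP}$ reductions, so composing gives that {\sc EC} is {\cl APEPP}-hard under ${\cl P}^{\cl NP}$ reductions; since {\sc EC} $\in {\cl APEPP}$ by hypothesis, it is {\cl APEPP}-complete, and we are in case~1. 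Conversely, if for \emph{some} fixed $\epsilon_0$ the intersection is nonempty for infinitely many $n$, the brute-force search above puts {\sc EC} in $2^{n^{\epsilon}}$ time with an {\cl NP} oracle (for every $\epsilon > \epsilon_0$) on those infinitely many $n$ — but I need this for \emph{every} $\epsilon > 0$, so I must be slightly more careful: I would instead quantify the argument so that for each $\epsilon>0$ I ask whether $\Pi^n \cap H^n_{n^{\epsilon}}$ is nonempty infinitely often; if yes for all $\epsilon$ we land in case~2 uniformly, and if for some $\epsilon$ it is empty almost everywhere we land in case~1 via the preceding paragraph.

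The one genuine subtlety — and the step I expect to be the main obstacle — is handling the interplay of the ``infinitely often'' versus ``almost everywhere'' quantifiers across the continuum of choices of $\epsilon$, so that the final two cases are exhaustive and correctly quantified. Concretely: case~1 needs the empty-intersection condition to hold for all large $n$ and for a \emph{single} $\epsilon_0$; case~2 needs the nonempty-intersection condition to hold infinitely often for \emph{every} $\epsilon$. If for some $\epsilon_0$ the intersection is empty almost everywhere, case~1 fires. Otherwise, for every $\epsilon_0$ the intersection is nonempty infinitely often, and then for each such $\epsilon$ (in particular for each $\epsilon' > \epsilon$ we can absorb the $\tilde O$ loss) we get a $2^{n^{\epsilon'}}$-time-with-{\cl NP}-oracle algorithm on infinitely many $n$; since $\epsilon$ was arbitrary, case~2 holds for every $\epsilon$. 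I would also remark that the {\cl NP} oracle suffices to test the property $\Pi$ despite $\Pi$ living a priori in {\cl coNP}: one round of nondeterministic guessing of the candidate witness $y$ plus the {\cl coNP}-check collapses into a single {\cl NP}/{\cl coNP} query, and since we are only \emph{testing} already-enumerated witnesses rather than searching, a plain {\cl NP} oracle (via complementation) handles it. With these quantifier bookkeeping points made precise, the theorem follows immediately from the lemma and Theorem~\ref{thm:epsilon-hard-reduction}.
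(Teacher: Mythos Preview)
Your proposal is correct and follows essentially the same approach as the paper: apply the preceding easy-witness lemma with $f(n)=n^{\epsilon}$ and invoke Theorem~\ref{thm:epsilon-hard-reduction}/\ref{thm:apepp-completeness} for the completeness branch. Your additional bookkeeping on the infinitely-often versus almost-everywhere quantifiers and on why an {\cl NP} oracle suffices to test the {\cl coNP} property $\Pi$ is exactly the detail the paper's one-line proof leaves implicit.
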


\section{Open Problems}
The most significant question left open in this work is whether {\sc $\epsilon$-Hard} is complete for {\cl APEPP} under polynomial time reductions. One way to demonstrate evidence against this possibility would be to show hardness of {\sc Empty}, perhaps under cryptographic assumptions, since it is widely conjectured that {\sc $\epsilon$-Hard} \emph{does} have a polynomial time algorithm for some $\epsilon > 0$ (this is often cited as the primary reason for believing ${\cl P}={\cl BPP}$ \cite{IW}). More generally, if any sparse search problem is {\cl APEPP} complete under {\cl P} reductions, then {\cl APEPP} lies in {\cl FP/poly}, since we can hard-code all solutions of a fixed polynomial length for the complete sparse problem. It should be noted that the complexity of the dual version of {\sc Empty}, known as {\sc WeakPigeon}, is equivalent to the worst-case complexity of breaking collision-resistant hash functions (in {\sc WeakPigeon} we are given a circuit $C: \{0,1\}^n \rightarrow \{0,1\}^{m}$ with $m < n$, and asked to find a collision \cite{Jerabek-factoring}). A hardness result for {\sc Empty} would be interesting in another respect as well: just as the Natural Proofs barrier \cite{natural-proofs} shows that a generic method of proving circuit lower bounds via a ``Natural Property'' would require solving a hard computational problem, hardness of {\sc Empty} would show that we should not expect to prove exponential lower bounds for {\cl E} via an efficient algorithm that finds an empty pigeonhole for an arbitrary function; something about the \emph{specific} function mapping circuits to their truth tables would have to be utilized.

\section{Acknowledgements}
The author would like to thank Christos Papadimitriou for his guidance and for many inspiring discussions throughout the completion of this work, and Mihalis Yannakakis for his comments on an early draft of this manuscript. The author would also like to thank the anonymous referees for suggesting various improvements to this paper, in particular the addition of Corollary~\ref{cor:exp-hard}, the connection to hardness extractors and the GGM generator, and the simplification of Lemma~\ref{lem:sparse-enc}.
\bibliographystyle{siam}
\bibliography{apepp.bib}

\end{document}